\documentclass[11pt,reqno,a4paper]{amsart}
\usepackage{fourier}
\usepackage[T1]{fontenc}
\usepackage[utf8]{inputenc}
\usepackage[spanish,english,activeacute]{babel}
\usepackage{mathtools}
\usepackage{amsmath,amssymb,amsthm}
\usepackage{enumerate}
\usepackage[colorlinks=true]{hyperref}
\usepackage{url}
\usepackage{stmaryrd}
\usepackage{faktor}
\usepackage{hyphenat}
\usepackage{tikz}
\usepackage{pgfplots}
\usepackage{color}
\usepackage{caption}
\usepackage{subcaption}
\pgfplotsset{compat=1.17}
\usepackage{multicol}

\textwidth=15cm \textheight=22cm \topmargin=0.5cm
\oddsidemargin=0.5cm \evensidemargin=0.5cm
\usepackage[a4paper,top=3.3cm,bottom=3cm,left=3cm,right=3cm,bindingoffset=5mm]{geometry}

\usepackage{marginnote}
\geometry{marginparwidth=2.5cm}

\newtheorem{thm}{Theorem}[section]
\newtheorem{col}[thm]{Corollary}
\newtheorem{prop}[thm]{Proposition}
\newtheorem{lemma}[thm]{Lemma}

\theoremstyle{definition}
	\newtheorem{mydef}[thm]{Definition}
	\newtheorem{ex}[thm]{Example}
	\newtheorem{exs}[thm]{Examples}
	\newtheorem{remark}[thm]{Remark}
	\newtheorem{remarks}[thm]{Remarks}

\numberwithin{equation}{section}

\renewcommand{\thesubfigure}{\arabic{subfigure}}

\DeclareMathOperator{\dis}{d}
\DeclareMathOperator{\supp}{supp}
\DeclareMathOperator{\ev}{ev}
\DeclareMathOperator{\w}{w}
\DeclareMathOperator{\tr}{tr}
\DeclareMathOperator{\btr}{\textbf{tr}}

\DeclarePairedDelimiter\ceil{\lceil}{\rceil}
\DeclarePairedDelimiter\floor{\lfloor}{\rfloor}

\title[Optimal $(r,\delta)$-LRCs from monomial-Cartesian codes and their subfield-subcodes]{Optimal $(r,\delta)$-LRCs from monomial-Cartesian codes and their subfield-subcodes}

\author{C. Galindo}
\email[Carlos Galindo]{galindo@uji.es}
\address{Instituto Universitario de Matemáticas y Aplicaciones de Castellón and Departamento de
Matemáticas, Universitat Jaume I, Campus de Riu Sec., 12071 Castelló, Spain}

\author{F. Hernando}
\email[Fernando Hernando]{carrillf@uji.es}
\address{Instituto Universitario de Matemáticas y Aplicaciones de Castellón and Departamento de
Matemáticas, Universitat Jaume I, Campus de Riu Sec., 12071 Castelló, Spain}

\author{H. Martín-Cruz}
\email[Helena Martín-Cruz]{martinh@uji.es}
\address{Instituto Universitario de Matemáticas y Aplicaciones de Castellón, Universitat Jaume I, Campus de Riu Sec., 12071 Castelló, Spain}

\keywords{Locally recoverable codes, monomial-Cartesian codes, subfield-subcodes.}
\thanks{This work has been partially supported by
MCIN/AEI/10.13039/501100011033 and by the “European Union NextGenerationEU/PRTR”, grant TED2021-130358B-I00, as well as by Universitat Jaume I, grants UJI-B2021-02, GACUJIMA/2023/06 and PREDOC/2020/39.}

\begin{document}

\begin{abstract}
We study monomial-Cartesian codes (MCCs) which can be regarded as $(r,\delta)$-locally recoverable codes (LRCs). These codes come with a natural bound for their minimum distance and we determine those giving rise to $(r,\delta)$-optimal LRCs for that distance, which are in fact $(r,\delta)$-optimal. A large subfamily of MCCs admits subfield-subcodes with the same parameters of certain optimal MCCs but over smaller supporting fields. This fact allows us to determine infinitely many sets of new $(r,\delta)$-optimal LRCs and their parameters.
\end{abstract}

\maketitle

\section*{Introduction}
\sloppy Locally recoverable (or repairable) codes (LRCs) were introduced in \cite{GHSY2012}. The aim was to consider error-correcting codes to treat the repair problem for large scale distributed and cloud storage systems. Thus an error-correcting code $C$ is named an LRC with locality $r$ whenever any symbol in $C$ can be recovered by accessing at most $r$ other symbols of $C$ (see, for instance, the introduction of \cite{GHM2020} for details). The literature contains a good number of papers on this class of codes, some of them are \cite{ZXL2017,LMC2018,Mi2018,LXY2019,J2019,LMT2020,SVV2021}. A variation of Reed-Solomon codes was introduced in \cite{TB2014} for recovering purposes. In \cite{BTV2017} these codes were extended to LRCs over algebraic curves. Among the different classes of codes considered as good candidates for local recovering, cyclic codes and subfield-subcodes of cyclic codes play an important role, this is because the cyclic shifts of a recovery set again provide recovery sets \cite{CXHF2018,GC2014,HYUS2015,TBGC2015}. In \cite{Mu2018} the author introduces a model of locally recoverable code that also includes local error detection, increasing the security of the recovery system.

There is a Singleton-like bound for LRCs with locality $r$ \cite{GHSY2012}. Codes attaining this bound are named optimal $r$-LRCs and interesting constructions of this class of codes can be found in \cite{TB2014} and \cite{TPD2016} (see also \cite{BHHMV2017,BTV2017,MT2018,MTT2020,SVV2021}). When considering codes over the finite field $\mathbb{F}_q$, $q$ being a prime power, optimal $r$-LRCs can be obtained for all lengths $n\leq q$ \cite{ZY2016} and a challenging question is to study how long these codes can be \cite{GXY2019}.

The fact that simultaneous multiple device failures may happen leads us to the concept of LRCs with locality $(r,\delta)$ (or $(r,\delta)$-LRCs). This class of codes were introduced in \cite{PKLK2012}, see Definition \ref{def rd} in this paper, and they also admit a Singleton-like bound \cite{PKLK2012}, which we reproduce in Proposition \ref{pformula}. Codes attaining this bound are named optimal $(r,\delta)$-LRCs or, in this paper, simply optimal codes. Optimal codes have been studied in \cite{CXHF2018,LMX2019,LXY2019,SZW2019,J2019,CFXF2019,FF2020,QZF2021}, mainly coming from cyclic and constacyclic codes. A somewhat different way for obtaining LRCs with locality $(r,\delta)$ was started in \cite{GHM2020}, where the supporting codes were the so-called $J$-affine variety codes. These codes were introduced in \cite{GHR2015} and they have a good behaviour for constructing quantum error-correcting codes \cite{GH2015,GHR2015,GGHR2017}.

Monomial-Cartesian codes (MCCs) are a class of error-correcting codes, introduced in \cite{LMS2020}, that contains the set of $J$-affine variety codes. They are evaluation codes obtained as the image of maps
$$
\ev_P\colon V_\Delta \subset \faktor{\mathbb{F}_q[X_1,\dots,X_m]}{I} \to \mathbb{F}_q^{n} \textrm{, } \quad \ev_P(f)=\left(f(\boldsymbol{\alpha}_1),\dots,f(\boldsymbol{\alpha}_n)\right),
$$
where $m$ is a positive integer larger than $1$, $P=P_1\times\dots\times P_m=\{\boldsymbol{\alpha}_1,\dots,\boldsymbol{\alpha}_n\}$ a suitable subset of $\mathbb{F}_q^m$, $I$ the vanishing ideal at $P$ of $\mathbb{F}_q[X_1,\dots,X_m]$ and $V_\Delta$ an $\mathbb{F}_q$-linear space generated by classes of monomials (Definition \ref{MCC}). This evaluation map is also used in \cite{CN2021} to define codes with variable locality and availability. Evaluation maps of our codes are defined on subsets of coordinate rings of certain affine varieties, but these codes can also be introduced with algebraic tools, as in \cite{LMS2020}.

The goal of this paper is to obtain many new optimal LRCs coming from MCCs. Previously, an algebraic description of MCCs was given in \cite{LSV2021} and these codes were considered for applications different of those in this paper, such as quantum codes, LRCs with availability and polar codes \cite{LMS2020,CLMS2021}.

MCCs come with a natural bound on their minimum distance which allows us to obtain many optimal $(r,\delta)$-LRCs. In fact, we are able to get all MCCs providing optimal codes whose minimum distance coincides with the mentioned bound (see Remark \ref{nomore}).

MCCs are related with and include the family of codes introduced in \cite{ACNV2021} whose evaluation map is the same as MCCs but their evaluation sets $V_\Delta$ are only a subset of ours. This makes that the sets $\Delta$ in \cite{ACNV2021} have specific shapes while ours can have arbitrary shapes and therefore we obtain many more optimal $(r,\delta)$-LRCs (see Remark \ref{comp} for details).

We are interested in optimal $(r,\delta)$-LRCs and the recent literature presents a number of results giving parameters of codes of this type \cite{CXHF2018,SZW2019,CFXF2019,XC2019,CH2019,FF2020,Z2020,ZL2020,CWLX2021,LEL2021,KWG2021}. The length of most of these codes is a multiple of $r+\delta-1\leq q$ and, in this case, and for unbounded length and small size fields, their distances have restrictions being at most $3\delta$. Larger distances can be obtained when $q^2+q$ is a bound for the length. One must use different constructions to get these optimal codes, and a large size of the supporting field seems to make easier to find optimal codes \cite{SDYL2014}.

MCCs are generated by evaluating monomials in several variables and the set of exponents of their generators determines the dimension and a bound $d_0$ for the minimum distance (see Proposition \ref{lendim} and Corollary \ref{pdist}). Our recovery procedure based on interpolation also makes easy to obtain the values $r$ and $\delta$ of some MCCs regarded as LRCs (Proposition \ref{method}). Supported on these facts, we provide a large family of optimal MCCs. Subsection \ref{subm2} is devoted to bivariate codes and Subsection \ref{subm3} to multivariate codes. In fact, codes given in Propositions \ref{rect}, \ref{rectelim} and \ref{rectwithp}, \ref{hyperrect} and \ref{hyperrectelim} give the $d_0$-optimal LRCs one can get with this type of codes. Notice that $d_0$-optimal codes (Definition \ref{d0op}) are optimal codes by Remark \ref{facts} (2).

The above five propositions determine all the parameters of the $d_0$-optimal LRCs given by MCCs, see Remarks \ref{nomore} and \ref{gnomore}. These parameters are grouped in Corollary \ref{colpar} for the bivariate case and in Corollary \ref{gcolpar} for the multivariate case. Thus, one gets a large family of optimal LRCs that can be constructed by a unique and simple procedure.

This family provides, on the one hand, the parameters of those LRCs over $\mathbb{F}_q$ given in \cite{CH2019} whose lengths are of the form $N(r+\delta-1)$ where $N$ can be written as a product of integers less than or equal to $q$ and, on the other hand, the parameters of those LRCs in \cite{LEL2021} with length less than or equal to $q^2+q$.

The above codes do not give new parameters but subfield-subcodes of many subfamilies of them \textit{do give}. Thus, providing \textit{new} families of optimal LRCs is our main goal. Indeed, in Section \ref{subfield} we prove that, considering suitable subfield-subcodes over subfields $\mathbb{F}_{q'}$ of $\mathbb{F}_q$, we get LRCs over $\mathbb{F}_{q'}$ with the same parameters of certain MCCs over $\mathbb{F}_q$. Propositions \ref{bivq+1} and \ref{bivq+2} for the bivariate case, and Propositions \ref{multq+1} and \ref{multq+2} for the multivariate case explain how to construct new optimal $(r,\delta)$-LRCs.

The main results of the paper are Theorems \ref{thmparbivq+1}, \ref{thmparbivq+2}, \ref{thmparmultq+1} and \ref{thmparmultq+2}. Theorem \ref{thmparbivq+1} (respectively, \ref{thmparmultq+1}) gives parameters of new optimal LRCs over any field coming from the bivariate (respectively, multivariate) case. Theorems \ref{thmparbivq+2} and \ref{thmparmultq+2} do their own but only for characteristic two fields. Remarks \ref{remsub} and \ref{remsub2} justify the novelty of our codes. Finally, in Examples \ref{examplessubfield} and Tables \ref{table:parbiv} and \ref{table:parmult}, one can find some numerical examples of new optimal LRCs over small fields.

Section \ref{defs} of the paper is a brief introduction to locally recoverable codes (LRCs) and monomial-Cartesian codes (MCCs) are introduced in Section \ref{framework} as well as how they can be considered as LRCs, being Proposition \ref{method} the main result in this section. Section \ref{work} is devoted to determine the set of optimal MCCs we can obtain. We divide our study in two cases: bivariate and multivariate performed in Subsections \ref{subm2} and \ref{subm3}. Finally our main results concerning new LRCs obtained from subfield-subocdes of some MCCs are given in Section \ref{subfield}. Subsection \ref{factssubfield} recalls the results on subfield-subcodes we will use, while the new parameters are given in Subsection \ref{opbivsubfield} where the bivariate case is treated and in Subsection \ref{opmultsubfield} devoted to the multivariate case.

\section{Locally recoverable codes}\label{defs}

In this section we give a brief introduction to locally recoverable codes (LRCs). An LRC is an error-correcting code such that any erasure in a coordinate of a codeword can be recovered from a set of other few coordinates. Let $q$ be a prime power and $\mathbb{F}_q$ the finite field with $q$ elements. Let $C$ be a linear code over $\mathbb{F}_q$ with parameters $[n,k,d]_q$. A coordinate $i \in \{1,\dots,n\}$ is \textit{locally recoverable} if there is a \textit{recovery set} $R\subseteq \{1,\dots,n\}$ with cardinality $r>0$ and $i\notin R$ such that for any codeword $\mathbf{c}=(c_1,\dots,c_n)\in C$, an erasure in the coordinate $c_i$ of $\mathbf{c}$ can be recovered from the coordinates of $\mathbf{c}$ with indices in $R$. Set $\pi_R \colon \mathbb{F}_q^n \to \mathbb{F}_q^r$ the projection map on the coordinates of $R$ and write $C[R]:=\{\pi_R(\mathbf{c}) \mid \mathbf{c} \in C\}$. Then:

\begin{prop}\label{rset}
A set $R\subseteq \{1,\dots,n\}$ is a recovery set for a coordinate $i\notin R$ if and only if $\dis(C[\overline R])\geq 2$, where $\overline R=R\cup\{i\}$ and $\dis$ stands for the minimum distance.
\end{prop}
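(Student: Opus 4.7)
My approach is to reduce the combinatorial recovery condition to a minimum-weight statement on $C[\overline R]$ through a short chain of equivalences.

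First, using the $\mathbb{F}_q$-linearity of $C$, I would rephrase ``$R$ is a recovery set for $i$'' as the kernel-type implication
\[
\mathbf{c}\in C \text{ and } \pi_R(\mathbf{c}) = 0 \ \Longrightarrow\ c_i = 0.
\]
Indeed, the existence of a function $f\colon \mathbb{F}_q^r \to \mathbb{F}_q$ with $c_i = f(\pi_R(\mathbf{c}))$ for every $\mathbf{c}\in C$ is equivalent to the statement that any two codewords agreeing on $R$ also agree on the coordinate $i$; taking the difference $\mathbf{c}_1-\mathbf{c}_2$ and invoking linearity delivers the displayed implication, while conversely the implication allows one to define $f$ unambiguously on $\pi_R(C)$ via $f(\pi_R(\mathbf{c})):=c_i$ and to extend $f$ arbitrarily to the rest of $\mathbb{F}_q^r$.

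Next, I would translate this kernel condition into the language of $C[\overline R]$. A codeword $\mathbf{c}\in C$ satisfying $\pi_R(\mathbf{c})=0$ and $c_i\neq 0$ is precisely the same data as a weight-one element of $C[\overline R]$ whose unique nonzero entry lies at position $i$, and conversely any such weight-one element lifts to a codeword witnessing a failure of the implication. Thus the implication is equivalent to $C[\overline R]$ containing no weight-one codewords, i.e., to $\dis(C[\overline R])\geq 2$.

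I do not expect a substantial obstacle; the argument is essentially a definitional unwinding once the linearity reformulation has been made explicit. The only point requiring care is the bookkeeping between the coordinates of $\overline R=R\cup\{i\}$ and those of the ambient code $C$, so that the projections $\pi_R$ and $\pi_{\overline R}$ are compatible with the identification of the position $i$ inside $\overline R$, and the correspondence between weight-one codewords of $C[\overline R]$ at $i$ and elements of $C$ violating the kernel implication is genuinely a bijection.
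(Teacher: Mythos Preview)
The paper states this proposition without proof, so there is nothing to compare your approach against; I will simply assess the argument.

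Your ``if'' direction is fine: if $\dis(C[\overline R])\geq 2$ then in particular $C[\overline R]$ has no weight-one word supported at the position $i$, and by your kernel reformulation this forces $c_i$ to be determined by $\pi_R(\mathbf{c})$. The ``only if'' direction, however, overreaches at the last step. You correctly show that the implication $\pi_R(\mathbf{c})=0\Rightarrow c_i=0$ is equivalent to $C[\overline R]$ having no weight-one codeword \emph{supported at $i$}, but you then conflate this with $C[\overline R]$ having no weight-one codeword at all. A weight-one word of $C[\overline R]$ supported at some $j\in R$ does not obstruct recovering $c_i$ from $R$. Concretely, take $C\subseteq\mathbb{F}_2^3$ spanned by $(1,0,0)$ and $(0,1,1)$, with $i=3$ and $R=\{1,2\}$: here $c_3=c_2$ for every codeword, so $R$ recovers $i$, yet $C[\overline R]=C$ has minimum distance $1$. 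Thus the ``only if'' direction, read literally, is false; what your argument actually establishes is the (correct and standard) refinement that $R$ recovers $i$ if and only if no weight-one word of $C[\overline R]$ is supported at $i$. The paper only invokes the ``if'' direction in any substantive way, so the slip is harmless for its purposes, but you should not claim an equivalence you have not proved.
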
 

The \textit{locality} of a coordinate is the smallest cardinality of a recovery set for that coordinate. An \textit{LRC with locality} $r$ is an LRC such that every coordinate is locally recoverable and $r$ is the largest locality of its coordinates. The parameters and locality of an LRC satisfy the following Singleton-like inequality.
$$k+d+\ceil*{\frac{k}{r}}\leq n+2.$$
When the equality holds, the code is called \textit{optimal $r$-LRC}.

By Proposition \ref{rset}, if $R$ is a recovery set for $i$, then $\dis(C[\overline R])\geq 2$ and thus only one erasure can be corrected (also only up to to one error can be detected). But erasures can also occur in $\pi_R(\mathbf{x})$ and then we could not recover $x_i$. To correct more than one erasure we introduce the concept of locality $(r,\delta)$, also named $(r,\delta)$-locality.

\begin{mydef}\label{def rd}
A code $C$ is \textit{locally recoverable with locality} $(r,\delta)$ if, for any coordinate $i$, there exists a set of coordinates $\overline R=\overline R(i) \subseteq \{1,\dots,n\}$ such that:
\begin{enumerate}
    \item $i\in \overline R$ and $\#\overline R \leq r+\delta-1$; and
    \item $\dis(C[\overline R])\geq \delta$.
\end{enumerate}
Such a set $\overline R$ is called an $(r,\delta)$\textit{-recovery set for $i$} and $C$ an $(r,\delta)$\textit{-LRC}.
\end{mydef}

In this paper, we will always refer to this type of locality and sometimes, abusing the notation, we will talk about locality $r$ understanding locality $(r,\delta)$ for some $\delta$ inferred from the context. The second condition in Definition \ref{def rd} allows us to correct an erasure at coordinate $i$ plus any other $\delta-2$ erasures in $\overline R \backslash \{i\}$ by using the remaining $r$ coordinates (also it allows us to detect an error at coordinate $i$ plus any other $\delta-2$ errors in $\overline R \backslash \{i\}$).
Notice that, when $\delta\geq 2$ and $C$ is an LRC with locality $(r,\delta)$, the (original definition of) locality of $C$ is $\leq r$. In fact, any subset $R\subseteq \overline R$ such that $\#R=r$ and $i\notin R$ fulfills $\dis(C([R]\cup \{i\}))\geq 2$, so by Proposition \ref{rset} $R$ is a recovery set for the coordinate $i$. There is also a Singleton-like inequality for $(r,\delta)$-LRCs:

\begin{prop}\label{pformula}\cite{PKLK2012}
The parameters $[n,k,d]_q$ of an $(r,\delta)$-LRC, $C$, satisfy
\begin{equation}\label{formula}
    k+d+\left(\ceil*{\frac{k}{r}}-1\right)(\delta-1)\leq n+1.
\end{equation}
\end{prop}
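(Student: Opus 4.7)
The plan is to reformulate the desired inequality in terms of a ``large'' erasure pattern. For any linear code $C$ with parameters $[n,k,d]_q$ one has the identity
\[
\dis(C)=n-\max\{|T|\colon T\subseteq\{1,\ldots,n\},\ \dim C[T]\leq k-1\},
\]
so it suffices to exhibit a subset $T$ with $\dim C[T]\leq k-1$ and $|T|\geq (k-1)+\left(\ceil*{k/r}-1\right)(\delta-1)$; the bound $d\leq n-|T|$ then yields \eqref{formula} after rearrangement.

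To produce such a $T$, I would use a greedy procedure based on $(r,\delta)$-recovery sets. Put $T_0=\emptyset$ and, while $\dim C[T_{j-1}]<k$, pick a coordinate $i_j\notin T_{j-1}$ with $\dim C[T_{j-1}\cup\{i_j\}]>\dim C[T_{j-1}]$ (which exists because $\dim C[T_{j-1}]<k$); let $\overline R_j$ be an $(r,\delta)$-recovery set for $i_j$, granted by Definition \ref{def rd}, and set $T_j=T_{j-1}\cup\overline R_j$. Let $t$ be the first step at which $\dim C[T_t]=k$, so that $\dim C[T_{t-1}]\leq k-1$.

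Two estimates per step drive the argument. First, the rank increment $\rho_j:=\dim C[T_j]-\dim C[T_{j-1}]$ satisfies $1\leq\rho_j\leq r$: the upper bound follows from the classical Singleton inequality applied to $C[\overline R_j]$, which has length at most $r+\delta-1$ and minimum distance at least $\delta$, so $\dim C[\overline R_j]\leq r$. Summing over $j$ yields $k=\dim C[T_t]\leq tr$, hence $t\geq\ceil*{k/r}$. Second, the size increment obeys $|T_j|-|T_{j-1}|\geq\rho_j+\delta-1$: the shortening of $C[\overline R_j]$ at the positions of $\overline R_j\cap T_{j-1}$ is a code of length $|\overline R_j\setminus T_{j-1}|$ whose minimum distance is still at least $\delta$ (shortening cannot decrease distance) and whose dimension is at least $\rho_j$, because the linear map sending every codeword of $C[T_j]$ that vanishes on $T_{j-1}$ to its restriction on $\overline R_j\setminus T_{j-1}$ is injective and lands inside that shortening; the Singleton bound applied to this shortening then gives the stated inequality.

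Summing the size estimate over $j=1,\ldots,t-1$ produces $|T_{t-1}|\geq \dim C[T_{t-1}]+(t-1)(\delta-1)$. If $\dim C[T_{t-1}]$ is strictly smaller than $k-1$, I would enlarge $T_{t-1}$ one coordinate at a time, at each step choosing a new coordinate that raises $\dim C[\cdot]$ by exactly $1$, until the rank reaches $k-1$; this adds $k-1-\dim C[T_{t-1}]$ positions and produces a set $T'$ with $\dim C[T']=k-1$ and $|T'|\geq (k-1)+(t-1)(\delta-1)\geq (k-1)+\left(\ceil*{k/r}-1\right)(\delta-1)$, as required. The main obstacle is the size-increment step: puncturing the recovery set (rather than shortening it) would be simpler but does not preserve the minimum-distance hypothesis, so one is forced to pass to the shortening and carefully verify that both its dimension is at least $\rho_j$ and its minimum distance remains $\geq\delta$. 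Everything else is bookkeeping with Singleton and dimension counts.
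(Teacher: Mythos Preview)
The paper does not supply its own proof of this proposition; it is quoted from \cite{PKLK2012} and used as a black box. Your argument is correct and is essentially the original greedy proof of Prakash--Kamath--Lalitha--Kumar: absorb recovery sets one at a time, bound the rank gain per step by $r$ via Singleton on $C[\overline R_j]$, bound the size gain per step below by $\rho_j+\delta-1$ via Singleton on the shortened local code, and finish with the identity $d=n-\max\{|T|:\dim C[T]\le k-1\}$. The delicate point you flag---that one must shorten rather than puncture the local code to keep the distance $\ge\delta$ while still controlling the dimension from below---is exactly the crux of the original argument, and you handle it correctly by exhibiting the injection of $\ker\big(C[T_j]\to C[T_{j-1}]\big)$ into that shortening.
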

In this paper, $C$ is called an \textit{optimal $(r,\delta)$-LRC} (or simply, an \textit{optimal} LRC) whenever equality holds in (\ref{formula}).

In the next section we define the linear codes we will use for local recovery.

\section{Monomial-Cartesian codes}\label{framework}

Let $m>1$ be a positive integer and consider a family $\left\{P_j\right\}_{j=1}^m$ of subsets of $\mathbb{F}_q$ with cardinality larger than one. Set
$$P=P_1 \times \dots \times P_m=\{\boldsymbol{\alpha}_1,\dots,\boldsymbol{\alpha}_n\} \subseteq \mathbb{F}_q^m.$$
We usually write $\boldsymbol{\alpha}_i=({\alpha_i}_1,\dots,{\alpha_i}_m)$. Consider the quotient ring
$$\mathcal{R}=\faktor{\mathbb{F}_q[X_1,\dots,X_m]}{I},$$
where $I$ is the ideal of the polynomial ring in $m$ variables $\mathbb{F}_q[X_1,\dots,X_m]$ vanishing at $P$. Then, $I=\langle f_1(X_1),\dots,f_m(X_m)\rangle$, where $f_j(X_j)=\prod_{\beta\in P_j}(X_j-\beta)$ and $\deg(f_j)=\# P_j=:n_j\geq 2$ \cite{LRV2014}. Let
$$E=\{0,1,\dots,n_1-1\}\times\dots\times\{0,1,\dots,n_m-1\}.$$
Given $f\in \mathcal{R}$, $f$ denotes both the equivalence class in $\mathcal{R}$ and the unique polynomial in $\mathbb{F}_q[X_1,\dots,X_m]$ with degree in $X_j$ less than $n_j$, $1\leq j \leq m$, representing $f$. Thus
$$f(X_1,\dots,X_m)=\sum_{(e_1,\dots,e_m)\in E} f_{e_1,\dots,e_m}X_1^{e_1}\cdots X_m^{e_m},$$
with $f_{e_1,\dots,e_m}\in\mathbb{F}_q$. Set $\supp(f)=\{(e_1,\dots,e_m)\in E \mid f_{e_1,\dots,e_m}\neq 0\}$. For each subset $\emptyset\neq\Delta \subseteq E$, define $V_\Delta:=\{f\in \mathcal{R} \mid \supp(f)\subseteq \Delta\}$ and for each element $\mathbf{e}=(e_1,\dots,e_m)\in E$, denote $X^{\mathbf{e}}=X_1^{e_1}\cdots X_m^{e_m}$. Then, $V_\Delta$ is the $\mathbb{F}_q$-vector space $\langle X^{\mathbf{e}} \mid \mathbf{e}\in \Delta\rangle$. The linear evaluation map
$$
\ev_P\colon \mathcal{R} \to \mathbb{F}_q^{n} \textrm{, } \quad \ev_P(f)=\left(f(\boldsymbol{\alpha}_1),\dots,f(\boldsymbol{\alpha}_n)\right),
$$
gives rise to the following class of evaluation codes.

\begin{mydef}\label{MCC}
The \textit{monomial-Cartesian code} (MCC) $C_\Delta^P$ is the following vector subspace of $\mathbb{F}_q^n$ over the finite field $\mathbb{F}_q$:
$$C_\Delta^P:=\ev_P(V_\Delta)=\langle\ev_P(X^{\mathbf{e}}) \mid \mathbf{e}\in \Delta\rangle \subseteq \mathbb{F}_q^n.$$
We say that the MCC $C_\Delta^P$ is bivariate (respectively, multivariate) when $m=2$ (respectively, $m>2$).
\end{mydef}

MCCs were introduced in \cite{LMS2020} in a different way (using only algebraic tools), and are a family of codes that extend $J$-affine variety codes introduced in \cite{GHR2015}. Denoting by $U_t\subseteq \mathbb{F}_q$ the set of $t$-roots of unity for some $t\mid q-1$, a $J$-affine variety code is an MCC where each $P_j$ is of the form $U_t$ or $U_t\cup\{0\}$.

We also introduce the following definition which will be useful in the next sections.

\begin{mydef}
Two subsets $\Delta_1$ and $\Delta_2$ of $E$ are \textit{pseudoisometric} if there exists $\mathbf{v}=(v_1,\dots,v_m)\in \mathbb{Z}^m$ such that
$$\Delta_2=\mathbf{v}+\Delta_1:=\{(e_1+v_1,\dots,e_m+v_m) \mid (e_1,\dots,e_m)\in\Delta_1\}.$$
In that case, we say that the codes $C^P_{\Delta_1}$ and $C^P_{\Delta_2}$ are \textit{pseudoisometric}.
\end{mydef}

\begin{remark}\label{rempseudo}
In this paper, we say that two codes are isometric if there exists a bijective mapping between them that preserves Hamming weights. The * product of two vectors $(v_1,\dots,v_n)$ and $(w_1,\dots,w_n)$ in $\mathbb{F}_q^n$ is defined as:
$$(v_1,\dots,v_n)*(w_1,\dots,w_n)=(v_1\cdot w_1,\dots,v_n\cdot w_n).$$
Then $\ev_P(fg)=\ev_P(f)*\ev_P(g)$ for all $f,g\in \mathcal{R}$.

Assume that $\Delta_1$, $\Delta_2\subseteq E$ are pseudoisometric sets such that $\Delta_2=\mathbf{v}+\Delta_1$ and $v_j\neq 0$ for $1\leq j \leq m$. For simplicity, suppose $v_j<0$, $1\leq j \leq m_1$, and $v_j>0$, $m_1+1\leq j \leq m$. Consider
$$\Delta_2^{'}=(-v_1,-v_2,\dots,-v_{m_1},0,\dots,0)+\Delta_2$$
and
$$\Delta_1^{'}=(0,\dots,0,v_{m_1+1},\dots,v_m)+\Delta_1,$$
and then $\Delta_2^{'}=\Delta_1^{'}$. Thus
$$V_{\Delta_2^{'}}=\left\{X_1^{-v_1}\cdots X_{m_1}^{-v_{m_1}}g \mid g\in V_{\Delta_2}\right\},$$
and the codewords in $C^P_{\Delta_2^{'}}$ are of the form
$$\ev_P(X_1^{-v_1}\cdots X_{m_1}^{-v_{m_1}}g)=\ev_P(X_1^{-v_1}\cdots X_{m_1}^{-v_{m_1}})*\ev_P(g),$$
where $g\in V_{\Delta_2}$. When $0\notin P_j$ for all $1\leq j \leq m$, we have just proved that $C^P_{\Delta_2^{'}}$ and $C^P_{\Delta_2}$ are isometric codes. The same reasoning proves that $C^P_{\Delta_1^{'}}$ and $C^P_{\Delta_1}$ are isometric. Thus $C^P_{\Delta_1}$ and $C^P_{\Delta_2}$ are isometric and this happens even when the $v_j$ are always negative or positive or when some coordinates $v_j$ are $0$.

When $0\in P_j$ for some index $1\leq j\leq m$, $C^P_{\Delta_1}$ and $C^P_{\Delta_2}$ need not be isometric which explains why we speak of pseudoisometric codes.
\end{remark}

Length, dimension and a bound for the minimum distance of an MCC, $C^P_\Delta$, are provided in the forthcoming Proposition \ref{lendim} and Corollary \ref{pdist}. Let us state Proposition \ref{lendim} whose proof is straightforward.

\begin{prop}\label{lendim}
Keep the above notation. The length $n$ and dimension $k$ of an MCC, $C^P_\Delta$, are $n=\prod_{j=1}^m n_j$ and $k=\#\Delta$.
\end{prop}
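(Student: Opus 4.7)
The plan is to handle the length and dimension separately, noting that the length is immediate from the definition of the evaluation map and the dimension follows once one observes that $\ev_P$ is actually a bijection between $\mathcal{R}$ and $\mathbb{F}_q^n$.

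For the length, the target of $\ev_P$ is $\mathbb{F}_q^n$ with $n=\# P$, and by construction $P=P_1\times\cdots\times P_m$ has cardinality $\prod_{j=1}^m\# P_j=\prod_{j=1}^m n_j$. So the length of $C_\Delta^P$ is at most this number, and equals it once one shows $C_\Delta^P$ lives in $\mathbb{F}_q^n$ with no redundant coordinates (which is just the fact that the $\boldsymbol{\alpha}_i$ are pairwise distinct).

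For the dimension, I would first argue that the set of monomial classes $\{X^{\mathbf{e}}\mid \mathbf{e}\in E\}$ is an $\mathbb{F}_q$-basis of $\mathcal{R}$: it spans because any monomial can be reduced modulo the generators $f_j(X_j)=\prod_{\beta\in P_j}(X_j-\beta)$ of $I$ to a polynomial of degree less than $n_j$ in each variable $X_j$, and it is free because $\dim_{\mathbb{F}_q}\mathcal{R}=\prod_{j=1}^m n_j=\# E$ (for instance, by the Chinese Remainder Theorem applied to $I=\langle f_1(X_1),\dots,f_m(X_m)\rangle$ with each $f_j$ squarefree, which yields $\mathcal{R}\cong\mathbb{F}_q^n$ as $\mathbb{F}_q$-algebras). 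Consequently $V_\Delta=\langle X^{\mathbf{e}}\mid \mathbf{e}\in\Delta\rangle$ has dimension exactly $\#\Delta$.

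It then remains to show that $\ev_P$ is injective on $V_\Delta$. The cleanest way is to note that $\ev_P\colon\mathcal{R}\to\mathbb{F}_q^n$ is the isomorphism arising from the CRT decomposition above (multivariate Lagrange interpolation guarantees that every vector in $\mathbb{F}_q^n$ comes from some $f\in\mathcal{R}$, and the equality of dimensions on both sides forces injectivity). Restricting this isomorphism to the subspace $V_\Delta$ therefore preserves dimension, and we obtain $k=\dim_{\mathbb{F}_q}C_\Delta^P=\dim_{\mathbb{F}_q}V_\Delta=\#\Delta$, as required. No step here looks like an obstacle; the only point that needs a line of justification is that $\mathcal{R}$ has the expected basis of reduced monomials, which is standard for ideals of the form $\langle f_1(X_1),\dots,f_m(X_m)\rangle$ with squarefree $f_j$.
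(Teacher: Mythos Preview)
Your argument is correct. The paper itself omits the proof entirely, stating only that it is ``straightforward,'' so your write-up is a valid fleshing-out of the standard reasoning the authors leave implicit: the length is $\#P=\prod_j n_j$, and injectivity of $\ev_P$ on $\mathcal{R}$ (equivalently, that the reduced monomials $\{X^{\mathbf e}:\mathbf e\in E\}$ form a basis) gives $k=\dim V_\Delta=\#\Delta$.
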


\begin{mydef}\label{dist}
The \textit{distance} of an exponent $\mathbf{e}\in E$ is defined to be $\dis(\mathbf{e}):=\prod_{j=1}^m (n_j-e_j)$.
\end{mydef}

The codes $C^P_\Delta$ admit the following bound on the minimum distance, known as footprint bound \cite{G2008,GGHR2017}.

\begin{prop}\label{footprint}
Let $C_\Delta^P$ be an MCC and let $\mathbf{c}=\ev_P(f)\in C_\Delta^P$ be a codeword, $f\in \mathcal{R}$. Denote by $\w(\mathbf{c})$ the Hamming weight of $\mathbf{c}$, fix a monomial ordering on $(\mathbb{Z}_{\geq 0})^m$ and let $X^\mathbf{e}$ be the leading monomial of $f$. Then, $\w(\mathbf{c})\geq \dis(\mathbf{e})$.
\end{prop}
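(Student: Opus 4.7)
The plan is to apply the classical footprint bound from Gröbner basis theory. Let me first translate the weight statement into a statement about zeros: since $\w(\mathbf{c}) = n - \#\{i : f(\boldsymbol{\alpha}_i) = 0\}$, it suffices to show that the number $N$ of points of $P$ at which $f$ vanishes satisfies $N \leq n - \prod_{j=1}^m (n_j - e_j)$.

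Next I would identify these zeros with the variety of an ideal. Because $f_j(X_j) = \prod_{\beta \in P_j}(X_j - \beta)$ splits over $\mathbb{F}_q$, the vanishing set of $I = \langle f_1,\ldots,f_m\rangle$ in $\overline{\mathbb{F}_q}^m$ is exactly $P$. Consequently the zero set of $J := \langle f_1,\ldots,f_m,f\rangle$ in $\overline{\mathbb{F}_q}^m$ equals $\{\boldsymbol{\alpha}\in P \mid f(\boldsymbol{\alpha})=0\}$, so $N = \#V(J)$. Since $J$ is zero-dimensional, a standard argument (counting points by evaluation functionals, or using $V(J)$-primary decomposition) gives $N \leq \dim_{\mathbb{F}_q} \mathbb{F}_q[X_1,\ldots,X_m]/J$, and the latter dimension equals the cardinality of the \emph{footprint} of $J$, i.e.\ the set of monomials not lying in the initial ideal $\mathrm{in}(J)$ for the fixed monomial ordering.

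The third step is to bound the footprint of $J$ using only the generating set $\{f_1,\ldots,f_m,f\}$, not a full Gröbner basis. Because the ideal of leading terms of any generating set is contained in $\mathrm{in}(J)$, the footprint of $J$ is a subset of the set $\Delta(\{f_1,\ldots,f_m,f\})$ of monomials divisible by none of the leading monomials $X_1^{n_1},\ldots,X_m^{n_m},X^{\mathbf{e}}$. The monomials not divisible by any $X_j^{n_j}$ are exactly those in $E$, which has $n = \prod_j n_j$ elements. Among these, the ones divisible by $X^{\mathbf{e}}$ are the monomials $X^{\mathbf{a}}$ with $e_j \leq a_j \leq n_j-1$ for every $j$, of which there are precisely $\prod_{j=1}^m (n_j - e_j) = \dis(\mathbf{e})$. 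Subtracting yields $\#\Delta(\{f_1,\ldots,f_m,f\}) = n - \dis(\mathbf{e})$, which combined with the chain of inequalities above gives $N \leq n - \dis(\mathbf{e})$ and hence $\w(\mathbf{c}) \geq \dis(\mathbf{e})$.

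There is no real obstacle here; the only point requiring care is the clean reduction from a possibly non-Gröbner generating set to a bound on $\dim \mathbb{F}_q[X]/J$, which I would handle by citing the standard footprint-bound lemma (as in \cite{G2008,GGHR2017}) rather than reproving it. One minor subtlety worth mentioning is that the representative of $f$ in $\mathcal{R}$ has degree less than $n_j$ in each variable, so the leading monomial $X^{\mathbf{e}}$ already satisfies $e_j < n_j$ for all $j$; this guarantees that the count $\prod_j (n_j - e_j)$ is the correct positive integer and that $X^{\mathbf{e}}$ is not itself redundant with any $X_j^{n_j}$.
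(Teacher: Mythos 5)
Your argument is correct. The paper does not prove this proposition itself; it invokes it as the classical footprint bound with a citation to \cite{G2008,GGHR2017}, and your derivation (reduce to counting zeros, identify the zero set with $V(I+\langle f\rangle)$, bound the number of points by the cardinality of the footprint of the generating set $\{f_1,\dots,f_m,f\}$, and count the monomials in $E$ not divisible by $X^{\mathbf{e}}$) is exactly the standard proof that those references supply. The only implicit hypothesis worth stating is $f\neq 0$ so that the leading monomial exists, which the statement already presupposes.
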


\begin{col}\label{pdist}
Let $C_\Delta^P$ be an MCC and let $d$ be its minimum distance. Define $d_0=d_0\left(C_\Delta^P\right):=\min\{\dis(\mathbf{e}) \mid \mathbf{e}\in \Delta\}$. Then, $d\geq d_0$.
\end{col}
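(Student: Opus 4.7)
The corollary is essentially a direct consequence of Proposition \ref{footprint}, so my plan is short.

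First, I would take an arbitrary nonzero codeword $\mathbf{c}=\ev_P(f)\in C_\Delta^P$, where $f\in V_\Delta$ is a nonzero polynomial (note that we can choose the unique representative with degree in $X_j$ less than $n_j$ for each $j$, so that $\supp(f)\subseteq E$). Since $f\in V_\Delta$, we have $\supp(f)\subseteq \Delta$ by the very definition of $V_\Delta$.

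Next, fix once and for all any monomial ordering on $(\mathbb{Z}_{\geq 0})^m$ (for instance, a lexicographic order), and let $X^{\mathbf{e}}$ denote the leading monomial of $f$. Since $f\neq 0$, the exponent $\mathbf{e}$ lies in $\supp(f)$, hence $\mathbf{e}\in\Delta$. Applying Proposition \ref{footprint} to $\mathbf{c}$ gives
\[
\w(\mathbf{c})\geq \dis(\mathbf{e}) \geq \min\{\dis(\mathbf{e}') \mid \mathbf{e}'\in \Delta\}=d_0.
\]

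Finally, since this inequality holds for every nonzero codeword of $C_\Delta^P$, taking the minimum over all such codewords yields $d=\min_{\mathbf{c}\neq 0}\w(\mathbf{c})\geq d_0$, which is the desired bound. There is no real obstacle here: the only subtlety is justifying that the leading monomial of $f$ has exponent in $\Delta$, which is immediate from $\supp(f)\subseteq \Delta$ and the choice of the canonical representative of $f$ in $\mathcal{R}$.
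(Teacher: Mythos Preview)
Your proof is correct and matches the paper's approach: the paper states this result as an immediate corollary of Proposition~\ref{footprint} without writing out the details, and your argument is precisely the natural one-line derivation the authors have in mind.
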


\begin{remark}\label{charact}
With the above notation, given $\emptyset\neq\Delta\subseteq E$, define $M_\Delta:=\{X^\mathbf{e} \mid \mathbf{e}\in \Delta\}$. According to \cite[Definition 3.1]{CLMS2021}, a code $C_\Delta^P$ is named \textit{decreasing monomial-Cartesian} whenever
\begin{equation}\label{decreasing}
X^\mathbf{e} \in M_\Delta \text{ implies } X^\mathbf{e'}\in M_\Delta \text{ for all } \mathbf{e'}\in E \text{ such that } X^\mathbf{e'} \text{ divides } X^\mathbf{e}.
\end{equation}
Moreover, by \cite[Theorem 3.9]{CLMS2021}, the values $d$ and $d_0$ of decreasing MCCs coincide.
\end{remark}

\begin{mydef}\label{decr}
A set $\Delta\subseteq E$ that satisfies (\ref{decreasing}) is called \textit{decreasing}.
\end{mydef}

Next proposition and its proof show how to regard MCCs as LRCs. To do it, we need to introduce some definitions. For each $1\leq j \leq m$, define the support of $V_\Delta$ at $X_j$ as
$$\supp_{X_j}(V_\Delta):=\left\{e_j\in\{0,1,\dots,n_j-1\} \mid \textrm{there exists a monomial } X_1^{e_1}\cdots X_j^{e_j}\cdots X_m^{e_m} \textrm{ in } V_\Delta\right\},$$
and set $\mathcal{K}_j:=\#\supp_{X_j}(V_\Delta)$ and $k_j:=\max\left(\supp_{X_j}(V_\Delta)\right)$. Now, and as the beginning of this section, set $P_j=\{\alpha_1,\dots,\alpha_{n_j}\}\subseteq \mathbb{F}_q$, $I_j$ the ideal of $\mathbb{F}_q[X_j]$ generated by $f_j=\prod_{\beta\in P_j}(X_j-\beta)$ and
$$\ev_{P_j}\colon \mathcal{R}_j:=\faktor{\mathbb{F}_q[X_j]}{I_j} \to \mathbb{F}_q^{n_j}$$
given by
$$\ev_{P_j}(f)=\left(f(\alpha_1),\dots,f(\alpha_{n_j})\right).$$
Finally define $V^j_\Delta:=\langle X_j^e \mid e\in \supp_{X_j}(V_\Delta) \rangle _{\mathbb{F}_q} \subseteq \mathcal{R}_j$.

\begin{prop}\label{method}
Let $C_\Delta^P$ be an MCC. Then, for each $1\leq l \leq m$ such that $\mathcal{K}_l<n_l$, $C_\Delta^P$ is an LRC with locality $(\geq \mathcal{K}_l,\leq n_l-\mathcal{K}_l+1)$. In addition, if $\ev_{P_l}\left(V^l_\Delta\right)$ is an MDS code, then the locality is $(\mathcal{K}_l, n_l-\mathcal{K}_l+1)$.
\end{prop}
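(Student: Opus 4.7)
The plan is to exploit the Cartesian structure of $P$ by taking the fiber along the $X_l$-axis as recovery set. Fix any coordinate $i$ corresponding to $\boldsymbol{\alpha}_i=(\alpha_{i_1},\ldots,\alpha_{i_m})\in P$, and let $\bar R$ collect the $n_l$ indices whose evaluation points agree with $\boldsymbol{\alpha}_i$ in every component except the $l$-th, where they range over all of $P_l$. Since $P=P_1\times\cdots\times P_m$, this fiber lies entirely in $P$, contains $i$, and has size exactly $n_l$.

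Next I would identify the projected code $C_\Delta^P[\bar R]$. For any $f=\sum_{\mathbf{e}\in\Delta}f_{\mathbf{e}}X^{\mathbf{e}}\in V_\Delta$, substituting the fixed coordinates yields the univariate polynomial
\[
g(X_l)=\sum_{\mathbf{e}\in\Delta}f_{\mathbf{e}}\Big(\prod_{j\neq l}\alpha_{i_j}^{e_j}\Big)X_l^{e_l},
\]
whose support lies in $\supp_{X_l}(V_\Delta)$, so $g\in V^l_\Delta$. Hence $\pi_{\bar R}(\ev_P(f))=\ev_{P_l}(g)$, which gives the containment $C_\Delta^P[\bar R]\subseteq\ev_{P_l}(V^l_\Delta)$. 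Since a subcode has minimum distance at least that of its ambient code, $\dis(C_\Delta^P[\bar R])\geq\dis(\ev_{P_l}(V^l_\Delta))$.

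Finally I would translate this to the $(r,\delta)$-LRC language. Set $\delta':=\dis(\ev_{P_l}(V^l_\Delta))$. Because $\ev_{P_l}\colon\mathcal{R}_l\to\mathbb{F}_q^{n_l}$ is an isomorphism (the polynomial $f_l$ has $n_l$ distinct roots), the code $\ev_{P_l}(V^l_\Delta)$ has length $n_l$ and dimension $\mathcal{K}_l$, so the Singleton bound forces $\delta'\leq n_l-\mathcal{K}_l+1$. Writing $r':=n_l-\delta'+1$, the set $\bar R$ satisfies $\#\bar R=n_l=r'+\delta'-1$ and $\dis(C_\Delta^P[\bar R])\geq\delta'$, so the conditions of Definition~\ref{def rd} hold with parameters $(r',\delta')$, where $r'\geq\mathcal{K}_l$ and $\delta'\leq n_l-\mathcal{K}_l+1$. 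When $\ev_{P_l}(V^l_\Delta)$ is MDS both inequalities become equalities, giving exactly $(r,\delta)=(\mathcal{K}_l,\,n_l-\mathcal{K}_l+1)$. I do not expect a serious obstacle: the argument is essentially bookkeeping around the one-variable reduction from $V_\Delta$ to $V^l_\Delta$ along a fiber.
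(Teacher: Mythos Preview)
Your proof is correct and follows essentially the same approach as the paper: both take the $X_l$-fiber through $\boldsymbol{\alpha}_i$ as the recovery set $\overline R$, reduce the restricted code $C_\Delta^P[\overline R]$ to the univariate code $\ev_{P_l}(V^l_\Delta)$, and then invoke the Singleton bound. The paper frames the reduction as an interpolation argument (determining the $\mathcal{K}_l$ unknown coefficients of $g(X_l)$), whereas you phrase it as the containment $C_\Delta^P[\overline R]\subseteq\ev_{P_l}(V^l_\Delta)$ and the corresponding distance inequality; these are two descriptions of the same mechanism, and your formulation is arguably a bit cleaner since it applies Singleton directly to $\ev_{P_l}(V^l_\Delta)$, whose dimension is exactly $\mathcal{K}_l$.
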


\begin{proof}
Let $\mathbf{c}=(c_1,\dots,c_n)=\ev_P(f)\in C_\Delta^P$ be a codeword whose $i$th coordinate $c_i$ we desire to recover. We know that $\supp(f)\subseteq \Delta$ and thus $\deg_{X_j}(f)\leq k_j$ for all $j=1,\dots,m$. Choose a variable $X_l$ (we will interpolate with respect to it), write $c_i=f(\boldsymbol{\alpha}_i)=f({\alpha_i}_1,\dots,{\alpha_i}_m)$ and consider the following subset of $P$:
\begin{multline*}
    \overline R_P=\left\{\boldsymbol{\alpha}_t\in P \mid {\alpha_t}_j={\alpha_i}_j \textrm{ for all } j\in\{1,\dots,m\}\backslash \{l\}\right\}\\
    =\left\{({\alpha_i}_1,\dots,{\alpha_i}_{l-1},x,{\alpha_i}_{l+1},\dots,{\alpha_i}_m) \mid x\in P_l\right\},
\end{multline*}
whose cardinality is $\#\overline R_P=n_l$. A polynomial in $V_\Delta$ can be expressed as
\begin{multline*}
f(X_1,\dots,X_m)=\sum_{(e_1,\dots,e_m)\in\Delta}f_{e_1,\dots,e_m}X_1^{e_1}\cdots X_m^{e_m}\\
=\sum_{h=0}^{k_l}f_h(X_1,\dots,X_{l-1},X_{l+1},\dots,X_m)X_l^h \in \mathbb{F}_q[X_1,\dots,X_{l-1},X_{l+1},\dots,X_m][X_l].
\end{multline*}
Replacing each $X_j$, $j\neq l$, by ${\alpha_i}_j$, we get a polynomial in $X_l$, $g(X_l)$, with constant coefficients, of degree at most $k_l$. So we can interpolate $g$ by using $k_l+1$ points in $\overline R_P$ (since $k_l\leq n_l-1$) to obtain those coefficients. However, we have $k_l+1-\mathcal{K}_l$ conditions
$$f_h({\alpha_i}_1,\dots,{\alpha_i}_{l-1},{\alpha_i}_{l+1},\dots,{\alpha_i}_m)=0$$
$h\notin \supp_{X_l}(V_\Delta)$, and then we only need $\mathcal{K}_l$ points in $\overline R_P$ to obtain the coefficients of $g$. Recall that $\boldsymbol{\alpha}_i\in\overline R_P$ implies that $\mathcal{K}_l<n_l$. Then, we can recover $c_i$ by evaluating $g$. Let
$$\overline R=\{t\in\{1,\dots,n\} \mid \boldsymbol{\alpha}_t \in \overline R_P\}.$$
The set $\overline R$ is an $(r,\delta)$-recovery set for $i$ with $r:=n_l-\dis(C[\overline R])+1$ and $\delta:=\dis(C[\overline R])$ since $i\in \overline R$, $\#\overline R=n_l=r+\delta-1$ and $\dis(C[\overline R])=\delta$. The Singleton bound implies that $\delta=\dis(C[\overline R])\leq n_l-\dim(C[\overline R])+1=n_l-\mathcal{K}_l+1$ and therefore $r\geq \mathcal{K}_l$. Hence, $C_\Delta^P$ is an LRC with locality $(\geq \mathcal{K}_l, \leq n_l-\mathcal{K}_l+1)$.

Our last statement follows from the fact that when $C[\overline R]=\ev_{P_l}\left(V^l_\Delta\right)$ is an MDS code, then the locality is $(\mathcal{K}_l,n_l-\mathcal{K}_l+1)$.
\end{proof}

\begin{remark}
With the above notation and when $\supp_{X_l}(V_\Delta)=\{0,1,\dots,k_l\}$, it holds that $\ev_{P_l}\left(V^l_\Delta\right)$ is a Reed-Solomon code (and thus an MDS code), and then the locality of $C_\Delta^P$ is $(\mathcal{K}_l, n_l-\mathcal{K}_l+1)$.
\end{remark}

\begin{remark}
Let $C_\Delta^P$ be an MCC with parameters $[n,k,d]_q$ and locality $(r,\delta)$. Then by Proposition \ref{pformula} and Corollary \ref{pdist}, the following inequalities
\begin{equation}\label{formula2}
    k+d_0+\left(\ceil*{\frac{k}{r}}-1\right)(\delta-1)\leq k+d+\left(\ceil*{\frac{k}{r}}-1\right)(\delta-1)\leq n+1
\end{equation}
hold.
\end{remark}

Let $C_\Delta^P$ be an MCC with parameters $[n,k,d]_q$ and locality $(r,\delta)$. We define its defect (with respect to $d_0$) as the value $D$:
$$D=D\left(C_\Delta^P\right):=n+1-k-d_0-\left(\ceil*{\frac{k}{r}}-1\right)(\delta-1)\geq 0.$$

\begin{mydef}\label{d0op}
The code $C_\Delta^P$ is called \textit{$d_0$-optimal} whenever $D$ vanishes. That is, $C_\Delta^P$ is optimal and $d=d_0$.
\end{mydef}

\begin{remarks}\label{facts}
The next facts will be useful:
\begin{enumerate}
    \item The locality $(r,\delta)$ provided in Proposition \ref{method} depends on the variable $X_l$ we choose to interpolate, which allows us to make the best choice of $X_l$.
    
    \item A $d_0$-optimal code is always optimal but a code that is not $d_0$-optimal may be optimal.
\end{enumerate}
\end{remarks}

\section{Optimal monomial-Cartesian codes}\label{work}

In this section we obtain optimal decreasing MCCs. We start with the bivariate case.

\subsection{The case $m=2$}\label{subm2}
For simplicity let us denote $X_1$ by $X$ and $X_2$ by $Y$. We represent $E$ as a grid where the coordinates $(i,j)$ correspond to an exponent $\mathbf{e}$ labelled with their distance (Definition \ref{dist}). Figure \ref{fig:grid} shows the grid representation of $E$ in the case when $n_1=10$ and $n_2=9$.

\begin{figure}[h]
    \centering
    \begin{tikzpicture}[y=0.7cm, x=0.7cm,font=\normalsize]
    \draw (0,0) -- (9,0);
    \draw (0,0) -- (0,8);
    \draw (0,8) -- (9,8);
    \draw (9,0) -- (9,8);
    \draw (1,0) -- (1,8);
    \draw (2,0) -- (2,8);
    \draw (3,0) -- (3,8);
    \draw (4,0) -- (4,8);
    \draw (5,0) -- (5,8);
    \draw (6,0) -- (6,8);
    \draw (7,0) -- (7,8);
    \draw (8,0) -- (8,8);
    \draw (0,1) -- (9,1);
    \draw (0,2) -- (9,2);
    \draw (0,3) -- (9,3);
    \draw (0,4) -- (9,4);
    \draw (0,5) -- (9,5);
    \draw (0,6) -- (9,6);
    \draw (0,7) -- (9,7);

    \filldraw[fill=black!40,draw=black!80] (0,0) circle (1pt)    node[anchor=south] {\scriptsize$90$};
    \filldraw[fill=black!40,draw=black!80] (1,0) circle (1pt)    node[anchor=south] {\scriptsize$81$};
    \filldraw[fill=black!40,draw=black!80] (2,0) circle (1pt)    node[anchor=south] {\scriptsize$72$};
    \filldraw[fill=black!40,draw=black!80] (3,0) circle (1pt)    node[anchor=south] {\scriptsize$63$};
    \filldraw[fill=black!40,draw=black!80] (4,0) circle (1pt)    node[anchor=south] {\scriptsize$54$};
    \filldraw[fill=black!40,draw=black!80] (5,0) circle (1pt)    node[anchor=south] {\scriptsize$45$};
    \filldraw[fill=black!40,draw=black!80] (6,0) circle (1pt)    node[anchor=south] {\scriptsize$36$};
    \filldraw[fill=black!40,draw=black!80] (7,0) circle (1pt)    node[anchor=south] {\scriptsize$27$};
    \filldraw[fill=black!40,draw=black!80] (8,0) circle (1pt)    node[anchor=south] {\scriptsize$18$};
    \filldraw[fill=black!40,draw=black!80] (9,0) circle (1pt)    node[anchor=south] {\scriptsize$9$};
    \filldraw[fill=black!40,draw=black!80] (0,1) circle (1pt)    node[anchor=south] {\scriptsize$80$};
    \filldraw[fill=black!40,draw=black!80] (1,1) circle (1pt)    node[anchor=south] {\scriptsize$72$};
    \filldraw[fill=black!40,draw=black!80] (2,1) circle (1pt)    node[anchor=south] {\scriptsize$64$};
    \filldraw[fill=black!40,draw=black!80] (3,1) circle (1pt)    node[anchor=south] {\scriptsize$56$};
    \filldraw[fill=black!40,draw=black!80] (4,1) circle (1pt)    node[anchor=south] {\scriptsize$48$};
    \filldraw[fill=black!40,draw=black!80] (5,1) circle (1pt)    node[anchor=south] {\scriptsize$40$};
    \filldraw[fill=black!40,draw=black!80] (6,1) circle (1pt)    node[anchor=south] {\scriptsize$32$};
    \filldraw[fill=black!40,draw=black!80] (7,1) circle (1pt)    node[anchor=south] {\scriptsize$24$};
    \filldraw[fill=black!40,draw=black!80] (8,1) circle (1pt)    node[anchor=south] {\scriptsize$16$};
    \filldraw[fill=black!40,draw=black!80] (9,1) circle (1pt)    node[anchor=south] {\scriptsize$8$};
    \filldraw[fill=black!40,draw=black!80] (0,2) circle (1pt)    node[anchor=south] {\scriptsize$70$};
    \filldraw[fill=black!40,draw=black!80] (1,2) circle (1pt)    node[anchor=south] {\scriptsize$63$};
    \filldraw[fill=black!40,draw=black!80] (2,2) circle (1pt)    node[anchor=south] {\scriptsize$56$};
    \filldraw[fill=black!40,draw=black!80] (3,2) circle (1pt)    node[anchor=south] {\scriptsize$49$};
    \filldraw[fill=black!40,draw=black!80] (4,2) circle (1pt)    node[anchor=south] {\scriptsize$42$};
    \filldraw[fill=black!40,draw=black!80] (5,2) circle (1pt)    node[anchor=south] {\scriptsize$35$};
    \filldraw[fill=black!40,draw=black!80] (6,2) circle (1pt)    node[anchor=south] {\scriptsize$28$};
    \filldraw[fill=black!40,draw=black!80] (7,2) circle (1pt)    node[anchor=south] {\scriptsize$21$};
    \filldraw[fill=black!40,draw=black!80] (8,2) circle (1pt)    node[anchor=south] {\scriptsize$14$};
    \filldraw[fill=black!40,draw=black!80] (9,2) circle (1pt)    node[anchor=south] {\scriptsize$7$};
    \filldraw[fill=black!40,draw=black!80] (0,3) circle (1pt)    node[anchor=south] {\scriptsize$60$};
    \filldraw[fill=black!40,draw=black!80] (1,3) circle (1pt)    node[anchor=south] {\scriptsize$54$};
    \filldraw[fill=black!40,draw=black!80] (2,3) circle (1pt)    node[anchor=south] {\scriptsize$48$};
    \filldraw[fill=black!40,draw=black!80] (3,3) circle (1pt)    node[anchor=south] {\scriptsize$42$};
    \filldraw[fill=black!40,draw=black!80] (4,3) circle (1pt)    node[anchor=south] {\scriptsize$36$};
    \filldraw[fill=black!40,draw=black!80] (5,3) circle (1pt)    node[anchor=south] {\scriptsize$30$};
    \filldraw[fill=black!40,draw=black!80] (6,3) circle (1pt)    node[anchor=south] {\scriptsize$24$};
    \filldraw[fill=black!40,draw=black!80] (7,3) circle (1pt)    node[anchor=south] {\scriptsize$18$};
    \filldraw[fill=black!40,draw=black!80] (8,3) circle (1pt)    node[anchor=south] {\scriptsize$12$};
    \filldraw[fill=black!40,draw=black!80] (9,3) circle (1pt)    node[anchor=south] {\scriptsize$6$};
    \filldraw[fill=black!40,draw=black!80] (0,4) circle (1pt)    node[anchor=south] {\scriptsize$50$};
    \filldraw[fill=black!40,draw=black!80] (1,4) circle (1pt)    node[anchor=south] {\scriptsize$45$};
    \filldraw[fill=black!40,draw=black!80] (2,4) circle (1pt)    node[anchor=south] {\scriptsize$40$};
    \filldraw[fill=black!40,draw=black!80] (3,4) circle (1pt)    node[anchor=south] {\scriptsize$35$};
    \filldraw[fill=black!40,draw=black!80] (4,4) circle (1pt)    node[anchor=south] {\scriptsize$30$};
    \filldraw[fill=black!40,draw=black!80] (5,4) circle (1pt)    node[anchor=south] {\scriptsize$25$};
    \filldraw[fill=black!40,draw=black!80] (6,4) circle (1pt)    node[anchor=south] {\scriptsize$20$};
    \filldraw[fill=black!40,draw=black!80] (7,4) circle (1pt)    node[anchor=south] {\scriptsize$15$};
    \filldraw[fill=black!40,draw=black!80] (8,4) circle (1pt)    node[anchor=south] {\scriptsize$10$};
    \filldraw[fill=black!40,draw=black!80] (9,4) circle (1pt)    node[anchor=south] {\scriptsize$5$};
    \filldraw[fill=black!40,draw=black!80] (0,5) circle (1pt)    node[anchor=south] {\scriptsize$40$};
    \filldraw[fill=black!40,draw=black!80] (1,5) circle (1pt)    node[anchor=south] {\scriptsize$36$};
    \filldraw[fill=black!40,draw=black!80] (2,5) circle (1pt)    node[anchor=south] {\scriptsize$32$};
    \filldraw[fill=black!40,draw=black!80] (3,5) circle (1pt)    node[anchor=south] {\scriptsize$28$};
    \filldraw[fill=black!40,draw=black!80] (4,5) circle (1pt)    node[anchor=south] {\scriptsize$24$};
    \filldraw[fill=black!40,draw=black!80] (5,5) circle (1pt)    node[anchor=south] {\scriptsize$20$};
    \filldraw[fill=black!40,draw=black!80] (6,5) circle (1pt)    node[anchor=south] {\scriptsize$16$};
    \filldraw[fill=black!40,draw=black!80] (7,5) circle (1pt)    node[anchor=south] {\scriptsize$12$};
    \filldraw[fill=black!40,draw=black!80] (8,5) circle (1pt)    node[anchor=south] {\scriptsize$8$};
    \filldraw[fill=black!40,draw=black!80] (9,5) circle (1pt)    node[anchor=south] {\scriptsize$4$};
    \filldraw[fill=black!40,draw=black!80] (0,6) circle (1pt)    node[anchor=south] {\scriptsize$30$};
    \filldraw[fill=black!40,draw=black!80] (1,6) circle (1pt)    node[anchor=south] {\scriptsize$27$};
    \filldraw[fill=black!40,draw=black!80] (2,6) circle (1pt)    node[anchor=south] {\scriptsize$24$};
    \filldraw[fill=black!40,draw=black!80] (3,6) circle (1pt)    node[anchor=south] {\scriptsize$21$};
    \filldraw[fill=black!40,draw=black!80] (4,6) circle (1pt)    node[anchor=south] {\scriptsize$18$};
    \filldraw[fill=black!40,draw=black!80] (5,6) circle (1pt)    node[anchor=south] {\scriptsize$15$};
    \filldraw[fill=black!40,draw=black!80] (6,6) circle (1pt)    node[anchor=south] {\scriptsize$12$};
    \filldraw[fill=black!40,draw=black!80] (7,6) circle (1pt)    node[anchor=south] {\scriptsize$9$};
    \filldraw[fill=black!40,draw=black!80] (8,6) circle (1pt)    node[anchor=south] {\scriptsize$6$};
    \filldraw[fill=black!40,draw=black!80] (9,6) circle (1pt)    node[anchor=south] {\scriptsize$3$};
    \filldraw[fill=black!40,draw=black!80] (0,7) circle (1pt)    node[anchor=south] {\scriptsize$20$};
    \filldraw[fill=black!40,draw=black!80] (1,7) circle (1pt)    node[anchor=south] {\scriptsize$18$};
    \filldraw[fill=black!40,draw=black!80] (2,7) circle (1pt)    node[anchor=south] {\scriptsize$16$};
    \filldraw[fill=black!40,draw=black!80] (3,7) circle (1pt)    node[anchor=south] {\scriptsize$14$};
    \filldraw[fill=black!40,draw=black!80] (4,7) circle (1pt)    node[anchor=south] {\scriptsize$12$};
    \filldraw[fill=black!40,draw=black!80] (5,7) circle (1pt)    node[anchor=south] {\scriptsize$10$};
    \filldraw[fill=black!40,draw=black!80] (6,7) circle (1pt)    node[anchor=south] {\scriptsize$8$};
    \filldraw[fill=black!40,draw=black!80] (7,7) circle (1pt)    node[anchor=south] {\scriptsize$6$};
    \filldraw[fill=black!40,draw=black!80] (8,7) circle (1pt)    node[anchor=south] {\scriptsize$4$};
    \filldraw[fill=black!40,draw=black!80] (9,7) circle (1pt)    node[anchor=south] {\scriptsize$2$};
    \filldraw[fill=black!40,draw=black!80] (0,8) circle (1pt)    node[anchor=south] {\scriptsize$10$};
    \filldraw[fill=black!40,draw=black!80] (1,8) circle (1pt)    node[anchor=south] {\scriptsize$9$};
    \filldraw[fill=black!40,draw=black!80] (2,8) circle (1pt)    node[anchor=south] {\scriptsize$8$};
    \filldraw[fill=black!40,draw=black!80] (3,8) circle (1pt)    node[anchor=south] {\scriptsize$7$};
    \filldraw[fill=black!40,draw=black!80] (4,8) circle (1pt)    node[anchor=south] {\scriptsize$6$};
    \filldraw[fill=black!40,draw=black!80] (5,8) circle (1pt)    node[anchor=south] {\scriptsize$5$};
    \filldraw[fill=black!40,draw=black!80] (6,8) circle (1pt)    node[anchor=south] {\scriptsize$4$};
    \filldraw[fill=black!40,draw=black!80] (7,8) circle (1pt)    node[anchor=south] {\scriptsize$3$};
    \filldraw[fill=black!40,draw=black!80] (8,8) circle (1pt)    node[anchor=south] {\scriptsize$2$};
    \filldraw[fill=black!40,draw=black!80] (9,8) circle (1pt)    node[anchor=south] {\scriptsize$1$};

    \node [below] at (0,0) {\scriptsize$0$};
    \node [below] at (1,0) {\scriptsize$1$};
    \node [below] at (2,0) {\scriptsize$2$};
    \node [below] at (3,0) {\scriptsize$3$};
    \node [below] at (4,0) {\scriptsize$4$};
    \node [below] at (5,0) {\scriptsize$5$};
    \node [below] at (6,0) {\scriptsize$6$};
    \node [below] at (7,0) {\scriptsize$7$};
    \node [below] at (8,0) {\scriptsize$8$};
    \node [below] at (9,0) {\scriptsize$9$};
    \node [left] at (0,0) {\scriptsize$0$};
    \node [left] at (0,1) {\scriptsize$1$};
    \node [left] at (0,2) {\scriptsize$2$};
    \node [left] at (0,3) {\scriptsize$3$};
    \node [left] at (0,4) {\scriptsize$4$};
    \node [left] at (0,5) {\scriptsize$5$};
    \node [left] at (0,6) {\scriptsize$6$};
    \node [left] at (0,7) {\scriptsize$7$};
    \node [left] at (0,8) {\scriptsize$8$};
    
    \end{tikzpicture}
    \caption{Grid representation of $E$, where $n_1=10$ and $n_2=9$}
    \label{fig:grid}
\end{figure}
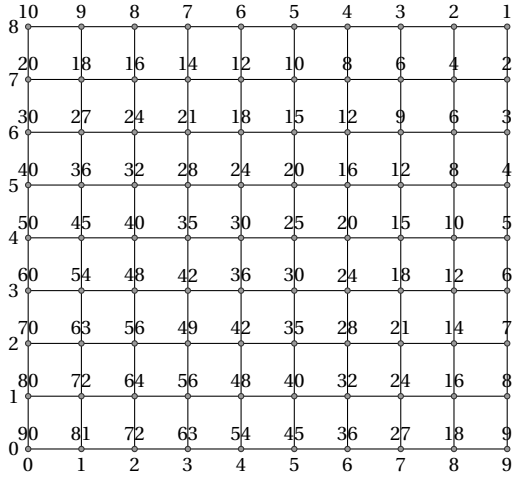

We look for decreasing sets $\Delta\subseteq E$ such that the code $C_\Delta^P$ is optimal, that is, its parameters satisfy
$$k+d_0+\left(\ceil*{\frac{k}{r}}-1\right)(\delta-1)=n+1.$$
Note that, by Remark \ref{charact}, $d=d_0$.

From now on, we use shaded regions to represent sets formed by the points in $E$ inside that region. By rectangle we will always refer to a subset of $E$ whose representation as shaded set is a rectangle. The first result in this subsection shows when codes $C_\Delta^P$, where $\Delta$ is decreasing and has the shape of a rectangle, are optimal.

\begin{figure}[h]
    \centering
    \begin{tikzpicture}[y=0.7cm, x=0.7cm,font=\normalsize]

    \filldraw[fill=gray!30] (0,0) rectangle (2,2);
    \draw (0,2) -- (0,4);
    \draw (2,0) -- (4,0);
    
    \filldraw[fill=black!40,draw=black!80] (2,2) circle (1pt)    node[anchor=south] {\scriptsize$(n_1-i)(n_2-j)$};
    
    \node [below] at (0,0) {\scriptsize$0$};
    \node [below] at (1,0) {$\dots$};
    \node [below] at (2,0) {\scriptsize$i$};
    \node [below] at (3,0) {$\dots$};
    \node [below] at (4,0) {\scriptsize$n_1-1$};
    \node [left] at (0,0) {\scriptsize$0$};
    \node [left] at (0,1) {$\vdots$};
    \node [left] at (0,2) {\scriptsize$j$};
    \node [left] at (0,3) {$\vdots$};
    \node [left] at (0,4) {\scriptsize$n_2-1$};
    
    \end{tikzpicture}
    \caption{Sets $\Delta_{i,j}$ in Proposition \ref{rect}}
    \label{fig:rect}
\end{figure}
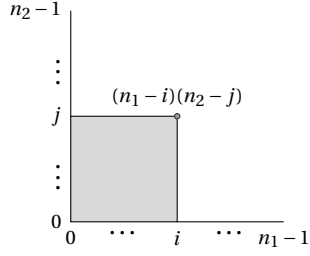

\begin{prop}\label{rect}
Keep the above notation, where $q$ is a prime power, $m=2$ and $n_1$, $n_2\geq 2$ are the cardinalities of $P_1$ and $P_2$. Consider the sets
$$\Delta=\Delta_{i,j}:=\left\{(e_1,e_2) \mid 0\leq e_1 \leq i \textrm{, } 0\leq e_2\leq j\right\}\subseteq E=\{0,\dots,n_1-1\}\times\{0,\dots,n_2-1\}$$
(see Figure \ref{fig:rect}). Then, the MCC, $C_\Delta^P$, defined by a set $\Delta$ as above is an optimal $(r,\delta)$-LRC if and only if one of the following conditions hold:
\begin{itemize}
    \item $i=0$ and $0\leq j\leq n_2-1$, in which case $(r,\delta)=(1,n_1)$.
    \item $1\leq i\leq n_1-2$ and $j=n_2-1$, in which case $(r,\delta)=(i+1,n_1-i)$.
    \item $0\leq i\leq n_1-1$ and $j=0$, in which case $(r,\delta)=(1,n_2)$.
    \item $i=n_1-1$ and $1\leq j\leq n_2-2$, in which case $(r,\delta)=(j+1,n_2-j)$.
\end{itemize}
Sets $\Delta$ as above are denoted by $\Delta_{i,j}^1$.
\end{prop}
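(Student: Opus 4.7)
The starting parameters come directly from earlier results: by Proposition \ref{lendim}, $n=n_1 n_2$ and $k=\#\Delta_{i,j}=(i+1)(j+1)$; by Corollary \ref{pdist} and the observation that the function $\dis$ is strictly decreasing in each coordinate, the minimum of $\dis(\mathbf e)$ over the rectangle $\Delta_{i,j}$ is attained at its ``upper-right'' corner, giving $d_0=(n_1-i)(n_2-j)$. Since $\Delta_{i,j}$ is clearly decreasing in the sense of Definition \ref{decr}, Remark \ref{charact} yields $d=d_0$, so the question becomes purely numerical: when does $k+d_0+(\lceil k/r\rceil-1)(\delta-1)=n+1$?

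The plan is next to list the available localities via Proposition \ref{method}. For $\Delta_{i,j}$ one has $\supp_{X_1}(V_\Delta)=\{0,1,\dots,i\}$ and $\supp_{X_2}(V_\Delta)=\{0,1,\dots,j\}$, so $\mathcal K_1=i+1$ and $\mathcal K_2=j+1$. In both cases $V^l_\Delta$ consists of all polynomials in $X_l$ of degree $\le k_l$, so $\ev_{P_l}(V^l_\Delta)$ is a Reed-Solomon code and hence MDS. By Proposition \ref{method}, interpolation in $X_1$ therefore yields locality $(r,\delta)=(i+1,n_1-i)$, provided $i\le n_1-2$, and interpolation in $X_2$ yields $(j+1,n_2-j)$, provided $j\le n_2-2$.

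The core step is then a short algebraic check of the Singleton-like inequality. Using $(r,\delta)=(i+1,n_1-i)$ and $k=(i+1)(j+1)$ one gets $\lceil k/r\rceil=j+1$, and expanding
$$k+d_0+\left(\lceil k/r\rceil-1\right)(\delta-1)=(i+1)(j+1)+(n_1-i)(n_2-j)+j(n_1-i-1)$$
simplifies to $n+1+i(j+1-n_2)$. Hence equality in (\ref{formula}) holds if and only if $i=0$ or $j=n_2-1$. By the symmetric computation, interpolation in $X_2$ produces equality iff $j=0$ or $i=n_1-1$. The four bullets of the statement are simply the intersections of these four equality conditions with the validity ranges $i\le n_1-2$, $j\le n_2-2$ coming from Proposition \ref{method}.

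I do not expect a genuine obstacle here; the delicate part is merely bookkeeping. First, one must correctly identify the corner $(i,j)$ as the unique minimizer of $\dis$ on $\Delta_{i,j}$, so that $d_0$ is what is claimed. Second, one must be careful in the boundary cases $i=0$ and $j=0$, where $\mathcal K_l=1$ and the interpolation trivially gives $(r,\delta)=(1,n_l)$: these yield two of the four cases, while the other two correspond to $\Delta_{i,j}$ occupying an entire row or column of $E$, forcing the use of the other variable for recovery. Finally, the rectangle $1\le i\le n_1-2$, $1\le j\le n_2-2$ is excluded because both candidate expressions $i(j+1-n_2)$ and $j(i+1-n_1)$ are strictly negative there, so the bound (\ref{formula}) is strict for either choice of recovery variable supplied by Proposition \ref{method}.
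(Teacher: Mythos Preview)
Your proof is correct and follows essentially the same approach as the paper: compute $k$, $d_0$, use Proposition \ref{method} to obtain the two candidate localities, and reduce the Singleton-like equality to $n+1+i(j+1-n_2)=n+1$ (and its symmetric counterpart). You are in fact a bit more explicit than the paper in justifying $d=d_0$ via Remark \ref{charact} and in ruling out the interior rectangle $1\le i\le n_1-2$, $1\le j\le n_2-2$.
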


\begin{proof}
Clearly, $k=(i+1)(j+1)$ and $d_0=(n_1-i)(n_2-j)$. By interpolating with respect to $X$, $r=i+1$ and $\delta-1=n_1-i-1$. Then,
\begin{equation*}
\begin{aligned}
k+d_0+\left(\ceil*{\frac{k}{r}}-1\right)(\delta-1)&=(i+1)(j+1)+(n_1-i)(n_2-j)+\left(\ceil*{\frac{(i+1)(j+1)}{i+1}}-1\right)(n_1-i-1)\\
&=n_1n_2+1+i(j+1-n_2),
\end{aligned}
\end{equation*}
and the code is optimal if and only if $i=0$ or $j=n_2-1$. Note that when $j=n_2-1$ and $i=n_1-1$ one does not get an LRC.

The remaining LRCs are obtained by interpolating with respect to $Y$, so that $r=j+1$ and $\delta-1=n_2-j-1$.
\end{proof}

In the sequel, we will perform the procedure of considering a subset $\Delta\subseteq E$ and adding or removing elements to obtain a new subset $\Delta^*\subseteq E$. The expression \textit{gaining} (or \textit{losing}) $x$ \textit{units in} a parameter refers to the fact that the resulting code $C_{\Delta^*}^P$ has a larger (or smaller) value for that parameter in a quantity of $x$ units.

The sets $\Delta^*$ obtained by removing the least distance point on the $n_2-1$-th row (or $n_1-1$-th column) of a rectangle $\Delta_{i,j}^1$ with $j=n_2-1$ and $i\geq 1$ (or $i=n_1-1$ and $j\geq 1$) also provide optimal codes since the left-hand side (LHS) of (\ref{formula2}) remains the same. Indeed, when removing that point we lose one unit in dimension but we gain one unit in the bound for the minimum distance and $r$, $\delta$ and $\ceil*{\frac{k}{r}}$ do not change. The following result generalizes this situation.

\begin{figure}[h]
    \centering
    \begin{subfigure}[b]{0.45\textwidth}
        \centering
        \begin{tikzpicture}[y=0.7cm, x=0.7cm,font=\normalsize]

        \filldraw[gray!30] (0,0) rectangle (4,5);
        \filldraw[gray!30] (0,5) rectangle (2,6);
        \draw (0,0) -- (6,0);
        \draw (0,0) -- (0,6);
        \draw (0,6) -- (2,6);
        \draw (2,6) -- (2,5);
        \draw (2,5) -- (4,5);
        \draw (4,0) -- (4,5);

        \filldraw[fill=black!40,draw=black!80] (2,6) circle (1pt)    node[anchor=south] {\scriptsize$n_1-s$};
        \filldraw[fill=black!40,draw=black!80] (4,5) circle (1pt)    node[anchor=south] {\scriptsize$2(n_1-i)$};
        
        \node [below] at (0,0) {\scriptsize$0$};
        \node [below] at (1,0) {$\dots$};
        \node [below] at (2,0) {\scriptsize$s$};
        \node [below] at (3,0) {$\dots$};
        \node [below] at (4,0) {\scriptsize$i$};
        \node [below] at (5,0) {$\dots$};
        \node [below] at (6,0) {\scriptsize$n_1-1$};
        \node [left] at (0,0) {\scriptsize$0$};
        \node [left] at (0,3) {$\vdots$};
        \node [left] at (0,5) {\scriptsize$n_2-2$};
        \node [left] at (0,6) {\scriptsize$n_2-1$};
        
        \end{tikzpicture}
        \caption{Sets $\Delta_{i,s}^2$}
    \end{subfigure}
    \hfill
    \begin{subfigure}[b]{0.45\textwidth}
        \centering
        \begin{tikzpicture}[y=0.7cm, x=0.7cm,font=\normalsize]

        \filldraw[gray!30] (0,0) rectangle (5,4);
        \filldraw[gray!30] (5,0) rectangle (6,2);
        \draw (0,0) -- (0,6);
        \draw (0,0) -- (6,0);
        \draw (6,0) -- (6,2);
        \draw (6,2) -- (5,2);
        \draw (5,2) -- (5,4);
        \draw (0,4) -- (5,4);

        \filldraw[fill=black!40,draw=black!80] (6,2) circle (1pt)    node[anchor=south] {\scriptsize$n_2-s$};
        \filldraw[fill=black!40,draw=black!80] (5,4) circle (1pt)    node[anchor=south] {\scriptsize$2(n_2-j)$};
        
        \node [left] at (0,0) {\scriptsize$0$};
        \node [left] at (0,1) {$\vdots$};
        \node [left] at (0,2) {\scriptsize$s$};
        \node [left] at (0,3) {$\vdots$};
        \node [left] at (0,4) {\scriptsize$j$};
        \node [left] at (0,5) {$\vdots$};
        \node [left] at (0,6) {\scriptsize$n_2-1$};
        \node [below] at (0,0) {\scriptsize$0$};
        \node [below] at (3,0) {$\dots$};
        \node [below] at (5,0) {\scriptsize$n_1-2,$};
        \node [below] at (6,0) {\scriptsize$n_1-1$};

        \end{tikzpicture}
        \caption{Sets $\Delta_{j,s}^{2,\sigma}$}
    \end{subfigure}
    \caption{Sets $\Delta_{i,s}^2$ and $\Delta_{j,s}^{2,\sigma}$ in Proposition \ref{rectelim}}
    \label{fig:rectelim}
\end{figure}

\begin{prop}\label{rectelim}
With notation as in Proposition \ref{rect}, consider the subsets of $E$
$$\Delta=\Delta_{i,s}^2:=\left\{(e_1,e_2) \mid 0\leq e_1 \leq i \textrm{, } 0\leq e_2 \leq n_2-2\right\}\cup \left\{(e_1,n_2-1) \mid 0\leq e_1\leq s\right\},$$
where $\max\left\{0,2i-n_1\right\}\leq s<i\leq n_1-2$ (see Figure \ref{fig:rectelim} (1)).

Then, the MCCs, $C_\Delta^P$, are optimal $(r,\delta)=(i+1,n_1-i)$-LRCs.

Analogously, the MCCs, $C_\Delta^P$, where
$$\Delta=\Delta_{j,s}^{2,\sigma}:=\left\{(e_1,e_2) \mid 0\leq e_1 \leq n_1-2 \textrm{, } 0\leq e_2 \leq j\right\}\cup \left\{(n_1-1,e_2) \mid 0\leq e_2\leq s\right\}\subseteq E,$$
$\max\left\{0,2j-n_2\right\}\leq s<j\leq n_2-2$ (see Figure \ref{fig:rectelim} (2)) are optimal $(r,\delta)=(j+1,n_2-j)$-LRCs.
\end{prop}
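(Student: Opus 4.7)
The plan is to focus on the first family $\Delta = \Delta_{i,s}^2$; the argument for $\Delta_{j,s}^{2,\sigma}$ is completely symmetric, obtained by swapping the roles of the two variables and the indices $n_1 \leftrightarrow n_2$. I will compute each of the five quantities entering the Singleton-like bound (\ref{formula2}) and check that equality holds.

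First I would observe that $\Delta_{i,s}^2$ is decreasing in the sense of Definition \ref{decr}: any point strictly dominated coordinate-wise by a point of $\Delta$ again lies in $\Delta$, since the two pieces of $\Delta$ form a staircase that is closed under lowering either coordinate. By Remark \ref{charact} this gives $d = d_0$, so it suffices to prove $d_0$-optimality. Proposition \ref{lendim} gives $n=n_1n_2$ and $k=(i+1)(n_2-1)+(s+1)$. Next I would apply Proposition \ref{method} with $l=1$: since $\supp_{X_1}(V_\Delta)=\{0,1,\dots,i\}$, the auxiliary code $\ev_{P_1}(V^1_\Delta)$ is a Reed--Solomon code and hence MDS, so the locality is exactly $(r,\delta)=(i+1,\,n_1-i)$.

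The key step is computing $d_0$. The candidates for the minimizing exponent in $\Delta$ are the two ``outer corners'' $(i,n_2-2)$ with distance $2(n_1-i)$ and $(s,n_2-1)$ with distance $n_1-s$; any other element of $\Delta$ has larger distance because decreasing either coordinate of an exponent increases its distance. The hypothesis $s\geq 2i-n_1$ is precisely what guarantees $n_1-s \leq 2(n_1-i)$, so $d_0 = n_1-s$. This is really the crux of the proposition: the lower bound on $s$ is engineered so that removing those particular top-row points from the full rectangle $\Delta_{i,n_2-1}^1$ increases $d_0$ by exactly one unit per point removed, compensating the corresponding loss in dimension.

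It remains to evaluate $\lceil k/r\rceil$. Writing
\[
\frac{k}{r}=\frac{(i+1)(n_2-1)+(s+1)}{i+1}=(n_2-1)+\frac{s+1}{i+1},
\]
and noting that $0<s+1\leq i<i+1$ forces the fractional part to lie strictly between $0$ and $1$, we obtain $\lceil k/r\rceil = n_2$. Substituting everything into the left-hand side of (\ref{formula2}) gives
\[
(i+1)(n_2-1)+(s+1)+(n_1-s)+(n_2-1)(n_1-i-1)=(n_2-1)\bigl[(i+1)+(n_1-i-1)\bigr]+n_1+1=n_1n_2+1,
\]
so the bound is attained and $C^P_\Delta$ is optimal. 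I expect the only delicate point to be justifying that $d_0 = n_1-s$ rather than $2(n_1-i)$; once the role of the lower bound $\max\{0,2i-n_1\}\leq s$ is spelled out, the rest is a direct calculation. For $\Delta_{j,s}^{2,\sigma}$ one repeats the same argument interpolating with respect to $X_2$ and using the constraint $s\geq 2j-n_2$.
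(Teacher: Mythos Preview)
Your proof is correct and follows essentially the same approach as the paper: compute $k$, $d_0$, $(r,\delta)$ via interpolation in $X_1$, and verify the Singleton-like bound is met, with the constraint $s\geq 2i-n_1$ identified as exactly what forces $d_0=n_1-s$. Your version is slightly more explicit in justifying the exact locality via the Reed--Solomon/MDS remark and in computing $\lceil k/r\rceil$, but the argument is the same.
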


\begin{proof}
Let us see a proof for the case $\Delta=\Delta_{i,s}^2$. $\Delta$ is obtained by removing the ($i-s$) least distance points of $\Delta_{i,n_2-1}^1$ on the $n_2-1$-th row with $0\leq s<i$ as long as the distance
$$\dis(s,n_2-1)\leq\dis(i,n_2-2).$$
In fact, this last inequality is equivalent to $n_1-s\leq 2(n_1-i)$ and to $s\geq 2i-n_1$. Interpolating with respect to $X$, the parameters of the code $C_\Delta^P$ are $k=(i+1)(n_2-1)+s+1$, $d_0=n_1-s$, $r=i+1$ and $\delta-1=n_1-i-1$, and therefore
\begin{equation*}
\begin{aligned}
k+d_0+\left(\ceil*{\frac{k}{r}}-1\right)(\delta-1)&=(i+1)(n_2-1)+s+1+n_1-s\\
&+\left(\ceil*{\frac{(i+1)(n_2-1)+s+1}{i+1}}-1\right)(n_1-i-1)\\
&=n_1n_2+1.
\end{aligned}
\end{equation*}
The case $\Delta=\Delta_{j,s}^{2,\sigma}$ can be proved analogously. It suffices to consider the symmetric situation, interpolate with respect to $Y$ and replace $i$ by $j$ and $n_1$ by $n_2$.
\end{proof}

The following result completes our family of decreasing sets $\Delta$, that correspond to MCCs, where $m=2$, giving rise to optimal $(r,\delta)$-LRCs.

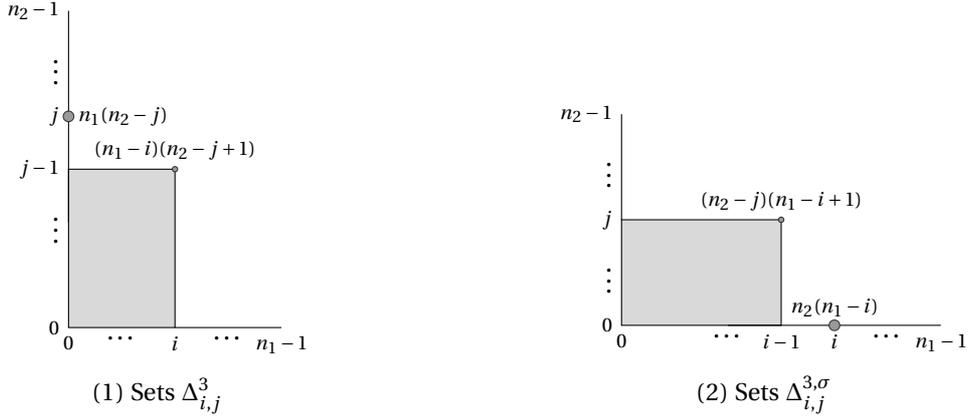
\begin{figure}[h]
    \centering
    \begin{subfigure}[b]{0.45\textwidth}
        \centering
        \begin{tikzpicture}[y=0.7cm, x=0.7cm,font=\normalsize]

        \filldraw[fill=gray!30] (0,0) rectangle (2,3);
        \draw (0,2) -- (0,6);
        \draw (2,0) -- (4,0);
        
        \filldraw[fill=black!40,draw=black!80] (0,4) circle (2pt)    node[anchor=west] {\scriptsize$n_1(n_2-j)$};
        \filldraw[fill=black!40,draw=black!80] (2,3) circle (1pt)    node[anchor=south] {\scriptsize$(n_1-i)(n_2-j+1)$};
        
        \node [below] at (0,0) {\scriptsize$0$};
        \node [below] at (1,0) {$\dots$};
        \node [below] at (2,0) {\scriptsize$i$};
        \node [below] at (3,0) {$\dots$};
        \node [below] at (4,0) {\scriptsize$n_1-1$};
        \node [left] at (0,0) {\scriptsize$0$};
        \node [left] at (0,2) {$\vdots$};
        \node [left] at (0,3) {\scriptsize$j-1$};
        \node [left] at (0,4) {\scriptsize$j$};
        \node [left] at (0,5) {$\vdots$};
        \node [left] at (0,6) {\scriptsize$n_2-1$};
        
        \end{tikzpicture}
        \caption{Sets $\Delta_{i,j}^3$}
    \end{subfigure}
    \hfill
    \begin{subfigure}[b]{0.45\textwidth}
        \centering
        \begin{tikzpicture}[y=0.7cm, x=0.7cm,font=\normalsize]

        \filldraw[fill=gray!30] (0,0) rectangle (3,2);
        \draw (2,0) -- (6,0);
        \draw (0,2) -- (0,4);
        
        \filldraw[fill=black!40,draw=black!80] (4,0) circle (2pt)    node[anchor=south] {\scriptsize$n_2(n_1-i)$};
        \filldraw[fill=black!40,draw=black!80] (3,2) circle (1pt)    node[anchor=south] {\scriptsize$(n_2-j)(n_1-i+1)$};
        
        \node [left] at (0,0) {\scriptsize$0$};
        \node [left] at (0,1) {$\vdots$};
        \node [left] at (0,2) {\scriptsize$j$};
        \node [left] at (0,3) {$\vdots$};
        \node [left] at (0,4) {\scriptsize$n_2-1$};
        \node [below] at (0,0) {\scriptsize$0$};
        \node [below] at (2,0) {$\dots$};
        \node [below] at (3,0) {\scriptsize$i-1$};
        \node [below] at (4,0) {\scriptsize$i$};
        \node [below] at (5,0) {$\dots$};
        \node [below] at (6,0) {\scriptsize$n_1-1$};
        \end{tikzpicture}
        \caption{Sets $\Delta_{i,j}^{3,\sigma}$}
    \end{subfigure}
    \caption{Sets $\Delta_{i,j}^3$ and $\Delta_{i,j}^{3,\sigma}$ in Proposition \ref{rectwithp}}
    \label{fig:rectwithp}
\end{figure}

\begin{prop}\label{rectwithp}
With notation as in Proposition \ref{rect}, consider the family of subsets of $E$
$$\Delta=\Delta_{i,j}^3:=\left\{(e_1,e_2) \mid 0\leq e_1 \leq i \textrm{, } 0\leq e_2 \leq j-1\right\}\cup\{(0,j)\},$$
where $1\leq i\leq n_1-2$ and  $\max\left\{1,\frac{i(n_2+1)-n_1}{i}\right\} \leq j \leq n_2-2$ (see Figure \ref{fig:rectwithp} (1)).

Then, the MCCs, $C_\Delta^P$, are optimal $(r,\delta)=(i+1,n_1-i)$-LRCs.

Analogously, the MCCs, $C_\Delta^P$, where
$$\Delta=\Delta_{i,j}^{3,\sigma}:=\{(e_1,e_2) \mid 0\leq e_1 \leq i-1 \textrm{, } 0\leq e_2 \leq j\}\cup\{(i,0)\}\subseteq E,$$
$1\leq j\leq n_2-2$, and $\max\left\{1,\frac{j(n_1+1)-n_2}{j}\right\} \leq i \leq n_1-2$ (see Figure \ref{fig:rectwithp} (2)) are optimal $(r,\delta)=(j+1,n_2-j)$-LRCs.
\end{prop}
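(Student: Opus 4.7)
The plan is to compute the four quantities $k$, $d_0$, $r$, $\delta$ for the code $C_{\Delta}^P$ with $\Delta = \Delta_{i,j}^3$ and plug them into the bound (\ref{formula2}) to check the equality holds. First I would observe that $\Delta_{i,j}^3$ is a decreasing set: every divisor of a monomial $X^{e_1}Y^{e_2}$ with $(e_1,e_2)$ in the rectangle part stays in the rectangle, and every divisor of $Y^j$ is of the form $Y^{e_2}$ with $0\leq e_2\leq j$, which lies either in the rectangle (when $e_2\leq j-1$) or is the added point itself. By Remark~\ref{charact} this gives $d=d_0$, so showing $d_0$-optimality is equivalent to showing optimality.

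Next, the dimension is immediate: $k=\#\Delta_{i,j}^3=(i+1)j+1$. For the locality I would interpolate with respect to $X$, since $\supp_X(V_\Delta)=\{0,1,\ldots,i\}$ consists of a full interval starting at $0$; this yields $\mathcal{K}_1=i+1$ and, by the MDS remark after Proposition~\ref{method}, the locality is exactly $(r,\delta)=(i+1,n_1-i)$. The step that needs real care is identifying $d_0$. The candidate minima are $\dis(i,j-1)=(n_1-i)(n_2-j+1)$ coming from the corner of the rectangle and $\dis(0,j)=n_1(n_2-j)$ coming from the extra point. A direct manipulation shows
\[
n_1(n_2-j)\leq (n_1-i)(n_2-j+1) \iff i(n_2-j+1)\leq n_1 \iff j\geq \frac{i(n_2+1)-n_1}{i},
\]
which is precisely the hypothesis on $j$. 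Hence $d_0=n_1(n_2-j)$ exactly when we are inside the stated range; this explains the unusual lower bound on $j$ and is the main technical point.

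Finally I would plug everything in. Since $k/r=j+1/(i+1)$ and $i\geq 1$, we have $\lceil k/r\rceil=j+1$, so
\[
k+d_0+\left(\ceil*{\frac{k}{r}}-1\right)(\delta-1) = (i+1)j+1+n_1(n_2-j)+j(n_1-i-1) = n_1n_2+1,
\]
which is the desired Singleton equality, establishing optimality. For the symmetric family $\Delta_{i,j}^{3,\sigma}$ I would repeat the same argument after swapping the roles of $X$ and $Y$ (hence of $n_1\leftrightarrow n_2$ and $i\leftrightarrow j$), interpolating now with respect to $Y$; the analogous inequality $n_2(n_1-i)\leq (n_2-j)(n_1-i+1)$ rearranges to the stated lower bound on $i$, and the final computation yields $n_1n_2+1$ identically.
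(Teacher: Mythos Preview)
Your argument is correct and follows essentially the same route as the paper: compute $k=(i+1)j+1$, interpolate in $X$ to get $(r,\delta)=(i+1,n_1-i)$, verify that the hypothesis on $j$ is exactly the condition $\dis(0,j)\le\dis(i,j-1)$ so that $d_0=n_1(n_2-j)$, and substitute into the Singleton-type identity. Your additional remark that $\Delta_{i,j}^3$ is decreasing (hence $d=d_0$) is a nice explicit touch the paper leaves implicit; one cosmetic point is that ``$k/r=j+1/(i+1)$'' should be typeset as $k/r=j+\tfrac{1}{i+1}$ to avoid ambiguity.
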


\begin{proof}
As before, we only give the proof for the case $\Delta=\Delta_{i,j}^3$ since a proof for $\Delta_{i,j}^{3,\sigma}$ follows as described in the symmetric situation of the proof of Propositon \ref{rectelim}.

$\Delta$ is obtained by removing the points $(e_1,j)$, $1\leq e_1 \leq i$, of a rectangle
$$\Delta_{i,j}=\left\{(e_1,e_2) \mid 0\leq e_1 \leq i \textrm{, } 0\leq e_2\leq j\right\}$$
with $1\leq i\leq n_1-2$ and $1\leq j\leq n_2-2$ such that $\dis(0,j)\leq\dis(i,j-1)$. As a consequence, $n_1(n_2-j)\leq (n_1-i)(n_2-j+1)$, which is equivalent to $i\leq \frac{n_1}{n_2-j+1}$, or $j\geq \frac{i(n_2+1)-n_1}{i}$. In this case, we interpolate with respect to $X$ and the parameters of the code $C_\Delta^P$ are $k=(i+1)j+1$, $d_0=n_1(n_2-j)$, $r=i+1$ and $\delta-1=n_1-i-1$. Thus,
\begin{equation*}
\begin{aligned}
k+d_0+\left(\ceil*{\frac{k}{r}}-1\right)(\delta-1)&=(i+1)j+1+n_1(n_2-j)+\left(\ceil*{\frac{(i+1)j+1}{i+1}}-1\right)(n_1-i-1)\\
&=n_1n_2+1.
\end{aligned}
\end{equation*}
\end{proof}

\begin{remark}\label{nomore}
    The families of (decreasing) MCCs given in Propositions \ref{rect}, \ref{rectelim} and \ref{rectwithp} determine the parameters of all $d_0$-optimal bivariate ($m=2$) $(r,\delta)$-LRCs $C_\Delta^P$ (with any set $\Delta\subseteq E$). That is to say, if $C_\Delta^P$ is a $d_0$-optimal LRC, then there exists an MCC, $C_{\Delta^*}^P$, as in Propositions \ref{rect}, \ref{rectelim} and \ref{rectwithp} having the same parameters $n$, $k$, $d$, $r$ and $\delta$ as $C_\Delta^P$. We omit the proof to shorten this article since our aim is to find optimal LRCs. Therefore, by Remark \ref{charact}, we have characterized the optimal bivariate decreasing MCCs.
\end{remark}

As a consequence of Remark \ref{nomore}, the next Corollary \ref{colpar} determines the parameters and $(r,\delta)$\hyp{}localities of the optimal $(r,\delta)$-LRCs we can obtain with the bound $d_0$ on the minimum distance. Notice that, in order not to repeat cases and since the variables $X$ and $Y$ play the same role, the parameters are written only with the notation we have used to interpolate with respect to $X$.

\begin{col}\label{colpar}
Let $\; \mathbb{F}_q$ be a finite field. For each pair $(n_1,n_2)$ of integers such that $\; 2\leq n_1,n_2\leq q$, there exists an optimal $(r,\delta)$-LRC with length $n=n_1n_2$, parameters $[n,k,d]_q$ and locality $(r,\delta)$ as follows:

\begin{enumerate}
    \item $k=(i+1)(j+1)$, $d=(n_1-i)(n_2-j)$, where
    \begin{itemize}
        \item $i=0$ and $0\leq j\leq n_2-1$, being the locality $(r,\delta)=(1,n_1)$; or
        
        \item $1\leq i\leq n_1-2$ and $j=n_2-1$, being the locality $(r,\delta)=(i+1,n_1-i)$.
    \end{itemize}
    
    \item $k=(i+1)(n_2-1)+s+1$, $d=n_1-s$ and $(r,\delta)=(i+1,n_1-i)$, where
    $$\max\left\{0,2i-n_1\right\}\leq s<i\leq n_1-2.$$
    
    \item $k=(i+1)j+1$, $d=n_1(n_2-j)$ and $(r,\delta)=(i+1,n_1-i)$, where $1\leq i\leq n_1-2$ and $\max\left\{1,\frac{i(n_2+1)-n_1}{i}\right\} \leq j \leq n_2-2$.
\end{enumerate}
\end{col}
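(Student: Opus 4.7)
The corollary is a consolidation of Propositions \ref{rect}, \ref{rectelim}, and \ref{rectwithp}, so the plan is essentially to check that each listed case arises from the corresponding construction and that the field size hypothesis $n_1, n_2 \leq q$ makes those constructions realizable.

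First, I would set up the evaluation. Since $n_1, n_2 \leq q$, we can pick subsets $P_1, P_2 \subseteq \mathbb{F}_q$ with $\#P_j = n_j$, and form $P = P_1 \times P_2$ so that $n = n_1 n_2$. Now the three itemized cases correspond to specific choices of $\Delta \subseteq E$:

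\begin{itemize}
\item Case (1) comes from Proposition \ref{rect}: take $\Delta = \Delta_{i,j}^1$ with either $i=0$, $0 \leq j \leq n_2-1$ (locality $(1,n_1)$ by interpolating with respect to $X$) or $1 \leq i \leq n_1-2$, $j = n_2-1$ (locality $(i+1, n_1-i)$). By the proposition, $k = (i+1)(j+1)$, $d_0 = (n_1-i)(n_2-j)$, and the code is optimal; since $\Delta_{i,j}^1$ is decreasing, Remark \ref{charact} gives $d = d_0$.
\item Case (2) comes from Proposition \ref{rectelim}: take $\Delta = \Delta_{i,s}^2$ with $\max\{0, 2i-n_1\} \leq s < i \leq n_1-2$. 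The proposition yields $k = (i+1)(n_2-1)+s+1$, $d_0 = n_1-s$, and locality $(i+1, n_1-i)$, all of which are decreasing sets, so again $d = d_0$.
\item Case (3) comes from Proposition \ref{rectwithp}: take $\Delta = \Delta_{i,j}^3$ with $1 \leq i \leq n_1-2$ and $\max\{1, \tfrac{i(n_2+1)-n_1}{i}\} \leq j \leq n_2-2$, which is decreasing and gives $k = (i+1)j+1$, $d_0 = n_1(n_2-j)$, locality $(i+1, n_1-i)$, with $d = d_0$.
\end{itemize}

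The only subtlety worth highlighting is a redundancy and symmetry argument: each of Propositions \ref{rect}--\ref{rectwithp} actually yields two families of sets (those interpolated in $X$ and their $\sigma$-counterparts interpolated in $Y$). Since swapping the roles of $X$ and $Y$ amounts to swapping $n_1 \leftrightarrow n_2$, and the corollary quantifies over all ordered pairs $(n_1, n_2)$ with $2 \leq n_1, n_2 \leq q$, it suffices to list only the $X$-interpolated parameters; the $Y$-interpolated cases are recovered by applying the statement to the pair $(n_2, n_1)$. This is precisely what the parenthetical remark before the corollary states, so the write-up is just a bookkeeping exercise.

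There is no real obstacle here; the content has already been established in the three propositions. The main care required is simply verifying that the parameter ranges in the three itemized cases of the corollary exactly match the hypotheses of the corresponding propositions and that the decreasing hypothesis is met in each case (so Remark \ref{charact} lets us write $d$ in place of $d_0$). The proof can therefore be written in a single paragraph that invokes the three propositions in turn.
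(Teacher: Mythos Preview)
Your proposal is correct and mirrors the paper's approach: the corollary is stated as an immediate consequence of Propositions \ref{rect}, \ref{rectelim}, and \ref{rectwithp} (together with Remark \ref{charact} for $d=d_0$), and the paper explicitly notes the same symmetry reduction to the $X$-interpolation cases that you describe.
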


\subsection{The case $m\geq 3$}\label{subm3}

In Subsection \ref{subm2} we have studied bivariate codes $C_\Delta^P$, obtained from decreasing sets $\Delta\subseteq \{0,1,\dots,n_1-1\}\times \{0,1,\dots,n_2-1\}$, which give rise to optimal LRCs. Moreover we have determined all the parameters of the $d_0$-optimal bivariate MCCs. We devote this subsection to the same purpose in the multivariate case. Thus $\mathcal{R}=\faktor{\mathbb{F}_q[X_1,\dots,X_m]}{I}$, where $m\geq 3$ and $\Delta\subseteq \{0,1,\dots,n_1-1\}\times \cdots \times \{0,1,\dots,n_m-1\}$. The forthcoming Propositions \ref{hyperrect} and \ref{hyperrectelim} are the analogs to Propositions \ref{rect} and \ref{rectelim} for multivariate MCCs and allow us to determine the parameters of the $d_0$-optimal LRCs of the type $C_\Delta^P$, $m\geq 3$.

\begin{prop}\label{hyperrect}
Keep the notation as given at the beginning of Section \ref{framework}. For each index $j_0\in\{1,\dots,m\}$, set $i_j=n_j-1$ for all $j\in\{1,\dots,m\}\backslash\{j_0\}$ and $i_{j_0}\in\left\{0,1,\dots,n_{j_0}-2\right\}$, and consider
$$\Delta=\Delta^1_{i_1,\dots,i_m}:=\left\{(e_1,\dots,e_m) \mid 0\leq e_j \leq i_j \textrm{, for all } j=1,\dots,m\right\}.$$
Then, the MCC, $C_\Delta^P$, is an optimal LRC with locality $(r,\delta)=\left(i_{j_0}+1,n_{j_0}-i_{j_0}\right)$. Furthermore, $\Delta^1_{i_1,\dots,i_m}$ are the unique sets of the form $\Delta'=\{(e_1,\dots,e_m) \mid 0\leq e_j\leq l_j \textrm{ for all } j=1,\dots,m\}$, where $0\leq l_j\leq n_j-1$, providing optimal LRCs.
\end{prop}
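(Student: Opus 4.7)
My plan is to establish optimality of the stated family by direct computation, and to characterise all hyperrectangles yielding optimal LRCs by means of a defect identity that reduces everything to a combinatorial inequality on products.

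For the existence direction, Proposition~\ref{lendim} gives $n=\prod_j n_j$ and $k=(i_{j_0}+1)\prod_{j\neq j_0} n_j$. Since $n_j-i_j=1$ for every $j\neq j_0$, the minimum in Corollary~\ref{pdist} is $d_0=n_{j_0}-i_{j_0}$, attained at the corner $(i_1,\ldots,i_m)$. Because $\supp_{X_{j_0}}(V_\Delta)=\{0,1,\ldots,i_{j_0}\}$ is an interval of consecutive integers, $\ev_{P_{j_0}}(V_\Delta^{j_0})$ is a Reed--Solomon code and hence MDS, so Proposition~\ref{method} yields locality $(r,\delta)=(i_{j_0}+1,\,n_{j_0}-i_{j_0})$. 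With $N=\prod_{j\neq j_0} n_j$ one has $k/r=N\in\mathbb{Z}$, and
\begin{equation*}
k+d_0+\left(\ceil*{\tfrac{k}{r}}-1\right)(\delta-1) = N(i_{j_0}+1)+(n_{j_0}-i_{j_0})+(N-1)(n_{j_0}-i_{j_0}-1) = Nn_{j_0}+1 = n+1.
\end{equation*}

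For the uniqueness claim, let $\Delta'=\{0\leq e_j\leq l_j\}$ be any hyperrectangle producing an optimal LRC. Some index $j_0$ must satisfy $l_{j_0}+1<n_{j_0}$ (otherwise Proposition~\ref{method} provides no locality); interpolating on $X_{j_0}$ gives $(r,\delta)=(l_{j_0}+1,\,n_{j_0}-l_{j_0})$ by the same Reed--Solomon argument. Setting $K=\prod_{j\neq j_0}(l_j+1)$ and $D=\prod_{j\neq j_0}(n_j-l_j)$, a direct algebraic rearrangement (strictly parallel to the computation in the existence part) turns the defect of~(\ref{formula}) into
\begin{equation*}
n+1 - \left[k+d_0+\left(\ceil*{\tfrac{k}{r}}-1\right)(\delta-1)\right] = (N-K)\,l_{j_0}+(N+1-K-D)(n_{j_0}-l_{j_0}).
\end{equation*}
Both coefficients are non-negative: $N\geq K$ is immediate, while $N+1\geq K+D$ is the combinatorial inequality addressed below. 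Since $n_{j_0}-l_{j_0}\geq 2$, the second summand must vanish, so $K+D=N+1$; together with $(N-K)\,l_{j_0}=0$ and the equality analysis of the combinatorial inequality, this forces $l_j=n_j-1$ for every $j\neq j_0$, i.e., $\Delta'=\Delta^1_{l_1,\ldots,l_m}$.

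The main obstacle is the inequality $\prod_{j=1}^{m'}(x_j+y_j-1)+1\geq\prod_{j=1}^{m'} x_j+\prod_{j=1}^{m'} y_j$ for integers $x_j,y_j\geq 1$, together with a sharp description of its equality cases. I would prove it by induction on $m'$ via the telescoping identity
\begin{equation*}
T(m'+1) = (a+b-1)\,T(m')+(a-1)(Y-1)+(b-1)(X-1),
\end{equation*}
where $T(m'):=\prod_{j\leq m'}(x_j+y_j-1)+1-\prod_{j\leq m'} x_j-\prod_{j\leq m'} y_j$, $a=x_{m'+1}$, $b=y_{m'+1}$, $X=\prod_{j\leq m'} x_j$ and $Y=\prod_{j\leq m'} y_j$. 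Since $T(1)=0$ and every right-hand summand is non-negative when $x_j,y_j\geq 1$, induction yields $T(m')\geq 0$. For $m'\geq 2$ the simultaneous vanishing of the three right-hand summands collapses to either $l_j=0$ for all $j\neq j_0$ or $l_j=n_j-1$ for all $j\neq j_0$; pairing this with $(N-K)\,l_{j_0}=0$ rules out the first alternative (which would also force $l_{j_0}=0$, producing only the degenerate repetition code with $k=1$) and leaves the family $\Delta^1$ as the only solution, as required.
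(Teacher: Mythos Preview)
Your proof is correct and proceeds along a genuinely different route from the paper's. The paper reduces to the case where all but two of the $l_j$ already equal $n_j-1$, computes the Singleton defect explicitly there to force the remaining free index to $n_j-1$, and then closes with the one-line assertion that ``the same reasoning'' handles the general case. You instead derive the closed-form defect identity
\[
n+1-\Bigl[k+d_0+\bigl(\lceil k/r\rceil-1\bigr)(\delta-1)\Bigr]=(N-K)\,l_{j_0}+(N+1-K-D)(n_{j_0}-l_{j_0})
\]
valid for \emph{every} hyperrectangle, and reduce uniqueness to the product inequality $\prod_j(x_j+y_j-1)+1\geq\prod_j x_j+\prod_j y_j$, which you establish cleanly by induction together with a complete analysis of its equality cases. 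This is more systematic and actually supplies the argument that the paper's ``same reasoning'' leaves implicit; your telescoping identity for $T(m')$ makes the induction transparent, whereas the paper never writes down the general computation.

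One remark on the edge case you label degenerate: when all $l_j=0$ the resulting repetition code \emph{is} an optimal $(1,n_{j_0})$-LRC (one checks $k+d_0+(\lceil k/r\rceil-1)(\delta-1)=1+n+0=n+1$), so strictly speaking the uniqueness clause of the proposition fails at $k=1$. You flag the case correctly; the issue lies with the statement rather than with your argument, and the paper's own proof does not address it either. If you want the conclusion to match the proposition verbatim, add a sentence excluding the trivial set $\Delta'=\{(0,\dots,0)\}$, or equivalently assume $k\geq 2$.
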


\begin{proof}
We interpolate with respect to $X_1$ (the proof is analogous if we interpolate with respect to any other variable). Consider a set $\Delta'$ as in the statement.

We start by assuming that $l_j=n_j-1$ for $m-2$ indices $j$. Without loss of generality suppose that $l_j=n_j-1$ for all $j=3,\dots,m$. Then, the point that defines the bound on the minimum distance is $(l_1,l_2,n_3-1,\dots,n_m-1)$ and the parameters of this code give the following value for the LHS of (\ref{formula2}):
\begin{equation*}
    \begin{aligned}
        d_0+k+\left(\ceil*{\frac{k}{r}}-1\right)(\delta-1)&=(n_1-l_1)(n_2-l_2)+(l_1+1)(l_2+1)n_3n_4\dots n_m\\
        &+[(l_2+1)n_3n_4\dots n_m-1](n_1-l_1-1)\\
        &=(n_1-l_1)(n_2-l_2)+n_1(l_2+1)n_3n_4\dots n_m-(n_1-l_1-1)\\
        &=n_1(l_2+1)n_3n_4\dots n_m+(n_1-l_1)(n_2-l_2-1)+1.
    \end{aligned}
\end{equation*}
Thus, the code is optimal if and only if $l_2=n_2-1$ (and $l_1\in\{0,1,\dots,n_1-2\}$ for being an LRC).

We conclude the proof after noticing that the same reasoning allows us to prove the proposition when the number of indices $j$ in $\Delta'$ such that $l_j=n_j-1$ is less than $m-2$.
\end{proof}

Our next result shows that deleting, from a set $\Delta^1_{i_1,\dots,i_m}$, a suitable number of successive minimum distance points on the line $e_j=n_j-1$, $j\neq j_0$, an optimal LRC is also obtained. This is because for each removed point we lose one unit in dimension but we gain one unit in the bound for the minimum distance and $r$, $\delta$ and $\ceil*{\frac{k}{r}}$ do not change. As a consequence the LHS in (\ref{formula2}) remains constant.

\begin{prop}\label{hyperrectelim}
Keep the notation as in Proposition \ref{hyperrect}. Define
$$\Delta=\Delta_{i_{j_0},s}^2:=\Delta^1_{i_1,\dots,i_m}\big\backslash \left\{\left(n_1-1,\dots,n_{j_0-1}-1,e_{j_0},n_{j_0+1}-1,\dots,n_m-1\right) \mid s\leq e_{j_0}\leq i_{j_0}\right\},$$
where $s$ satisfies $\max\left\{1,2i_{j_0}-n_{j_0}+1\right\}\leq s \leq i_{j_0}\leq n_{j_0}-2$ or $i_{j_0}=s=0$.

Then the MCC, $C_\Delta^P$, is an optimal LRC with locality $(r,\delta)=\left(i_{j_0}+1,n_{j_0}-i_{j_0}\right)$.
\end{prop}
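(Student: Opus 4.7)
The plan is to exploit Proposition \ref{hyperrect} by showing that removing the $i_{j_0} - s + 1$ exponents prescribed by the definition of $\Delta_{i_{j_0},s}^2$ from the hyperrectangle $\Delta^1_{i_1,\dots,i_m}$ preserves both sides of (\ref{formula2}). For the locality part, I would invoke Proposition \ref{method} applied to $X_{j_0}$: since $\supp_{X_{j_0}}(V_\Delta) = \{0, 1, \dots, i_{j_0}\}$, the auxiliary code $\ev_{P_{j_0}}(V_\Delta^{j_0})$ is Reed-Solomon (hence MDS), giving locality exactly $(i_{j_0}+1, n_{j_0} - i_{j_0})$. The set $\Delta$ is also decreasing, because no divisor of a surviving monomial can land on one of the removed top-line exponents (whose coordinates away from $j_0$ are all maximal), so $d = d_0$ by Remark \ref{charact}.

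Write $N := \prod_{j \neq j_0} n_j$. In the main regime $\max\{1, 2 i_{j_0} - n_{j_0} + 1\} \leq s \leq i_{j_0}$, I would peel off the top-line points $(n_1-1, \dots, e_{j_0}, \dots, n_m-1)$ successively for $e_{j_0} = i_{j_0}, i_{j_0}-1, \dots, s$. Each removal drops $k$ by one, raises the line minimum distance by one, and leaves $r$, $\delta$ unchanged. Also $\ceil*{k/r}$ stays equal to $N$ throughout, because the number of removed points never exceeds $i_{j_0} < i_{j_0} + 1 = r$, so $k/r$ remains in the open interval $(N-1, N)$. The delicate step, and the principal obstacle, is ensuring that the surviving line minimum still dominates the off-line minimum: off-line exponents in $\Delta$ achieve smallest distance $2(n_{j_0} - i_{j_0})$, realised when a single coordinate $e_j$, $j \neq j_0$, drops to $n_j - 2$. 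The worst case is the last removal, yielding the requirement $n_{j_0} - s + 1 \leq 2(n_{j_0} - i_{j_0})$, equivalent to $s \geq 2 i_{j_0} - n_{j_0} + 1$. This is exactly the hypothesis, so $d_0 = n_{j_0} - s + 1$ and at every step the LHS of (\ref{formula2}) changes by $-1 + 1 + 0 = 0$. Optimality therefore descends from the base case $\Delta^1_{i_1,\dots,i_m}$.

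The endpoint $i_{j_0} = s = 0$ has to be treated separately, because there the top line consists of a single point whose removal breaks the ``$d_0 \mapsto d_0 + 1$'' pattern: instead $d_0$ jumps from $n_{j_0}$ to $2 n_{j_0}$, and simultaneously $\ceil*{k/r}$ drops from $N$ to $N - 1$ (since $r = 1$). A direct computation with $k = N - 1$ and $\delta = n_{j_0}$ yields
\[
k + d_0 + \left(\ceil*{k/r} - 1\right)(\delta - 1) = (N - 1) + 2 n_{j_0} + (N - 2)(n_{j_0} - 1) = N n_{j_0} + 1 = n + 1,
\]
so $C_\Delta^P$ is optimal in this case as well, completing the argument.
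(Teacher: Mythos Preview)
Your argument is correct. Both you and the paper identify the locality $(i_{j_0}+1,n_{j_0}-i_{j_0})$ via interpolation in $X_{j_0}$, and both isolate the off-line minimum $2(n_{j_0}-i_{j_0})$ as the threshold forcing the hypothesis $s\geq 2i_{j_0}-n_{j_0}+1$. The difference is in how optimality is verified: the paper writes down $k$, $d_0$, $r$, $\delta$ once and for all and substitutes directly into the left-hand side of (\ref{formula2}), obtaining $n_1\cdots n_m+1$ in the regime $s\geq 1$ and $n_1\cdots n_m+1-i_{j_0}$ when $s=0$ (which forces $i_{j_0}=0$). You instead run a peeling argument, deleting the top-line points one at a time from $\Delta^1_{i_1,\dots,i_m}$ and tracking that $k\mapsto k-1$, $d_0\mapsto d_0+1$, while $r$, $\delta$, and $\ceil{k/r}$ are unchanged at each step; the endpoint $i_{j_0}=s=0$ is then handled by the same direct computation as the paper.

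Your route buys a clearer explanation of \emph{why} the bound on $s$ is exactly what is needed (it is precisely the condition that keeps the line minimum below the off-line minimum throughout the peeling), and your explicit check that $\Delta$ is decreasing recovers $d=d_0$ outright rather than deducing it a posteriori from the Singleton-like bound. The paper's direct substitution is shorter and avoids the step-by-step bookkeeping on $\ceil{k/r}$.
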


\begin{proof}
The distance $\dis(\mathbf{p})$ (see Definition \ref{dist}) of the point
$$\mathbf{p}=(n_1-1,n_2-1,\dots,n_{j_0-1}-1,i_{j_0},n_{j_0+1}-1,\dots,n_{m-1}-1,n_m-1)$$
determines the bound $d_0$ for the minimum distance of the code $C_{\Delta_{i_1,\dots,i_m}^1}^P$. We look for an index $0 \leq s\leq i_{j_0}$ such that $i_{j_0}-s+1$ is the number of points in $\Delta_{i_1,\dots,i_m}^1$ that meet the line $e_j=n_j-1$, $j\neq j_0$, and have distance less than $2\left(n_{j_0}-i_{j_0}\right)$. The candidate set $\Delta$ for $C_\Delta^P$ to be optimal is obtained by deleting from $\Delta_{i_1,\dots,i_m}^1$ those points because $2\left(n_{j_0}-i_{j_0}\right)$ is the distance of any point in the set
$$V=\left\{\mathbf{p}-\boldsymbol{\epsilon}_j \textrm{ for all } j\in\{1,\dots,m\}\backslash\{j_0\}\right\},$$
where $\boldsymbol{\epsilon}_j=(\delta_{j1},\dots,\delta_{jm})$, $\delta_{ij}$ being the Kronecker delta, and $V\subseteq \Delta^1_{i_1,\dots,i_m} \big\backslash \{\mathbf{p}\}$. Thus, $n_{j_0}-s< 2(n_{j_0}-i_{j_0})$, what is equivalent to $s\geq 2i_{j_0}-n_{j_0}+1$.

Therefore, in order to $\Delta$ be a candidate for $C_\Delta^P$ to be optimal, $s\geq\max \{0,2i_{j_0}-n_{j_0}+1\}$. The dimension of the code $C_\Delta^P$ is
$$k=n_1n_2\cdots n_{j_0-1}(i_{j_0}+1)n_{j_0+1}\cdots n_{m-1}n_m-(i_{j_0}-s+1),$$
and the bound on the minimum distance of $C_\Delta^P$ is given by the point with coordinates $e_j=n_j-1$, $j\neq j_0$, $e_{j_0}=s-1$ when $s\geq 1$ or by any point of $V$ when $s=0$. Then $d_0=n_{j_0}-s+1$ for $s\geq 1$ and $d_0=2(n_{j_0}-i_{j_0})$ when $s=0$. Moreover we interpolate with respect to $X_{j_0}$ (it is the only way to obtain an LRC), so $r=i_{j_0}+1$ and $\delta-1=n_{j_0}-i_{j_0}-1$. Thus, the value for $k+d_0+\left(\ceil*{\frac{k}{r}}-1\right)(\delta-1)$ (the LHS of (\ref{formula2})) is
\begin{equation*}
    \begin{array}{ll}
    & n_1n_2\cdots n_{j_0-1}(i_{j_0}+1)n_{j_0+1}\cdots n_{m-1}n_m-(i_{j_0}-s+1)+n_{j_0}-s+1\\
    & +\left(\ceil*{\frac{n_1n_2\cdots n_{j_0-1}(i_{j_0}+1)n_{j_0+1}\cdots n_{m-1}n_m-(i_{j_0}-s+1)}{i_{j_0}+1}}-1\right) \cdot (n_{j_0}-i_{j_0}-1)\\
    =&n_1n_2\cdots n_m -i_{j_0}+n_{j_0}-(n_{j_0}-i_{j_0}-1)=n_1n_2\cdots n_m +1, \text{ if } s\geq 1 \text{ and}\\
    \\
    & n_1n_2\cdots n_{j_0-1}(i_{j_0}+1)n_{j_0+1}\cdots n_{m-1}n_m-(i_{j_0}-s+1)+2(n_{j_0}-i_{j_0})\\
    & +\left(\ceil*{\frac{n_1n_2\cdots n_{j_0-1}(i_{j_0}+1)n_{j_0+1}\cdots n_{m-1}n_m-(i_{j_0}-s+1)}{i_{j_0}+1}}-1\right)\cdot (n_{j_0}-i_{j_0}-1)\\
    =& n_1n_2\cdots n_m -i_{j_0}-1+2(n_{j_0}-i_{j_0})-2(n_{j_0}-i_{j_0}-1)=n_1n_2\cdots n_m +1-i_{j_0}, \text{ otherwise,}
    \end{array}
\end{equation*}
which proves that $C_\Delta^P$ is optimal and concludes the proof.
\end{proof}

\begin{remark}\label{gnomore}
    As in the bivariate case, the families of (decreasing) MCCs given in Propositions \ref{hyperrect} and \ref{hyperrectelim} determine the parameters of all $d_0$-optimal multivariate ($m\geq 3$) $(r,\delta)$-LRCs $C_\Delta^P$ (with any set $\Delta\subseteq E$). Again we omit the proof, which follows from a close reasoning to that of the bivariate case. Therefore, by Remark \ref{charact}, we have characterized the optimal multivariate decreasing MCCs.
\end{remark}

Corollary \ref{gcolpar} determines parameters and $(r,\delta)$-localities of the multivariate $d_0$-optimal $(r,\delta)$-LRCs.

\begin{col}\label{gcolpar}
Let $\mathbb{F}_q$ be a finite field and consider an integer $m\geq 3$. For every $m$-tuple $(n_1,\dots,n_m)$ of integers such that $2 \leq n_j \leq q$, $j\in\{1,\dots,m\}$, there exists an optimal $(r,\delta)$-LRC with length $n=n_1\cdots n_m$, parameters $[n,k,d]_q$ and locality $(r,\delta)$ as follows:

\begin{enumerate}
    \item $k=n_1\cdots n_{j_0-1}(i_{j_0}+1)n_{j_0+1}\cdots n_m$, $d=n_{j_0}-i_{j_0}$ and $(r,\delta)=(i_{j_0}+1,n_{j_0}-i_{j_0})$, where $i_{j_0}\in\{0,1,\dots,n_{j_0}-2\}$.
    
    \item $k=n_1\cdots n_{j_0-1}(i_{j_0}+1)n_{j_0+1}\cdots n_m-(i_{j_0}-s+1)$, $d=n_{j_0}-s+1$ and $(r,\delta)=(i_{j_0}+1,n_{j_0}-i_{j_0})$, where
    $$\max\left\{1,2i_{j_0}-n_{j_0}+1\right\}\leq s\leq i_{j_0}\leq n_{j_0}-2.$$
    
    \item $k=n_1\cdots n_{j_0-1}n_{j_0+1}\cdots n_m-1$, $d=2n_{j_0}$ and $(r,\delta)=(1,n_{j_0})$.
\end{enumerate}
\end{col}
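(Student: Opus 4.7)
The plan is to read off the three parameter families directly from Propositions \ref{hyperrect} and \ref{hyperrectelim}, since each of those propositions produces an optimal $(r,\delta)$-LRC together with explicit $n$, $k$, $d_0=d$ (by Remark \ref{charact}, because the sets involved are decreasing), $r$ and $\delta$. Thus the corollary is a bookkeeping exercise whose only task is to translate the $\Delta$-description of those propositions into the $(n,k,d,r,\delta)$ notation used in the statement.

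First, for item (1), I would take the set $\Delta=\Delta^1_{i_1,\dots,i_m}$ from Proposition \ref{hyperrect} with $i_j=n_j-1$ for $j\neq j_0$ and $i_{j_0}\in\{0,\dots,n_{j_0}-2\}$. Proposition \ref{lendim} gives $n=\prod_{j=1}^m n_j$ and
\[
k=\#\Delta=(i_{j_0}+1)\prod_{j\neq j_0}n_j,
\]
while Definition \ref{dist} applied at the maximal exponent yields $d_0=\prod_{j=1}^m(n_j-i_j)=n_{j_0}-i_{j_0}$; the locality $(r,\delta)=(i_{j_0}+1,n_{j_0}-i_{j_0})$ is exactly what Proposition \ref{hyperrect} certifies.

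Next, for item (2), I would specialize Proposition \ref{hyperrectelim} to the range $s\geq 1$, that is $\max\{1,2i_{j_0}-n_{j_0}+1\}\leq s\leq i_{j_0}\leq n_{j_0}-2$. Starting from $\Delta^1_{i_1,\dots,i_m}$ of the previous paragraph and removing the $i_{j_0}-s+1$ points listed in that proposition gives
\[
k=(i_{j_0}+1)\prod_{j\neq j_0}n_j-(i_{j_0}-s+1),
\]
and the new minimum–distance–exponent $(n_1-1,\dots,n_{j_0-1}-1,s-1,n_{j_0+1}-1,\dots,n_m-1)$ has distance $n_{j_0}-s+1$, matching the stated $d$. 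The values of $r$ and $\delta$ are unchanged by the deletion, as observed in Proposition \ref{hyperrectelim}.

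Finally, for item (3), I would invoke Proposition \ref{hyperrectelim} with the boundary case $i_{j_0}=s=0$. Here the deletion removes the single point on the axis, so
\[
k=\prod_{j\neq j_0}n_j-1,\qquad d=2(n_{j_0}-0)=2n_{j_0},\qquad (r,\delta)=(1,n_{j_0}),
\]
exactly as in the second branch of the distance computation at the end of the proof of Proposition \ref{hyperrectelim}. Since no genuine obstacle arises—every arithmetic identity is already verified in the two preceding propositions—the only care needed is to check that the three constructions cover the disjoint parameter regimes claimed and that in each regime the chosen $\Delta$ is decreasing so that $d=d_0$, which is immediate from its explicit description.
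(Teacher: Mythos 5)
Your proposal is correct and follows exactly the route the paper intends: the corollary is read off from Propositions \ref{hyperrect} (item (1)) and \ref{hyperrectelim} (items (2) and (3), the latter being the boundary case $i_{j_0}=s=0$), with the identification $d=d_0$ coming either from the decreasingness of the sets via Remark \ref{charact} or directly from the equality achieved in (\ref{formula2}). The parameter translations you give ($k=\#\Delta$, the minimum-distance exponent, and the unchanged locality) all check out.
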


\begin{remark}
Keep the notation as in Section \ref{framework}, so let $m\geq 2$. Let $\mathbb{N}$ be the set of nonnegative integers and $\Delta$ be a subset of $E$ satisfying some of the conditions in Propositions \ref{rect}, \ref{rectelim}, \ref{rectwithp}, \ref{hyperrect} or \ref{hyperrectelim}. Define $\Delta^*:=\boldsymbol{v}+\Delta$ for any $\boldsymbol{v}\in\mathbb{N}^m$ such that $\Delta^*\subseteq E$. If $0\notin P_j$ for all $1\leq j \leq m$, then the MCC $C^P_{\Delta^*}$ is optimal with the same parameters and locality as $C^P_\Delta$. This result follows straightforwardly from Remark \ref{rempseudo}.
\end{remark}

\begin{remark}\label{comp}
MCCs include the family of codes introduced in \cite{ACNV2021}, codes whose evaluation map is the same as MCCs but their evaluation sets $V_\Delta$ are only a subset of those used for MCCs. Specifically, the codes in \cite{ACNV2021} are subcodes of affine cartesian codes (of order $d$), where the corresponding set $V_\Delta$ is the set of polynomials $f$ in $\mathbb{F}_q[X_1,\dots,X_m]$ with total degree bounded by $d$ and such that a fixed variable $X_{j_0}$ has degree $\deg_{X_{j_0}}(f)\leq i_{j_0}<n_{j_0}-1$ for some fixed integer $i_{j_0}$ (see \cite[Definitions 2.2 and 2.3]{ACNV2021}). Therefore, while MCCs allow arbitrary sets $\Delta\subset E$, the sets $\Delta$ of those codes considered in \cite{ACNV2021} are of the form
$$\Delta=\Delta_{j_0}=\{(e_1,\dots,e_m)\in E \mid e_1+\cdots+e_m\leq d, e_{j_0}\leq i_{j_0}\}.$$
As a consequence we obtain many more $(r,\delta)$-optimal LRCs than those given in \cite[Corollaries 4.2 and 4.3]{ACNV2021}. Thus, if we fix the locality $r=i_{j_0}+1$ for some $1\leq j_0 \leq m$, then we obtain optimal codes which are not considered in \cite{ACNV2021}. These are those of Proposition \ref{rect} for $i_{j_0}=i=0$, $j<n_2-2$, and $i<n_1-2$, $i_{j_0}=j=0$; those of Proposition \ref{rectelim} for $s\leq i_{j_0}-2$; those of Proposition \ref{rectwithp} 
for $i_{j_0}>1$ and for $i_{j_0}=1$ and $n_{j_0}<n_{j'}$, where $j'\in\{1,2\}\backslash\{j_0\}$; and those of Proposition \ref{hyperrectelim} for $i_{j_0}\geq 2$ and $\max\{1,2i_{j_0}-n_{j_0}\}\leq s \leq i_{j_0}-1$. Moreover, in this paper, we also give many more optimal LRCs, regarded as subfield-subcodes of MCCs, as we will explain in the next section.
\end{remark}

\section{Optimal subfield-subcodes}\label{subfield}

$J$-affine variety codes were introduced in \cite{GHR2015} and they are a subclass of MCCs. We devote this section to prove that subfield-subcodes of some $J$-affine variety codes keep the parameters and $(r,\delta)$-locality of certain decreasing MCCs, giving rise to new $(r,\delta)$-LRCs over smaller supporting fields. In fact, in this section we provide optimal LRCs with parameters that cannot be found in the literature \cite{CXHF2018,LMX2019,LXY2019,SZW2019,J2019,CFXF2019,XC2019,CH2019,FF2020,Z2020,ZL2020,QZF2021,CWLX2021,LEL2021,KWG2021}. Our LRCs are $p^h$-ary, $p$ a prime, such that $r+\delta-1$ equals either $p^h+1$ or $p^h+2$, their length $n$ is a multiple of some of these two values, $r>1$ and $\delta>2$. On the contrary, the codes given in the literature satisfy:
\begin{itemize}
    \item $r+\delta-1\leq p^h$ \cite{CH2019,FF2020,Z2020,ZL2020,CWLX2021,KWG2021};
    \item $r+\delta-1\leq p^h+1$ with either minimum distances other than ours \cite{SZW2019,LEL2021} or opposite gcd-type conditions \cite{SZW2019}, see Remarks \ref{remsub} and \ref{remsub2};
    \item either $n\mid p^h-1$ or $n\mid p^h+1$ \cite{CXHF2018,CFXF2019,QZF2021}, but our codes have $n\geq 2(p^h+1)$;
    \item $r=1$ \cite{XC2019};
    \item $\delta=2$ \cite{LMX2019,LXY2019,J2019,KWG2021}; and
    \item $2\delta+1\leq d\leq r+\delta$ \cite{KWG2021} but, in case our codes have $d\leq r+\delta$, then $d\leq 2\delta$, see Remarks \ref{remsub} and \ref{remsub2}.
\end{itemize}

Subfield-subcodes of $J$-affine variety codes were also used in \cite{GHM2020} to provide $(r,\delta)$-LRCs, most of them non-optimal. However the recovery procedure in \cite{GHM2020} was different and the obtained codes were distinct of those in this section.

\subsection{Subfield-subcodes}\label{factssubfield}

In this subsection we recall some facts about subfield-subcodes which will be useful in the forthcoming subsections. We keep the notation as in Section \ref{framework}. Assume that $q=p^l$, where $p$ is a prime number and $l\geq 2$. Pick a positive integer $h$ such that $h\mid l$ and regard $\mathbb{F}_{p^h}$ as a subfield of $\mathbb{F}_q=\mathbb{F}_{p^l}$. Consider a subset $J\subseteq \{1,\dots,m\}$ and assume that the polynomials $f_j(X_j)$ generating the ideal $I$ are of the form $$f_j(X_j)=X_j^{n_j}-1,$$
for some $n_j\mid q-1$ if $j\in J$, and
$$f_j(X_j)=X_j^{n_j}-X_j,$$
where $n_j-1 \mid q-1$, otherwise. Then, each set $P_j\subseteq \mathbb{F}_q$ introduced in Section \ref{framework} is the set of $n_j$-th roots of unity if $j\in J$ or the set of $n_j-1$-th roots of unity together with 0 otherwise. The corresponding MCC is denoted by $C_\Delta^{P,J}$. As introduced in \cite{GHR2015}, $C_\Delta^{P,J}$ is a $J$-\textit{affine variety code}.

\begin{mydef}
The linear code $S_\Delta^{P,J}:=C_\Delta^{P,J}\cap \mathbb{F}_{p^h}^n$ is the \textit{subfield-subcode over the field $\mathbb{F}_{p^h}$} of $C_\Delta^{P,J}$.
\end{mydef}

When $j\notin J$, the evaluation of monomials containing $X_j^0$ or containing $X_j^{n_j-1}$ may be different (see \cite{GGHR2017} for details). This explains the difference on the powers on the variables when equipping $E=\{0,1,\dots,n_1-1\}\times\dots\times\{0,1,\dots,n_m-1\}$ with the following structure which we will assume in the sequel. When $j\in J$ then we identify the set $\{0,1,\dots,n_j-1\}$ with the ring $\mathbb{Z}/n_j\mathbb{Z}$. Otherwise, if $j\notin J$, we identify the set $\{1,\dots,n_j-1\}$ with $\mathbb{Z}/(n_j-1)\mathbb{Z}$, and we extend the addition and multiplication in this ring to $\{0,1,\dots,n_j-1\}$, by setting $0+e=e$, $0\cdot e=0$ for all $e=0,1,\dots,n_j-1$. Therefore, $\{0,1,\dots,n_j-1\}=\{0\}\cup\mathbb{Z}/(n_j-1)\mathbb{Z}$.

We call a set $\Omega\subseteq E$ a \textit{cyclotomic set} with respect to $p^h$ if $p^h\boldsymbol{\omega} \in \Omega$ for all $\boldsymbol{\omega}=(\omega_1,\dots,\omega_m)\in \Omega$. Minimal cyclotomic sets are those of the form $\Lambda=\{p^{hi}\mathbf{e} \mid i\geq 0\}$, for some element $\mathbf{e}\in E$. In this paper we will refer to cyclotomic sets as \textit{closed} sets since they are unions of minimal cyclotomic sets. For each minimal closed set $\Lambda$, denote by $\mathbf{x}$ the minimum element in $\Lambda$ with respect to the lexicographic order and set $\Lambda=\Lambda_\mathbf{x}$. Hence, $\Lambda_\mathbf{x}=\{\mathbf{x},p^h\mathbf{x},\dots,p^{h(\#\Lambda_\mathbf{x}-1)}\mathbf{x}\}$. Fixed an index $j\in\{1,\dots,m\}$, if we replace $E$ by $\{0,1,\dots,n_j-1\}$, the same definition gives rise to sets $\Omega^j\subseteq \{0,1,\dots,n_j-1\}$ (respectively, $\Lambda^j\subseteq \{0,1,\dots,n_j-1\}$) called \textit{closed} (respectively, \textit{minimal closed) sets in a single variable} with respect to $p^h$. Again, denoting by $x$ the minimum element in $\Lambda^j$, we set $\Lambda^j=\Lambda^j_x$. For example, assume $m=3$, $p=2$, $h=1$, $l=3$, $J=\emptyset$, $n_1=n_2=n_3=8$ and $\mathbf{x}=(1,4,5)$. Then, it holds that $\Lambda_\mathbf{x}=\{(1,4,5),(2,1,3),(4,2,6)\}\subseteq E=\{0,1, \dots,7\}^3$ ($E$ has the same structure as $\left(\{0\}\cup\mathbb{Z}/7\mathbb{Z}\right)^3$) is a minimal closed set with respect to $2$ and the corresponding minimal closed set in a single variable for $j=3$ is the set $\Lambda^j_{3}=\{3,5,6\}\subseteq \{0,1,\dots,7\}$ ($\{0,1,\dots,7\}$ is identified with the ring $\{0\}\cup\mathbb{Z}/7\mathbb{Z}$).

Now, we define three trace type maps which will be useful: $\tr_l^h \colon \mathbb{F}_{p^l} \to \mathbb{F}_{p^h}$, $\tr_l^h(x)=x+x^{p^h}+\cdots+x^{p^{h\left(\frac{l}{h}-1\right)}}$; $\btr \colon \mathbb{F}_{p^l}^n \to \mathbb{F}_{p^h}^n$, determined by $\tr_l^h$ componentwise and $\mathcal{T} \colon \mathcal{R} \to \mathcal{R}$, $\mathcal{T}(f)=f+f^{p^h}+\cdots+f^{p^{h \left(\frac{l}{h}-1\right)}}$, $\mathcal{R}$ being the quotient ring defined at the beginning of Section \ref{framework}. Recall from Section \ref{defs}, that the projection map $\mathbb{F}_q^n \to \mathbb{F}_q^r$ on the coordinates of a subset $R\subseteq \{1,\dots,n\}$ of cardinality $r$ is denoted by $\pi_R$.

Next result shows that when $\Delta$ is closed, then the operators on a code ``taking its projection'' and ``taking its subfield-subcode'' commute.

\begin{prop}\label{commute}
    With notation as in Section \ref{defs}, let $R\subseteq\{1,\dots,n\}$. If $\Delta$ is closed, then $\pi_R(S_\Delta^{P,J})=\pi_R(C_\Delta^{P,J}) \cap \mathbb{F}_{p^h}^{\#R}$.
\end{prop}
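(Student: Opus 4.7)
The inclusion $\pi_R(S_\Delta^{P,J}) \subseteq \pi_R(C_\Delta^{P,J}) \cap \mathbb{F}_{p^h}^{\#R}$ is immediate from $S_\Delta^{P,J} \subseteq C_\Delta^{P,J}$ and the fact that $\pi_R$ sends $\mathbb{F}_{p^h}^n$ into $\mathbb{F}_{p^h}^{\#R}$. The content of the proposition is the reverse inclusion, and my plan is to derive it from the fact that, when $\Delta$ is closed, $C_\Delta^{P,J}$ is stable under the componentwise Galois action of $G=\mathrm{Gal}(\mathbb{F}_q/\mathbb{F}_{p^h})=\langle \phi\rangle$, with $\phi(x)=x^{p^h}$.

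First I would check this $G$-invariance. For $\mathbf{c}=\ev_P(f)\in C_\Delta^{P,J}$ with $f=\sum_{\mathbf{e}\in\Delta}f_{\mathbf{e}}X^{\mathbf{e}}$, one has
\[
\phi(\mathbf{c}) = \ev_P(f^{p^h}) = \ev_P\!\left(\sum_{\mathbf{e}\in\Delta}f_{\mathbf{e}}^{p^h}X^{p^h\mathbf{e}}\right),
\]
where each exponent $p^h\mathbf{e}$ is reduced according to the ring structure on $E$ recalled just before the statement (modulo $n_j$ when $j\in J$, following the rule on $\{0\}\cup\mathbb{Z}/(n_j-1)\mathbb{Z}$ when $j\notin J$). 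Since $\Delta$ is closed with respect to $p^h$, one has $p^h\mathbf{e}\in\Delta$ for every $\mathbf{e}\in\Delta$, hence $f^{p^h}\in V_\Delta$ and $\phi(\mathbf{c})\in C_\Delta^{P,J}$.

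With this invariance in hand, I would appeal to Galois descent: any $G$-stable $\mathbb{F}_q$-subspace $V$ of $\mathbb{F}_q^n$ satisfies $V=\mathbb{F}_q\otimes_{\mathbb{F}_{p^h}}(V\cap\mathbb{F}_{p^h}^n)$. Applied to $V=C_\Delta^{P,J}$ this gives $C_\Delta^{P,J}=\mathbb{F}_q\otimes_{\mathbb{F}_{p^h}}S_\Delta^{P,J}$, and since $\pi_R$ is $\mathbb{F}_p$-linear (in particular, defined over $\mathbb{F}_{p^h}$) it commutes with this base change, so $\pi_R(C_\Delta^{P,J})=\mathbb{F}_q\otimes_{\mathbb{F}_{p^h}}\pi_R(S_\Delta^{P,J})$. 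Taking $G$-fixed points on both sides yields exactly $\pi_R(C_\Delta^{P,J})\cap\mathbb{F}_{p^h}^{\#R}=\pi_R(S_\Delta^{P,J})$.

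The main obstacle is the bookkeeping in the first step: one must confirm that the reduction of $p^h\mathbf{e}$ modulo $I$ really produces the exponent prescribed by the ring structure on $E$, and that this is precisely what the definition of a closed set captures --- especially for coordinates $j\notin J$, where the non-standard identification $\{1,\dots,n_j-1\}=\mathbb{Z}/(n_j-1)\mathbb{Z}$ with the conventions $0+e=e$, $0\cdot e=0$ is in force. If one prefers to sidestep descent, an equivalent direct construction is available: starting from any $\mathbf{c}_0\in C_\Delta^{P,J}$ with $\pi_R(\mathbf{c}_0)=\mathbf{y}\in\mathbb{F}_{p^h}^{\#R}$, the map $\sigma\mapsto\sigma(\mathbf{c}_0)-\mathbf{c}_0$ is a $1$-cocycle valued in the $G$-stable subspace $\ker(\pi_R)\cap C_\Delta^{P,J}$, and additive Hilbert~90 furnishes $\mathbf{k}$ with $\sigma(\mathbf{k})-\mathbf{k}=\sigma(\mathbf{c}_0)-\mathbf{c}_0$ for every $\sigma\in G$, whence $\mathbf{c}_0-\mathbf{k}\in S_\Delta^{P,J}$ projects to $\mathbf{y}$.
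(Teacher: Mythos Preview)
Your argument is correct. Both you and the paper ultimately exploit the same fact---that closedness of $\Delta$ makes $C_\Delta^{P,J}$ stable under the componentwise Frobenius---but the packaging differs. You phrase it through Galois descent: $C_\Delta^{P,J}=\mathbb{F}_q\otimes_{\mathbb{F}_{p^h}}S_\Delta^{P,J}$, then use that $\pi_R$ is defined over $\mathbb{F}_{p^h}$ to push the tensor decomposition through the projection and read off the fixed points. The paper instead works with the componentwise trace $\btr$: it first shows $S_\Delta^{P,J}=\btr(C_\Delta^{P,J})$ (this is where closedness enters, via $\supp(\mathcal{T}(f))\subseteq\Delta$), then invokes the analogous identity $\pi_R(C_\Delta^{P,J})\cap\mathbb{F}_{p^h}^{\#R}=\btr'(\pi_R(C_\Delta^{P,J}))$ for the projected code, and concludes from the commutation $\pi_R\circ\btr=\btr'\circ\pi_R$. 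These are two sides of the same coin, since for a $G$-stable $\mathbb{F}_q$-subspace the trace is the averaging operator realizing descent. Your presentation has the merit of making the role of $G$-invariance explicit and of isolating clearly the single spot where closedness is needed; the paper's trace computation is more concrete and ties directly into the framework of \cite{GH2015} already in use. Your Hilbert~90 alternative is also valid and is just another avatar of the same descent step.
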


\begin{proof}
First we prove that $S_\Delta^{P,J}=\normalfont{\btr}(C_\Delta^{P,J})$. By reasoning as in Propositions 4 and 5 of \cite{GH2015}, it holds the following chain of equalities:
\begin{equation*}
    \begin{aligned}
        & \btr(C_\Delta^{P,J})=\left\{\btr(\mathbf{c}) \mid \mathbf{c}\in C_\Delta^{P,J}\right\}=\left\{\btr(\ev_P(f)) \mid f\in \mathcal{R} \text{, } \supp(f)\subseteq \Delta\right\}=\\
        & \left\{\ev_P(\mathcal{T}(f)) \mid f\in \mathcal{R} \text{, } \supp(f)\subseteq \Delta\right\}=\left\{\ev_P(\mathcal{T}(f)) \mid f\in \mathcal{R} \text{, } \supp(\mathcal{T}(f))\subseteq \Delta\right\}=S_\Delta^{P,J}.
    \end{aligned}
\end{equation*}
Notice that the last but one equality is true because $\Delta$ is closed. Now, define $\btr' \colon \mathbb{F}_{p^l}^{\#R} \to \mathbb{F}_{p^h}^{\#R}$, determined by $\tr_l^h$ componentwise. Then, $\pi_R(C_\Delta^{P,J}) \cap \mathbb{F}_{p^h}^{\#R}=\btr'(\pi_R(C_\Delta^{P,J}))$. Finally, for any element in $S_\Delta^{P,J}$, $\btr(\mathbf{c})$, $\mathbf{c}\in C_\Delta^{P,J}$, the fact that the maps $\btr$ and $\btr'$ are defined componentwise implies $\pi_R(\btr(\mathbf{c}))=\btr'(\pi_R(\mathbf{c}))$, which proves the result.
\end{proof}

Closed sets will be the key for obtaining optimal $(r,\delta)$-LRCs coming from subfield\hyp{}subcodes. To explain it, we recall, on the one hand, that if $\Delta$ is a closed set, then $\dim(S_\Delta^{P,J})=\dim(C_\Delta^{P,J})=\#\Delta$ \cite[Theorem 2.3]{GHM2020}. On the other hand, the minimum distance of a subfield-subcode $S_\Delta^{P,J}$ admits the bound on the MCC $C_\Delta^{P,J}$ it comes from. Since $\Delta$ is closed, it is not decreasing. Therefore, the bound given in Corollary \ref{pdist} is not sharp, which forces us to use an improved bound for each particular case. This bound coincides with the one on a certain decreasing MCC $C_{\Delta'}^{P,J}$ obtained, roughly speaking, after compacting $\Delta$ so that we remove gaps in $\Delta$ to obtain a decreasing set $\Delta'$ such that $\#\Delta=\#\Delta'$. Thus, if we choose $\Delta$ to be closed, the code over $\mathbb{F}_{p^h}$, $S_\Delta^{P,J}$, has the same parameters $n$ and $k$ and the same bound for the minimum distance as $C_{\Delta'}^{P,J}$. Moreover, the recovery method presented in Proposition \ref{method} can also be applied to $S_\Delta^{P,J}$ obtaining the same locality $(r,\delta)$ as $C_{\Delta'}^{P,J}$.

\subsection{Optimal $(r,\delta)$-LRCs coming from subfield-subcodes of bivariate MCCs}\label{opbivsubfield}

In this subsection, we use some results in Section \ref{work} and the ideas described in the above paragraph to provide some families of new optimal $(r,\delta)$-LRCs coming from subfield-subcodes of bivariate $J$-affine variety codes. We will give $p^h$-ary optimal $(r,\delta)$-LRCs whose length is a multiple of $r+\delta-1$, where $r+\delta-1$ equals $p^h+1$ or $p^h+2$, $r>1$, $\delta>2$ and for some codes we impose certain gcd-type conditions so that all the codes provided are new (see the introduction of Section \ref{subfield} and the future Remark \ref{remsub}). The forthcoming Propositions \ref{bivq+1} and \ref{bivq+2} (in characteristic two) prove the optimality while Theorems \ref{thmparbivq+1} and \ref{thmparbivq+2} show the parameters of our codes.

Let $U_t\subseteq \mathbb{F}_q$ denote the set of $t$-th roots of unity, $t\mid q-1$. Keep the notation as in Section \ref{framework} and Subsection \ref{factssubfield}. Fix $i\in\{1,2\}$ (it refers to the variable $X_i$ with respect to which we will interpolate) and denote $i'$ the unique element $i'\in\{1,2\}\backslash\{i\}$.

Pick $p^h\geq 4$ if $p$ equals $2$ ($p^h\geq 5$, otherwise) such that $p^h+1\mid q-1$ and set $P_i=U_{p^h+1}\subseteq\mathbb{F}_q$, then $n_i=p^h+1$. Our set $P$ is $P=P_1\times P_2$, where $P_{i'}$ is either $U_{n_{i'}}\subseteq\mathbb{F}_q$, with $n_{i'}\mid q-1$ and $J=\{1,2\}$, or $U_{n_{i'}-1}\cup\{0\}\subseteq\mathbb{F}_q$, with $n_{i'}-1\mid q-1$ and $J=\{i\}$.

The following two families of sets will be used to define the sets $\Delta$ of our codes $S_\Delta^{P,J}$ since they will constitute the sets $\supp_{X_i}(V_\Delta)$ defined under Definition \ref{decr}. For each nonnegative integer $a\leq \floor*{\frac{p^h}{2}}-1$ (and, if $p=2$, $b\leq \frac{p^h}{2}-2$) define
$$\Omega_a:=\left\{0,1,\dots,a,p^h+1-a,p^h+2-a,\dots,p^h\right\} =\Lambda^i_0\cup \Lambda^i_1\cup\cdots\cup \Lambda^i_a$$
when $a>0$, $\Omega_0:=\{0\}$ and
$$\Omega^*_b:=\left\{\frac{p^h}{2}-b,\frac{p^h}{2}-b+1,\dots,\frac{p^h}{2}+b+1\right\}=\Lambda^i_{\frac{p^h}{2}-b}\cup\Lambda^i_{\frac{p^h}{2}-b+1}\cup \cdots \cup \Lambda^i_{\frac{p^h}{2}},$$
which are closed sets of $\{0,1,\dots,n_i-1\}=\{0,1,\dots,p^h\}$ (identified with $\mathbb{Z}/(p^h+1)\mathbb{Z}$) in the variable $i$ with respect to $p^h$. Indeed, with the identification, $p^h+1=0$ and then $\Lambda^i_0=\{0\}$ and $\Lambda^i_t=\{t,p^h-(t-1)\}$.

\begin{ex}\label{ex1}
Set $(i,p^h,q,a,b)=(1,8,64,3,2)$, then the above defined sets are $\Omega_a=\{0,1,2,3,6,7,8\}=\{0\}\cup\{1,8\}\cup\{2,7\}\cup\{3,6\}$ and $\Omega^*_b=\{2,\dots,7\}=\{2,7\}\cup\{3,6\}\cup\{4,5\}$, and they coincide, respectively, with the set $\supp_{X_i}(V_\Delta)$ in Figure \ref{fig:proof1} a) (\textsc{i}) and b) (\textsc{i}).
\end{ex}

Now, let $0\leq t<z\leq \floor*{\frac{p^h}{2}}-1$ be nonnegative integers such that $2t\geq \max\{0,4z-p^h-1\}$. In addition, when $p=2$, consider a nonnegative integer $0\leq u\leq \frac{p^h}{2}-2$ and if $u\geq 1$, let $0\leq v<u$ be a nonnegative integer such that $2v+1\geq \max\{0,4u+1-p^h\}$.
Define
    $$\Delta_1(z)=\Delta_1:=\begin{cases}
        \Omega_z\times\{0,1,\dots,n_2-1\}, & \text{when } i=1,\\
        \{0,1,\dots,n_1-1\}\times \Omega_z, & \text{otherwise;}
    \end{cases}$$
    $$\Delta_2(z,t)=\Delta_2:=\begin{cases}
        \Omega_z\times\{0,1,\dots,n_2-2\} \cup \Omega_t\times\{n_2-1\}, & \text{when } i=1,\\
        \{0,1,\dots,n_1-2\}\times \Omega_z \cup \{n_1-1\} \times \Omega_t, & \text{otherwise;}
    \end{cases}
    $$
    $$\Delta^*_1(u)=\Delta^*_1:=\begin{cases}
        \Omega^*_u\times\{0,1,\dots,n_2-1\}, & \text{when } i=1,\\
        \{0,1,\dots,n_1-1\}\times \Omega^*_u, & \text{otherwise;}
    \end{cases}
    $$
and
    $$\Delta^*_2(u,v)=\Delta^*_2:=\begin{cases}
        \Omega^*_u\times\{0,1,\dots,n_2-2\} \cup \Omega^*_v\times\{n_2-1\}, & \text{when } i=1,\\
        \{0,1,\dots,n_1-2\}\times \Omega^*_u \cup \{n_1-1\} \times \Omega^*_v, & \text{otherwise.}
    \end{cases}
    $$

\begin{ex}
This is a continuation of Example \ref{ex1}. With the same notation, set $z=a$ and $t=b$, consider also $(u,v)=(2,1)$. Then, $\Omega_t=\{0,1,2,7,8\}$ and $\Omega^*_v=\{3,\dots,6\}$. Figure \ref{fig:proof1} c) (\textsc{i}) and d) (\textsc{i}) show, respectively, the sets $\Delta_2$ and $\Delta^*_2$ in this case.   
\end{ex}

\begin{lemma}\label{lemmaph+1}
Keep the above notation. Let $a\leq \floor*{\frac{p^h}{2}}-1$ and, if $p=2$, $b\leq \frac{p^h}{2}-2$ be nonnegative integers. Consider the $\mathbb{F}_q$-vector spaces $V_1=\langle (X_i)^e \mid e \in \Omega_a \rangle$ and $V_2=\langle (X_i)^e \mid e \in \Omega^*_b \rangle$ contained in the quotient ring $\mathcal{R}_i$ defined before Proposition \ref{method}. Then, $\ev_{P_i}(V_1)$ and $\ev_{P_i}(V_2)$ are MDS codes.
\end{lemma}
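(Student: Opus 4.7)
The plan is to identify each of $\ev_{P_i}(V_1)$ and $\ev_{P_i}(V_2)$ with a Reed--Solomon code of length $n_i=p^h+1$, via an isometry given by multiplication by a suitable monomial. The key observation is that every element of $P_i=U_{p^h+1}$ is nonzero, so for any integer $s$ the vector $\ev_{P_i}(X_i^s)$ has all coordinates nonzero, and componentwise multiplication by it is a Hamming-weight-preserving bijection on $\mathbb{F}_q^{n_i}$. Combined with the identity $X_i^{p^h+1}=1$ on $P_i$, this means $\ev_{P_i}(V_j)$ is isometric to $\ev_{P_i}\bigl(\langle X_i^e \mid e\in s+\supp_{X_i}(V_j)\bmod (p^h+1)\rangle\bigr)$ for any $s\in\mathbb{Z}$. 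This is the one-variable incarnation of the isometry argument in Remark \ref{rempseudo}, applied in the case $0\notin P_i$.

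With this in hand, first I would handle $V_1$. Its support $\Omega_a=\{0,1,\dots,a\}\cup\{p^h+1-a,\dots,p^h\}$ equals, modulo $p^h+1$, the symmetric block $\{-a,-a+1,\dots,a\}$; shifting by $s=a$ produces the initial segment $\{0,1,\dots,2a\}$. Since $a\leq\floor*{p^h/2}-1$ forces $2a\leq p^h-2<n_i$, the shifted code is obtained by evaluating $\langle 1,X_i,\dots,X_i^{2a}\rangle$ at the $n_i$ distinct points of $P_i$, which is a Reed--Solomon code with parameters $[p^h+1,\,2a+1,\,p^h-2a+1]$ and hence MDS.

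Next, for $V_2$ the set $\Omega_b^*=\{p^h/2-b,\dots,p^h/2+b+1\}$ is already an interval of consecutive integers, and shifting by $s=-(p^h/2-b)$ turns it into $\{0,1,\dots,2b+1\}$. The constraint $b\leq p^h/2-2$ gives $2b+1\leq p^h-3<n_i$, so $\ev_{P_i}(V_2)$ is isometric to the Reed--Solomon code of dimension $2b+2$ evaluated at the $n_i$ points of $P_i$, again MDS. Since Hamming isometries preserve both dimension and minimum distance, both $\ev_{P_i}(V_1)$ and $\ev_{P_i}(V_2)$ are MDS, as claimed. There is no real obstacle beyond recognising the correct cyclic shift; the only check one must keep in mind is that the bounds on $a$ and $b$ are exactly what is needed to guarantee that the shifted supports fit strictly inside $\{0,1,\dots,n_i-1\}$, so that the standard Reed--Solomon MDS property applies.
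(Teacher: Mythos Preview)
Your proposal is correct and follows essentially the same route as the paper: both shift the exponent set $\Omega_a$ by $a$ (using that $0\notin P_i$) to obtain the consecutive block $\{0,1,\dots,2a\}$, and then conclude MDS. The only cosmetic difference is that you directly recognise the shifted code as a Reed--Solomon code, whereas the paper passes to the dual and invokes a Vandermonde parity-check argument; for $V_2$ the paper likewise works with the dual of the already-consecutive set $\Omega_b^*$ rather than shifting it to start at $0$, but the content is the same.
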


\begin{proof}
Let $\Omega:=\{0,1,\dots,2a\}=\Omega_a+a$ regarded as representatives of elements in $\mathbb{Z}/(p^h+1)\mathbb{Z}$. Define $V=\langle (X_i)^e \mid e \in \Omega\rangle$. Codewords in $\ev_{P_i}(V)$ are of the form
$$\ev_{P_i}((X_i)^af)=\ev_{P_i}((X_i)^a)*\ev_{P_i}(f),$$
where $f\in V_1$. Since $0\notin P_i$, $\ev_{P_i}(V_1)$ and $\ev_{P_i}(V)$ are isometric codes. The code $\left(\ev_{P_i}(V)\right)^\perp$ is a $[p^h+1,p^h-2a,\leq 2a+2]_q$ code and, since $\Omega$ contains $2a+1$ consecutive elements, $\dis\left(\left(\ev_{P_i}(V)\right)^\perp\right)\geq 2a+2$ because its corresponding parity-check matrix contains a Vandermonde matrix of rank $2a+1$. Thus, $\left(\ev_{P_i}(V)\right)^\perp$ is an MDS code and therefore $\ev_{P_i}(V)$ and $\ev_{P_i}(V_1)$ are MDS codes. The fact that $\Omega^*_b$ contains $2b+2$ consecutive elements proves that $\left(\ev_{P_i}(V_2)\right)^\perp$ is an MDS code and therefore so is $\ev_{P_i}(V_2)$.\
\end{proof}

\begin{prop}\label{bivq+1}
Keep the the notation as above where $\mathbb{F}_{p^h}$ is regarded as a subfield of $\;\mathbb{F}_{q=p^l}$ and $p^h+1 \mid q-1$. Fixed $i$ and $P_i=U_{p^h+1}$, the set of $p^h+1$-th roots of unity, the following statements determine sets $P_{i'}$, $J$ and $\Delta$ such that the subfield-subcodes $S_\Delta^{P,J}$ over the field $\mathbb{F}_{p^h}$ are optimal $(r,\delta)$-LRCs.
\begin{enumerate}
    \item $P_{i'}=U_{n_{i'}}$ for some $n_{i'}$ such that $n_{i'}\mid q-1$; $J=\{1,2\}$ and $\Delta=\Delta_1$, in which case
    $$(r,\delta)=(2z+1,p^h-2z+1).$$
    
    \item $P_{i'}=U_{n_{i'}-1}\cup\{0\}$ for some $n_{i'}$ such that  $n_{i'}-1\mid q-1$; $J=\{i\}$ and $\Delta=\Delta_1$, in which case
    $$(r,\delta)=(2z+1,p^h-2z+1).$$

    \item $P_{i'}=U_{n_{i'}-1}\cup\{0\}$ for some $n_{i'}$ such that $n_{i'}-1\mid q-1$ and, if $p$ is odd, either $\gcd(n_{i'},p^h)\neq 1$ or $\gcd(n_{i'},p^h+1)\neq 1$; $J=\{i\}$ and $\Delta=\Delta_2$, in which case
    $$(r,\delta)=(2z+1,p^h-2z+1).$$
    
    \item $P_{i'}=U_{n_{i'}}$ for some $n_{i'}$ such that $n_{i'}\mid q-1$; $J=\{1,2\}$ and $\Delta=\Delta^*_1$, in which case
    $$(r,\delta)=(2u+2,p^h-2u).$$

    \item $P_{i'}=U_{n_{i'}-1}\cup\{0\}$ for some $n_{i'}$ such that $n_{i'}-1\mid q-1$; $J=\{i\}$ and $\Delta=\Delta^*_1$, in which case
    $$(r,\delta)=(2u+2,p^h-2u).$$

    \item $P_{i'}=U_{n_{i'}-1}\cup\{0\}$ for some $n_{i'}$ such that $n_{i'}-1\mid q-1$; $J=\{i\}$ and $\Delta=\Delta^*_2$, in which case $(r,\delta)=(2u+2,p^h-2u)$.
\end{enumerate}
\end{prop}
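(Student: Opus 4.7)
The plan is to verify, for each case (1)--(6), that the subfield-subcode $S_\Delta^{P,J}$ has length $n_1n_2$, dimension $\#\Delta$, the stated locality $(r,\delta)$, and minimum distance $d$ equal to the Singleton upper bound $d^*:=n+1-k-(\lceil k/r\rceil-1)(\delta-1)$ forced by (\ref{formula}).

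The sets $\Omega_a$ and $\Omega^*_b$ are unions of $p^h$-cyclotomic cosets of $\mathbb{Z}/(p^h+1)\mathbb{Z}$ by construction, and the divisibility conditions on $n_{i'}$ or $n_{i'}-1$ imply $\gcd(p,n_{i'})=1$ or $\gcd(p,n_{i'}-1)=1$, so the second factor of $\Delta$ is closed as well. Hence $\Delta$ is closed in $E$ and $\dim(S_\Delta^{P,J})=\#\Delta$ by \cite[Theorem 2.3]{GHM2020}. For the locality, applying Proposition \ref{method} with interpolation variable $X_i$ gives $\mathcal{K}_i=2z+1$ (or $2u+2$); by Proposition \ref{commute} and Lemma \ref{lemmaph+1}, the projection of $S_\Delta^{P,J}$ onto any $X_i$-fiber is the subfield-subcode of an MDS code, hence of distance at least $n_i-\mathcal{K}_i+1$. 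Since this fiber has cardinality $n_i=r+\delta-1$, the locality equals $(\mathcal{K}_i,n_i-\mathcal{K}_i+1)$, matching the stated $(r,\delta)$.

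For the minimum distance, a direct computation identifies $d^*$ with the minimum distance of a decreasing MCC $C_{\Delta'}^{P,J}$ from Propositions \ref{rect} or \ref{rectelim}, where $\Delta'$ is the compacted version of $\Delta$ produced by the shift $(z,0)$ or $(u,0)$ sending $\Omega_z$, $\Omega^*_u$ onto a consecutive block. For cases (1), (2), (4), and (5), $\Delta\in\{\Delta_1,\Delta_1^*\}$ is pseudoisometric to $\Delta'$ via this shift; since $0\notin P_i=U_{p^h+1}$, Remark \ref{rempseudo} yields an $\mathbb{F}_q$-isometry $C_\Delta^{P,J}\cong C_{\Delta'}^{P,J}$, so $d(C_\Delta^{P,J})=d^*$, and the inclusion $S_\Delta^{P,J}\subseteq C_\Delta^{P,J}$ together with the Singleton upper bound forces $d(S_\Delta^{P,J})=d^*$.

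The main obstacle is cases (3) and (6), where $\Delta_2$ and $\Delta_2^*$ are not pseudoisometric to decreasing sets via any single global shift. I would prove $d(C_\Delta^{P,J})\geq d^*$ directly by splitting a nonzero codeword $\mathbf{c}=\ev_P(f)$ according to the coefficient $h\in V_{\Omega_t}$ (respectively $V_{\Omega_v^*}$) of $X_{i'}^{n_{i'}-1}$ in $f$. If $h=0$, then $\mathbf{c}$ lies in a tensor-product subcode of an MDS code from Lemma \ref{lemmaph+1} and an MDS code of distance $2$ (the Reed--Solomon code of degree $n_{i'}-2$ on $P_{i'}$), giving $\w(\mathbf{c})\geq 2(n_i-\mathcal{K}_i+1)$; the stated hypothesis $2t\geq\max\{0,4z-p^h-1\}$ (respectively $2v+1\geq\max\{0,4u+1-p^h\}$) is precisely the inequality $2(n_i-\mathcal{K}_i+1)\geq d^*$. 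If $h\neq 0$, one recovers $h(\alpha)$ as an $\mathbb{F}_q$-linear combination of the entries of $\mathbf{c}$ in the $X_i=\alpha$ fiber (using $\gcd(n_{i'}-1,p)=1$ and the orthogonality relations of the multiplicative group $U_{n_{i'}-1}$); hence this fiber is nonzero whenever $h(\alpha)\neq 0$, and the MDS property of $\ev_{P_i}(V_{\Omega_t})$ (respectively $\ev_{P_i}(V_{\Omega_v^*})$) from Lemma \ref{lemmaph+1} gives at least $n_i-2t$ (respectively $n_i-2v-1$) such $\alpha$, which matches $d^*$.
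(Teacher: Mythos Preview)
Your plan is correct and largely parallels the paper's proof: both first verify that $\Delta$ is closed (so $\dim S_\Delta^{P,J}=\#\Delta$), then use Proposition~\ref{commute} plus Lemma~\ref{lemmaph+1} to obtain the MDS local code and hence the stated $(r,\delta)$, and finally bound $d(S_\Delta^{P,J})$ from below by the $d_0$ of an associated decreasing MCC $C_{\Delta'}^{P,J}$ from Propositions~\ref{rect}--\ref{rectelim}.

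Two small inaccuracies are worth fixing. First, your closedness justification (``the divisibility conditions on $n_{i'}$ \ldots\ imply the second factor is closed'') is not what is needed in cases (3) and (6): there one must show that the singleton $\{n_{i'}-1\}$ is closed in $\{0\}\cup\mathbb{Z}/(n_{i'}-1)\mathbb{Z}$, which the paper does by the one-line computation $p^h(n_{i'}-1)=n_{i'}-1$. Second, in cases (1)--(2) the set $\Omega_z$ is not a single interval, so $\Delta_1$ is not pseudoisometric to $\Delta'$ in the sense of Remark~\ref{rempseudo} (a $\mathbb{Z}^2$-shift). What you need is the \emph{modular} shift by $z$ in $\mathbb{Z}/(p^h+1)\mathbb{Z}$, exactly the argument used inside the proof of Lemma~\ref{lemmaph+1}; since $0\notin P_i=U_{p^h+1}$, multiplication by $X_i^{z}$ still gives an $\mathbb{F}_q$-isometry and your conclusion stands. (For $\Delta_1^*$ the set $\Omega_u^*$ is already an interval, so an honest $\mathbb{Z}$-shift works; the shift is by $u-\tfrac{p^h}{2}$, not $(u,0)$ as written.)

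Where you genuinely diverge from the paper is in handling the ``top row'' for $\Delta_2$ and $\Delta_2^*$. The paper splits on the leading monomial for the lex order and, when it lies in $\Omega_t\times\{n_{i'}-1\}$, applies the modular shift $\Delta''=\Delta+(t,0)$ together with the footprint bound (Proposition~\ref{footprint}) to get $\w(\mathbf c)\ge p^h+1-2t$. You instead split on whether the coefficient $h\in V_{\Omega_t}$ of $X_{i'}^{n_{i'}-1}$ vanishes and, when $h\neq0$, recover $h(\alpha)$ from each $X_i=\alpha$ fiber via the explicit linear functional
\[
h(\alpha)=\frac{1}{n_{i'}-1}\sum_{\beta\in U_{n_{i'}-1}}f(\alpha,\beta)\;-\;f(\alpha,0),
\]
using orthogonality of characters of $U_{n_{i'}-1}$ and $\gcd(n_{i'}-1,p)=1$. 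This is a valid dual-vector argument: it shows each fiber with $h(\alpha)\neq0$ is nonzero, and the MDS property of $\ev_{P_i}(V_{\Omega_t})$ then yields $\w(\mathbf c)\ge p^h+1-2t$. The two tactics give the same bound; the paper's shift-and-footprint approach is slightly more uniform (no separate computation for the coefficient extraction), while your approach avoids the second shift entirely and makes the role of the hypothesis $2t\ge\max\{0,4z-p^h-1\}$ completely transparent as the inequality $2(p^h+1-2z)\ge p^h+1-2t$.
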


\begin{proof}
We start by proving that the sets $\Delta$ in the statements (1)-(6) are closed with respect to $p^h$. As we said, in the single variable $i$, the subsets of $\{0,1,\dots,n_i-1\}=\{0,1,\dots,p^h\}$ (identified with $\mathbb{Z}/(p^h+1)\mathbb{Z}$),
$$\Omega_a=\Lambda_0^i\cup \Lambda_1^i\cup \dots \cup \Lambda_a^i$$
and
$$\Omega^*_b=\Lambda_{\frac{p^h}{2}-b}^i\cup \Lambda_{\frac{p^h}{2}-b+1}^i\cup \dots \cup \Lambda_{\frac{p^h}{2}}^i,$$
for $a\in \{z,t\}$ and $b\in\{u,v\}$ are clearly closed. In the single variable $i'$, $\{0,1,\dots,n_{i'}-1\}$ is closed. In addition, when $0\in P_{i'}$, the minimal closed set in a single variable $\Lambda^{i'}_{n_{i'}-1}\subseteq \{0,1,\dots,n_{i'}-1\}$ is the set $\Lambda^{i'}_{n_{i'}-1}=\{n_{i'}-1\}$. Indeed, with the identification $n_{i'}=1$ described in Subsection \ref{factssubfield}, it holds the following chain of equalities:
$$p^h(n_{i'}-1)=(p^h-1)(n_{i'}-1)+n_{i'}-1=(p^h-1)n_{i'}+n_{i'}-p^h=p^h-1+n_{i'}-p^h=n_{i'}-1.$$
Therefore, $\{0,1,\dots,n_{i'}-2\}=\{0,1,\dots,n_{i'}-1\}\backslash \{n_{i'}-1\}$ is also closed. The cartesian product and the union of closed sets are closed, so the sets $\Delta$ in (1)-(6) are closed and $\dim(S_\Delta^{P,J})=\dim(C_\Delta^{P,J})$.

\medskip
Now we are going to prove that the subfield-subcodes $S_\Delta^{P,J}$ are LRCs. Let $i=1$ and $V_1$ as in Lemma \ref{lemmaph+1} with $a=z$. Since $\Omega_z$ is closed, $\dim(\ev_{P_i}(V_1)\cap \mathbb{F}_{p^h}^{p^h+1})=\dim(\ev_{P_i}(V_1))$ and the fact that $\dis(\ev_{P_i}(V_1)\cap \mathbb{F}_{p^h}^{p^h+1})\geq \dis(\ev_{P_i}(V_1))$ and Lemma \ref{lemmaph+1} imply that $\ev_{P_i}(V_1)\cap \mathbb{F}_{p^h}^{p^h+1}$ is an MDS code with minimum distance $p^h-2z+1$. Taking $\overline R$ such that $\pi_{\overline R}(C_\Delta^{P,J})=\ev_{P_i}(V_1)$, Proposition \ref{commute} shows that $\pi_{\overline R}(S_\Delta^{P,J})=\ev_{P_i}(V_1)\cap \mathbb{F}_{p^h}^{p^h+1}$ is also MDS. Then, Proposition \ref{method} applied to $S_\Delta^{P,J}$, $\Delta$ being either $\Delta_1$ or $\Delta_2$, proves that $S_\Delta^{P,J}$ is an LRC with locality $(2z+1,p^h-2z+1)$. Replacing $(V_1,a,z,\Omega_z)$ by $(V_2,b,u,\Omega^*_u)$ one deduces that $S_\Delta^{P,J}$ is an LRC with locality $(2u+2,p^h-2u)$, whenever $\Delta$ is either $\Delta_1^*$ or $\Delta_2^*$. Notice that $r$ and $\delta$ do not depend neither on $t$ nor on $v$, unlike dimension and minimum distance.

The case $i=2$ can be proved analogously noticing that we are in the symmetric situation. It suffices to interpolate with respect to $Y$ and change $i$ by $i'$ and $n_2$ by $n_1$.
\medskip

With notation as in Section \ref{work} page \pageref{rect} and $i=1$, we assert that the minimum distance of the code $S_\Delta^{P,J}$ admits the bound on the minimum distance of $C_{\Delta'}^{P,J}$, $d_0\left(C_{\Delta'}^{P,J}\right)$, whenever
$$\left(\Delta,\Delta'\right)\in \Big\{\left(\Delta_1,\Delta^1_{2z,n_2-1}\right),\left(\Delta_2,\Delta^2_{2z,2t}\right),\left(\Delta_1^*,\Delta^1_{2u+1,n_2-1}\right),\left(\Delta_2^*,\Delta^2_{2u+1,2v+1}\right)\Big\}.$$
Let us prove the statement. Figure \ref{fig:proof1} considers the case $(p,h,l,z,t,u,v)=(2,3,6,3,2,2,1)$ to illustrate our reasoning. Let $\mathbf{c}=\ev_P(f)$, $f(X,Y)\in V_\Delta$ be a codeword in $S_\Delta^{P,J}$.

a) Assume firstly that $\Delta=\Delta_1$. A no-root $(\alpha,\beta)$ in $P$ of $f(X,Y)$ must satisfy that $\alpha$ is a no-root of $f(X,\beta)$ as a polynomial in $X$ and $\beta$ is a no-root of $f(\alpha,Y)$ as a polynomial in $Y$. Denote $n_{\beta}$ (respectively, $n_{\alpha}$) the cardinality of the set of no-roots of $f(X,\beta)$ (respectively, $f(\alpha,Y)$). Set $n_X$ (respectively, $n_Y$) the minimum of $n_{\beta}$ (respectively, $n_{\alpha}$) when $\beta$ (respectively, $\alpha$) runs over $P_2$ (respectively, $P_1$). Then, the number of no-roots of $f$ in $P$ is at least $n_Xn_Y$. Since $\dis\left(\ev_{P_1}(V_1)\cap \mathbb{F}_{p^h}^{p^h+1}\right)=p^h+1-2z$ and $\dis\left(\ev_{P_2}\left(\langle Y^e \mid e\in\{0,1,\dots,n_2-1\}\rangle\right)\cap \mathbb{F}_{p^h}^{n_2}\right)=1$ (they are MDS codes), then $\w(\mathbf{c})\geq p^h+1-2z$ and $\dis\left(S_\Delta^{P,J}\right)\geq p^h+1-2z=d_0\left(C_{\Delta'}^{P,J}\right)$. See a) in Figure \ref{fig:proof1}.
    
b) Consider now the case $\Delta=\Delta_1^*$. Since $\dis\left(\ev_{P_1}(V_2)\cap \mathbb{F}_{p^h}^{n_2}\right)=p^h-2u$, the same argument as in a) proves $\dis\left(S_\Delta^{P,J}\right)\geq p^h-2u=d_0\left(C_{\Delta'}^{P,J}\right)$. See b) in Figure \ref{fig:proof1}.
    
c) For proving the case $\Delta=\Delta_2$, we use the following (lexicographical) ordering in $E$:
$$(e_1,e_2)\leq (e'_1,e'_2) \iff e_2 < e'_2 \text{ or (} e_2=e'_2 \text{ and } e_1 \leq e'_1 \text{),}$$
and we distinguish two cases:
\begin{itemize}
    \item  The leading monomial of $f$ is in $\Omega_z\times\{0,1,\dots,n_2-2\}$, then an analogous argument as in a) proves $\w(\mathbf{c})\geq 2(p^h+1-2z)$.

    \item The leading monomial of $f$ is in $\Omega_t\times\{n_2-1\}$, then consider $\Delta'':=\Delta+(t,0) \subseteq E$ because of the relation $p^h+1=0$ in $\{0,1,\dots,p^h\}$. Consider the codeword in $C_{\Delta''}^P$
    $$\ev_P(X^tf)=\ev_P(X^t)*\ev_P(f)=\ev_P(X^t)*\mathbf{c}.$$
    Since $0\notin P_1$, $\w(\mathbf{c})=\w(\ev_P(X^tf))\geq \dis(2t,n_2-1)=p^h+1-2t$ by Proposition \ref{footprint} (the leading monomial of $X^tf$ is $\mu X^\gamma Y^{n_2-1}$ with $\gamma \leq 2t$).
\end{itemize}
Then, $\w(\mathbf{c})\geq \min\{2(p^h+1-2z),p^h+1-2t\}=p^h+1-2t$ and therefore
$$\dis\left(S_\Delta^{P,J}\right)\geq p^h+1-2t=d_0\left(C_{\Delta'}^{P,J}\right).$$
See c) in Figure \ref{fig:proof1}.

d) Finally, when $\Delta=\Delta_2^*$, reasoning as in c) with $\Delta'':=\Delta+(\frac{p^h}{2}+v+1,0)$, one gets the desired bound: 
$$\dis\left(S_\Delta^{P,J}\right)\geq p^h-2v=d_0\left(C_{\Delta'}^{P,J}\right).$$
See d) in Figure \ref{fig:proof1}.

\begin{figure}[b]
    \centering
    \begin{subfigure}[b]{1\textwidth}
    \begin{multicols}{2}
    \centering
    \begin{subfigure}[b]{0.33\textwidth}
        \addtocounter{subfigure}{-1}
        \renewcommand{\thesubfigure}{i}
        \centering
        \begin{tikzpicture}[y=0.5cm, x=0.5cm,font=\normalsize]
        \filldraw[fill=gray!30] (0,0) rectangle (3,4);
        \filldraw[fill=gray!30] (5,0) rectangle (7,4);
        
        \draw (3,0) -- (5,0);
        
        \node [below] at (0,0)  {\scriptsize$0$};
        \node [below] at (1,0) {\scriptsize$1$};
        \node [below] at (2,0) {\scriptsize$2$};
        \node [below] at (3,0) {\scriptsize$3$};
        \node [below] at (4,0) {$\dots$};
        \node [below] at (5,0) {\scriptsize$6$};
        \node [below] at (6,0) {\scriptsize$7$};
        \node [below] at (7,0) {\scriptsize$8$};
        \node [left] at (0,0) {\scriptsize$0$};
        \node [left] at (0,1) {\scriptsize$1$};
        \node [left] at (0,2) {\scriptsize$2$};
        \node [left] at (0,3) {$\vdots$};
        \node [left] at (0,4) {\scriptsize$n_2-1$};
        \end{tikzpicture}
        \caption{$\Delta=\Delta_1=\break \Omega_3\times\{0,1,\dots,n_2-1\}$}
    \end{subfigure}
    
    \columnbreak
    \centering
    \begin{subfigure}[b]{0.33\textwidth}
        \addtocounter{subfigure}{-1}
        \renewcommand\thesubfigure{ii}
        \centering
        \begin{tikzpicture}[y=0.5cm, x=0.5cm,font=\normalsize]
        \filldraw[fill=gray!30] (0,0) rectangle (4,4);
        
        \draw (4,0) -- (6,0);
        
        \node [below] at (0,0)  {\scriptsize$0$};
        \node [below] at (1,0) {\scriptsize$1$};
        \node [below] at (2,0) {\scriptsize$2$};
        \node [below] at (3,0) {$\dots$};
        \node [below] at (4,0) {\scriptsize$6$};
        \node [below] at (5,0) {\scriptsize$7$};
        \node [below] at (6,0) {\scriptsize$8$};
        \node [left] at (0,0) {\scriptsize$0$};
        \node [left] at (0,1) {\scriptsize$1$};
        \node [left] at (0,2) {\scriptsize$2$};
        \node [left] at (0,3) {$\vdots$};
        \node [left] at (0,4) {\scriptsize$n_2-1$};
        
        \end{tikzpicture}
        \caption{$\Delta'=(\Delta_1)'=\Delta_{6,n_2-1}^1$}
    \end{subfigure}
    \end{multicols}
    \caption*{a) Either $P_2=U_{n_2}$, $n_2\mid q-1$ and $J=\{1,2\}$, or $P_2=U_{n_2-1}\cup\{0\}$, $n_2-1\mid q-1$ and $J=\{1\}$}
    \end{subfigure}

    \centering
    \hfill\break
    \begin{subfigure}[b]{1\textwidth}
    \begin{multicols}{2}
    \centering
    \begin{subfigure}[b]{0.33\textwidth}
        \addtocounter{subfigure}{-1}
        \renewcommand\thesubfigure{i}
        \centering
        \begin{tikzpicture}[y=0.5cm, x=0.5cm,font=\normalsize]
        \filldraw[fill=gray!30] (2,0) rectangle (5,4);
        
        \draw (0,0) -- (0,4);
        \draw (0,0) -- (2,0);
        \draw (5,0) -- (6,0);
        
        \node [below] at (0,0)  {\scriptsize$0$};
        \node [below] at (1,0) {\scriptsize$1$};
        \node [below] at (2,0) {\scriptsize$2$};
        \node [below] at (3,0) {\scriptsize$3$};
        \node [below] at (4,0) {$\dots$};
        \node [below] at (5,0) {\scriptsize$7$};
        \node [below] at (6,0) {\scriptsize$8$};
        \node [left] at (0,0) {\scriptsize$0$};
        \node [left] at (0,1) {\scriptsize$1$};
        \node [left] at (0,2) {\scriptsize$2$};
        \node [left] at (0,3) {$\vdots$};
        \node [left] at (0,4) {\scriptsize$n_2-1$};
        \end{tikzpicture}
        \caption{$\Delta=\Delta_1^*= \break \Omega^*_2\times\{0,1,\dots,n_2-1\}$}
    \end{subfigure}
    
    \columnbreak
    \centering
    \begin{subfigure}[b]{0.33\textwidth}
        \addtocounter{subfigure}{-1}
        \renewcommand\thesubfigure{ii}
        \centering
        \begin{tikzpicture}[y=0.5cm, x=0.5cm,font=\normalsize]
        \filldraw[fill=gray!30] (0,0) rectangle (4,4);
        
        \draw (4,0) -- (6,0);
        
        \node [below] at (0,0)  {\scriptsize$0$};
        \node [below] at (1,0) {\scriptsize$1$};
        \node [below] at (2,0) {\scriptsize$2$};
        \node [below] at (3,0) {$\dots$};
        \node [below] at (4,0) {\scriptsize$5$};
        \node [below] at (5,0) {$\dots$};
        \node [below] at (6,0) {\scriptsize$8$};
        \node [left] at (0,0) {\scriptsize$0$};
        \node [left] at (0,1) {\scriptsize$1$};
        \node [left] at (0,2) {\scriptsize$2$};
        \node [left] at (0,3) {$\vdots$};
        \node [left] at (0,4) {\scriptsize$n_2-1$};
        
        \end{tikzpicture}
        \caption{$\Delta'=(\Delta_1^*)'=\Delta_{5,n_2-1}^1$}
    \end{subfigure}
    \end{multicols}
    \caption*{b) Either $P_2=U_{n_2}$, $n_2\mid q-1$ and $J=\{1,2\}$, or $P_2=U_{n_2-1}\cup\{0\}$, $n_2-1\mid q-1$ and $J=\{1\}$}
    \end{subfigure}
    
    \centering
    \hfill\break
    \begin{subfigure}[b]{1\textwidth}
    \begin{multicols}{3}
    \centering
    \begin{subfigure}[b]{0.33\textwidth}
        \addtocounter{subfigure}{-1}
        \renewcommand\thesubfigure{i}
        \centering
        \begin{tikzpicture}[y=0.5cm, x=0.5cm,font=\normalsize]
        \fill[gray!30] (0,0) rectangle (3,4);
        \fill[gray!30] (0,4) rectangle (2,5);
        \fill[gray!30] (5,0) rectangle (7,4);
        \fill[gray!30] (6,4) rectangle (7,5);
        
        \draw (0,0) -- (7,0);
        \draw (0,0) -- (0,5) -- (2,5) -- (2,4) -- (3,4) -- (3,0);
        \draw (5,0) -- (5,4) -- (6,4) -- (6,5) -- (7,5) -- (7,0);
        
        \node [below] at (0,0)  {\scriptsize$0$};
        \node [below] at (1,0) {\scriptsize$1$};
        \node [below] at (2,0) {\scriptsize$2$};
        \node [below] at (3,0) {\scriptsize$3$};
        \node [below] at (4,0) {$\dots$};
        \node [below] at (5,0) {\scriptsize$6$};
        \node [below] at (6,0) {\scriptsize$7$};
        \node [below] at (7,0) {\scriptsize$8$};
        \node [left] at (0,0) {\scriptsize$0$};
        \node [left] at (0,1) {\scriptsize$1$};
        \node [left] at (0,2) {\scriptsize$2$};
        \node [left] at (0,3) {$\vdots$};
        \node [left] at (0,4) {\scriptsize$n_2-2$};
        \node [left] at (0,5) {\scriptsize$n_2-1$};
        \end{tikzpicture}
        \caption{$\Delta=\Delta_2 = \break \Omega_3\times\{0,1,\dots,n_2-2\} \cup \Omega_2\times \{n_2-1\}$}
    \end{subfigure}
    
    \columnbreak
    \centering
    \begin{subfigure}[b]{0.33\textwidth}
        \addtocounter{subfigure}{-1}
        \renewcommand\thesubfigure{ii}
        \centering
        \begin{tikzpicture}[y=0.5cm, x=0.5cm,font=\normalsize]
        \fill[gray!30] (0,0) rectangle (3,5);
        \fill[gray!30] (3,0) rectangle (4,4);
        
        \draw (0,0) -- (6,0);
        \draw (0,0) -- (0,5) -- (3,5) -- (3,4) -- (4,4) -- (4,0);

        \filldraw[fill=black!40,draw=black!80] (6,0) circle (2pt);
        \filldraw[fill=black!40,draw=black!80] (6,1) circle (2pt);
        \filldraw[fill=black!40,draw=black!80] (6,2) circle (2pt);
        \filldraw[fill=black!40,draw=black!80] (6,3) circle (2pt);
        \filldraw[fill=black!40,draw=black!80] (6,4) circle (2pt);
        
        \node [below] at (0,0)  {\scriptsize$0$};
        \node [below] at (1,0) {\scriptsize$1$};
        \node [below] at (2,0) {$\dots$};
        \node [below] at (3,0) {\scriptsize$4$};
        \node [below] at (4,0) {\scriptsize$5$};
        \node [below] at (5,0) {$\dots$};
        \node [below] at (6,0) {\scriptsize$8$};
        \node [left] at (0,0) {\scriptsize$0$};
        \node [left] at (0,1) {\scriptsize$1$};
        \node [left] at (0,2) {\scriptsize$2$};
        \node [left] at (0,3) {$\vdots$};
        \node [left] at (0,4) {\scriptsize$n_2-2$};
        \node [left] at (0,5) {\scriptsize$n_2-1$};
        
        \end{tikzpicture}
        \caption{$\Delta''=(\Delta_2)''=\Delta_2+(2,0)$}
    \end{subfigure}

        \columnbreak
    \centering
    \begin{subfigure}[b]{0.33\textwidth}
        \addtocounter{subfigure}{-1}
        \renewcommand\thesubfigure{iii}
        \centering
        \begin{tikzpicture}[y=0.5cm, x=0.5cm,font=\normalsize]
        \fill[gray!30] (0,0) rectangle (3,5);
        \fill[gray!30] (3,0) rectangle (5,4);
        
        \draw (0,0) -- (7,0);
        \draw (0,0) -- (0,5) -- (3,5) -- (3,4) -- (5,4) -- (5,0);
        
        \node [below] at (0,0)  {\scriptsize$0$};
        \node [below] at (1,0) {\scriptsize$1$};
        \node [below] at (2,0) {$\dots$};
        \node [below] at (3,0) {\scriptsize$4$};
        \node [below] at (4,0) {\scriptsize$5$};
        \node [below] at (5,0) {\scriptsize$6$};
        \node [below] at (6,0) {\scriptsize$7$};
        \node [below] at (7,0) {\scriptsize$8$};
        \node [left] at (0,0) {\scriptsize$0$};
        \node [left] at (0,1) {\scriptsize$1$};
        \node [left] at (0,2) {\scriptsize$2$};
        \node [left] at (0,3) {$\vdots$};
        \node [left] at (0,4) {\scriptsize$n_2-2$};
        \node [left] at (0,5) {\scriptsize$n_2-1$};
        
        \end{tikzpicture}
        \caption{$\Delta'=(\Delta_2)'=\Delta_{6,4}^2$}
    \end{subfigure}
    \end{multicols}
    \caption*{c) $P_2=U_{n_2-1}\cup\{0\}$, $n_2-1\mid q-1$ and $J=\{1\}$}
    \end{subfigure}

    \centering
    \hfill\break
    \begin{subfigure}[b]{1\textwidth}
    \begin{multicols}{3}
    \centering
        \begin{subfigure}[b]{0.33\textwidth}
        \addtocounter{subfigure}{-1}
        \renewcommand\thesubfigure{i}
        \centering
        \begin{tikzpicture}[y=0.5cm, x=0.5cm,font=\normalsize]
        \fill[gray!30] (2,0) rectangle (6,4);
        \fill[gray!30] (3,4) rectangle (5,5);
        
        \draw (0,0) -- (7,0);
        \draw (0,0) -- (0,5);
        \draw (2,0) -- (2,4) -- (3,4) -- (3,5) -- (5,5) -- (5,4) -- (6,4) -- (6,0);
        
        \node [below] at (0,0) {\scriptsize$0$};
        \node [below] at (1,0) {\scriptsize$1$};
        \node [below] at (2,0) {\scriptsize$2$};
        \node [below] at (3,0) {\scriptsize$3$};
        \node [below] at (4,0) {$\dots$};
        \node [below] at (5,0) {\scriptsize$6$};
        \node [below] at (6,0) {\scriptsize$7$};
        \node [below] at (7,0) {\scriptsize$8$};
        \node [left] at (0,0) {\scriptsize$0$};
        \node [left] at (0,1) {\scriptsize$1$};
        \node [left] at (0,2) {\scriptsize$2$};
        \node [left] at (0,3) {$\vdots$};
        \node [left] at (0,4) {\scriptsize$n_2-2$};
        \node [left] at (0,5) {\scriptsize$n_2-1$};
        
        \end{tikzpicture}
        \caption{$\Delta=\Delta_2^*= \break \Omega_2^*\times\{0,1,\dots,n_2-2\} \cup \Omega_1^*\times \{n_2-1\}$}
    \end{subfigure}

    \columnbreak
    \centering
        \begin{subfigure}[b]{0.33\textwidth}
        \addtocounter{subfigure}{-1}
        \renewcommand\thesubfigure{ii}
        \centering
        \begin{tikzpicture}[y=0.5cm, x=0.5cm,font=\normalsize]
        \fill[gray!30] (0,0) rectangle (3,5);
        \fill[gray!30] (3,0) rectangle (4,4);
        
        \draw (0,0) -- (6,0);
        \draw (0,0) -- (0,5) -- (3,5) -- (3,4) -- (4,4) -- (4,0);
        
        \filldraw[fill=black!40,draw=black!80] (6,0) circle (2pt);
        \filldraw[fill=black!40,draw=black!80] (6,1) circle (2pt);
        \filldraw[fill=black!40,draw=black!80] (6,2) circle (2pt);
        \filldraw[fill=black!40,draw=black!80] (6,3) circle (2pt);
        \filldraw[fill=black!40,draw=black!80] (6,4) circle (2pt);
        
        \node [below] at (0,0) {\scriptsize$0$};
        \node [below] at (1,0) {\scriptsize$1$};
        \node [below] at (2,0) {\scriptsize$2$};
        \node [below] at (3,0) {\scriptsize$3$};
        \node [below] at (4,0) {\scriptsize$4$};
        \node [below] at (5,0) {$\dots$};
        \node [below] at (6,0) {\scriptsize$8$};
        \node [left] at (0,0) {\scriptsize$0$};
        \node [left] at (0,1) {\scriptsize$1$};
        \node [left] at (0,2) {\scriptsize$2$};
        \node [left] at (0,3) {$\vdots$};
        \node [left] at (0,4) {\scriptsize$n_2-2$};
        \node [left] at (0,5) {\scriptsize$n_2-1$};
        
        \end{tikzpicture}
        \caption{$\Delta''=(\Delta_2^*)''=\Delta_2^*+(6,0)$}
    \end{subfigure}

    \columnbreak
    \centering
        \begin{subfigure}[b]{0.33\textwidth}
        \addtocounter{subfigure}{-1}
        \renewcommand\thesubfigure{iii}
        \centering
        \begin{tikzpicture}[y=0.5cm, x=0.5cm,font=\normalsize]
        \fill[gray!30] (0,0) rectangle (3,5);
        \fill[gray!30] (3,0) rectangle (5,4);
        
        \draw (0,0) -- (7,0);
        \draw (0,0) -- (0,5) -- (3,5) -- (3,4) -- (5,4) -- (5,0);
        
        \node [below] at (0,0) {\scriptsize$0$};
        \node [below] at (1,0) {\scriptsize$1$};
        \node [below] at (2,0) {\scriptsize$2$};
        \node [below] at (3,0) {\scriptsize$3$};
        \node [below] at (4,0) {\scriptsize$4$};
        \node [below] at (5,0) {\scriptsize$5$};
        \node [below] at (6,0) {$\dots$};
        \node [below] at (7,0) {\scriptsize$8$};
        \node [left] at (0,0) {\scriptsize$0$};
        \node [left] at (0,1) {\scriptsize$1$};
        \node [left] at (0,2) {\scriptsize$2$};
        \node [left] at (0,3) {$\vdots$};
        \node [left] at (0,4) {\scriptsize$n_2-2$};
        \node [left] at (0,5) {\scriptsize$n_2-1$};
        
        \end{tikzpicture}
        \caption{$\Delta'=(\Delta_2^*)'=\Delta_{5,3}^2$}
    \end{subfigure}

    \end{multicols}
    \caption*{d) $P_2=U_{n_2-1}\cup\{0\}$, $n_2-1\mid q-1$ and $J=\{1\}$}
    \end{subfigure}
    
    \caption{Sets $\Delta$, $\Delta'$ (and $\Delta''$) considered in the proof of Proposition \ref{bivq+1} for values $(i,p^h,q,P_1,z,t,u,v)=(1,8,64,U_9,3,2,2,1)$}
    \label{fig:proof1}
\end{figure}

The case $i=2$ follows by symmetry. It suffices to replace $P_1$ by $P_2$, $n_2$ by $n_1$ and consider
$$\left(\Delta,\Delta'\right)\in \Bigg\{\left(\Delta_1,\Delta^1_{n_1-1,2z}\right),\left(\Delta_2,\Delta^{2,\sigma}_{2z,2t}\right),\left(\Delta_1^*,\Delta^1_{n_1-1,2u+1}\right),\left(\Delta_2^*,\Delta^{2,\sigma}_{2u+1,2v+1}\right)\Bigg\}.$$
\medskip

Notice that $\#\Delta=\#\Delta'$ and $\dim(S_\Delta^{P,J})=\dim(C_\Delta^{P,J})=\dim(C_{\Delta'}^{P,J})$. Moreover, $\dis(S_\Delta^{P,J})\geq \dis(C_\Delta^{P,J})\geq d_0(C_{\Delta'}^{P,J})$ and the locality of $C_{\Delta'}^{P,J}$ is the same as the locality of $S_\Delta^{P,J}$. Then, the fact that $C_{\Delta'}^{P,J}$ is optimal (Corollary \ref{colpar}(1) when $\Delta$ is $\Delta_1$ or $\Delta_1^*$, and Corollary \ref{colpar}(2) when $\Delta$ is $\Delta_2$ or $\Delta_2^*$) implies that the subfield-subcode $S_\Delta^{P,J}$ over the field $\mathbb{F}_{p^h}$ is optimal, which concludes the proof.

\end{proof}

At the beginning of this subsection we announced the introduction of two families of new optimal codes. We start by giving some sets that will be useful for introducing our second family. In this case, $p=2$, $l\geq 4$ is an even positive integer, $h=\frac{l}{2}$ and $P_i=U_{2^h+1}\cup\{0\}\subseteq \mathbb{F}_q$. Recall that $\{i,i'\}=\{1,2\}$. Then, $n_i = 2^h + 2$ and $P = P_1 \times P_2$, where $P_{i'}$ is either $U_{n_{i'}}\subseteq \mathbb{F}_q$ , with $n_{i'}\mid q-1$ and $J = \{i'\}$, or $U_{n_{i'}-1}\cup\{0\}\subseteq \mathbb{F}_q$, with $n_{i'}- 1 \mid q- 1$ and $J = \emptyset$.

Now we introduce some sets which will be the sets $\supp_{X_i}(V_\Delta)$ corresponding to the sets $\Delta$ that we are going to consider. Let $1\leq j\leq n_{i'}-1$ and $2\leq z\leq 3$, $2^h-2z+1\geq \max\left\{0,2^h-6\right\}$ be positive integers and denote
        $$\Omega:=\left\{0,1,2^h\right\}=\Lambda^i_0\cup\Lambda^i_1,$$
        $$\Omega^\perp:=\left\{0,2,3,\dots,2^h-1\right\}=\left\{0,1,\dots,2^h+1\right\}\big\backslash\left(\Lambda^i_1\cup\Lambda^i_{2^h+1}\right)$$
    and
        $$\Omega^*(z)=\Omega^*:=\left\{z,z+1,\dots,2^h-z+1\right\}=\Lambda^i_z\cup\Lambda^i_{z+1}\cup\cdots\cup\Lambda^i_{2^h-1},$$
which are closed sets of $\{0,1,\dots,n_i-1\}=\{0,1,\dots,2^h+1\}$ (identified with $\{0\}\cup\mathbb{Z}/(2^h+1)\mathbb{Z}$) in the variable $i$ with respect to $2^h$. Define
        $$\Delta_1:=\begin{cases}
            \Omega\times\{0,1,\dots,n_2-1\}, & \text{when } i=1,\\
            \{0,1,\dots,n_1-1\}\times \Omega, & \text{otherwise;}
        \end{cases}$$
        $$\Delta^\perp_1:=\begin{cases}
            \Omega^\perp\times\{0,1,\dots,n_2-1\}, & \text{when } i=1,\\
            \{0,1,\dots,n_1-1\}\times \Omega^\perp, & \text{otherwise;}
        \end{cases}
        $$
        $$\Delta_2(j)=\Delta_2:=\begin{cases}
            \Omega\times\{0,1,\dots,j-1\} \cup (0,j), & \text{when } i=1,\\
            \{0,1,\dots,j-1\}\times \Omega \cup (j,0), & \text{otherwise;}
        \end{cases}
        $$
        
and
        $$\Delta^\perp_2(z)=\Delta^\perp_2:=\begin{cases}
            \Omega^\perp\times\{0,1,\dots,n_2-2\} \cup \Omega^*\times\{n_2-1\}, & \text{when } i=1,\\
            \{0,1,\dots,n_1-2\}\times \Omega^\perp \cup \{n_1-1\} \times \Omega^*, & \text{otherwise.}
        \end{cases}
        $$

Our next result plays the role of Lemma \ref{lemmaph+1} for studying our second family of optimal codes.

\begin{lemma}\label{lemmaph+2}
Keep the above notation. Let $V_1=\langle X^e \mid e \in \Omega \rangle _{\mathbb{F}_q}$, $V_2=\langle X^e \mid e \in \Omega^\perp \rangle _{\mathbb{F}_q} \subseteq \mathcal{R}_i$ and define $C_1:=\ev_{P_i}(V_1) \cap \mathbb{F}_{2^h}^{\#P_i=2^h+2}$ and $C_2:=\ev_{P_i}(V_2) \cap \mathbb{F}_{2^h}^{2^h+2}$. Then, $C_1$ and $C_2$ are MDS codes.
\end{lemma}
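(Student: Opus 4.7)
My approach is to reduce both MDS claims to lower bounds on the minimum distances of the ambient $\mathbb{F}_q$-codes $\ev_{P_i}(V_1)$ and $\ev_{P_i}(V_2)$, and then to argue directly on the zeros of the corresponding polynomials. First I would note that $\Omega=\Lambda^i_0\cup \Lambda^i_1$ and $\Omega^\perp=\{0,1,\dots,2^h+1\}\setminus(\Lambda^i_1\cup\Lambda^i_{2^h+1})$ are closed with respect to $2^h$, so by the discussion in Subsection~\ref{factssubfield} we have $\dim C_1=\#\Omega=3$ and $\dim C_2=\#\Omega^\perp=2^h-1$, and moreover $\dis(C_j)\geq \dis(\ev_{P_i}(V_j))$ for $j=1,2$. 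Since the Singleton bound forces $\dis(C_1)\leq 2^h$ and $\dis(C_2)\leq 4$, it suffices to prove $\dis(\ev_{P_i}(V_1))\geq 2^h$ and $\dis(\ev_{P_i}(V_2))\geq 4$ over $\mathbb{F}_q$.

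For $\ev_{P_i}(V_1)$, a nonzero codeword comes from $f=a+bX+cX^{2^h}$ with $(a,b,c)\neq (0,0,0)$. Using that $\alpha^{2^h+1}=1$ for every $\alpha\in U_{2^h+1}$, the equation $f(\alpha)=0$ on $U_{2^h+1}$ becomes $b\alpha^2+a\alpha+c=0$ after multiplying by $\alpha$. This quadratic has at most two roots in $\mathbb{F}_q$, and $f(0)=a$ contributes a zero at $\alpha=0$ only when $a=0$. A short case analysis (distinguishing $a=0$ from $a\neq 0$ and using that in characteristic two $x\mapsto x^2$ is a bijection, so $b\alpha^2+c=0$ has at most one solution) shows that $f$ has at most two zeros in $P_i$, hence $\w(\ev_{P_i}(f))\geq (2^h+2)-2=2^h$.

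For $\ev_{P_i}(V_2)$, a nonzero codeword comes from $f=a_0+\sum_{e=2}^{2^h-1}a_eX^e$, a polynomial of degree at most $2^h-1$ with $a_1=0$. Assume for contradiction that $\w(\ev_{P_i}(f))\leq 3$; then $f$ must split over $\mathbb{F}_q$ with exactly $2^h-1$ distinct roots lying in $P_i$. If $0$ is not a root, the set of roots is $U_{2^h+1}\setminus\{\gamma_1,\gamma_2\}$ for two distinct $\gamma_1,\gamma_2\in U_{2^h+1}$; combining the standard identities $\prod_{\beta\in U_{2^h+1}}\beta=1$ and $\sum_{\beta\in U_{2^h+1}}\beta^{-1}=0$ with Vieta's formulas one obtains
\[
  a_1 \;=\; a_{2^h-1}\cdot e_{2^h-2}(\beta_1,\dots,\beta_{2^h-1}) \;=\; a_{2^h-1}\cdot\frac{\gamma_1+\gamma_2}{(\gamma_1\gamma_2)^2},
\]
and the vanishing of the left-hand side forces $\gamma_1=\gamma_2$ in characteristic two, a contradiction. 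If instead $0$ is a root, then $a_0=0$ and $f=a_{2^h-1}X\prod_i(X-\beta_i)$ with $\beta_i\in U_{2^h+1}$, so the coefficient of $X$ equals $a_{2^h-1}\prod_i\beta_i\neq 0$, again contradicting $a_1=0$. Hence $f$ has at most $2^h-2$ zeros in $P_i$, giving $\dis(\ev_{P_i}(V_2))\geq 4$.

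The main obstacle will be the symmetric function manipulation in the first subcase of $\ev_{P_i}(V_2)$: one needs a clean closed form for the coefficient of $X$ in a product of $2^h-1$ linear factors whose roots run over $U_{2^h+1}\setminus\{\gamma_1,\gamma_2\}$, and the argument crucially exploits the characteristic two hypothesis through $a+a=0$.
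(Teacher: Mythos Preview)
Your argument is correct, but it follows a genuinely different route from the paper's. The paper first observes that $\ev_{P_i}(V_2)$ is the Euclidean dual of $\ev_{P_i}(V_1)$ and then invokes Delsarte's theorem to conclude $(C_2)^\perp=C_1$; hence it suffices to prove that $C_1$ alone is MDS. For that, the paper exploits the trace description of $C_1$, writing every codeword as $\ev_{P_i}(\mathcal{T}(\lambda+\mu X))=\ev_{P_i}\bigl(\lambda+\lambda^{2^h}+\mu X+\mu^{2^h}X^{2^h}\bigr)$ and performing a case analysis on whether $\lambda\in\mathbb{F}_{2^h}$.

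You instead show directly that the ambient $\mathbb{F}_q$-codes $\ev_{P_i}(V_1)$ and $\ev_{P_i}(V_2)$ are both MDS, and then pass to the subfield-subcodes using the dimension equality for closed sets together with the trivial inequality $\dis(C_j)\geq\dis(\ev_{P_i}(V_j))$. Your treatment of $V_1$ is somewhat cleaner than the paper's, since the quadratic $b\alpha^2+a\alpha+c=0$ on $U_{2^h+1}$ handles arbitrary coefficients without the trace constraint, and the $a=0$ case uses only that squaring is bijective in characteristic two. Your treatment of $V_2$ via Vieta's formulas and the identities $\prod_{\beta\in U_{2^h+1}}\beta=1$, $\sum_{\beta\in U_{2^h+1}}\beta^{-1}=0$ is an independent, self-contained argument that the paper avoids altogether by duality. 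The trade-off is clear: the paper's duality shortcut is conceptually cleaner once one has Delsarte and the dual description of $\Omega^\perp$, whereas your approach is more elementary and proves the slightly stronger fact that the $\mathbb{F}_q$-codes themselves are MDS.
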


\begin{proof}
Notice that $\ev_{P_i}(V_2)$ is the dual code of $\ev_{P_i}(V_1)$ since
$$\Omega^\perp=\left\{0,1,\dots,2^h+1\right\}\big\backslash \left\{2^h+1-x \mid x\in \Omega\right\}$$
\cite[Proposition 1]{GHR2015} and, by Delsarte Theorem, $(C_2)^\perp=\left(\ev_{P_i}(V_2)\right)^\perp \cap \mathbb{F}_{2^h}^{2^h+2}=C_1$. Thus, it suffices to prove that $C_1$ is an MDS code. Notice that its dimension coincides with the dimension of $\ev_{P_i}(V_1)$ because $\Omega=\Lambda_0^i\cup\Lambda_1^i$ is closed \cite[Theorem 2.3]{GHM2020}, so the parameters of $C_1$ are $[2^h+2,3,\leq 2^h]_{2^h}$. Moreover, any codeword $\mathbf{c}\in C_1$ is of the form $\mathbf{c}=\ev_{P_i}(f)$, where $f=\mathcal{T}(\lambda+\mu X)$, $\lambda$, $\mu\in \mathbb{F}_{q}=\mathbb{F}_{2^{2h}}$ and $\mathcal{T}\colon \mathcal{R}_i \to \mathcal{R}_i$ is the map given by $\mathcal{T}(g)=g+g^{2^h}$ \cite[Proposition 5]{GH2015}. We have to prove that $\dis(C_1)=2^h$, which is equivalent to prove that the number of roots of $f=\lambda+\lambda^{2^h}+\mu X+\mu^{2^h}X^{2^h}$ in $P_i=U_{2^h+1}\cup\{0\}$ is at most 2, or that the equation
\begin{equation}\label{eqroots}
\lambda+\mu X=\lambda^{2^h}+\mu^{2^h}X^{2^h}
\end{equation}
has at most 2 solutions in $P_i$. Indeed, if $\lambda\notin \mathbb{F}_{2^h}$, $X=0$ is not a solution since $\lambda\neq \lambda^{2^h}$. Thus, the above equation is equivalent to
$$\lambda X+\mu X^2=\lambda^{2^h}X+\mu^{2^h}X^{2^h+1}$$
and to
$$\mu X^2+\left(\lambda+\lambda^{2^h}\right)X+\mu^{2^h}=0,$$
which has at most 2 solutions in $P_i$. Otherwise, if $\lambda\in \mathbb{F}_{2^h}$, then $\lambda= \lambda^{2^h}$ and (\ref{eqroots}) is equivalent to
$$\mu X\left(\left(\mu X\right)^{2^h-1}-1\right)=0.$$
We may suppose $\mu\neq 0$ since the case $\mu=0$ is not relevant to compute the minimum distance. Then, the solutions are $X=0$ and $X=\frac{\beta}{\mu}$ with $\beta\in\mathbb{F}_{2^h}$ such that $\beta^{2^h+1}=\mu^{2^h+1}$ (since $X^{2^h+1}=1$), that is, $\beta^2=\mu^{2^h+1}$. The solution $X=\frac{\beta}{\mu}$ exists if $\mu^{2^h+1}$ ($\in\mathbb{F}_{2^h}$) is a square in $\in\mathbb{F}_{2^h}$ and therefore $\beta=\sqrt{\mu^{2^h+1}}$. Hence, we obtain at most 2 solutions in $P_i$, as desired.
\end{proof}

\begin{prop}\label{bivq+2}
    Keep the notation as before Lemma \ref{lemmaph+2} where $\;\mathbb{F}_{2^h}$ is regarded as a subfield of $\;\mathbb{F}_{q=2^{2h}}$. Fixed $i\in\{1,2\}$ and $P_i=U_{2^h+1}\cup\{0\}$, the set of $2^h+1$-th roots of unity together with $0$, the following statements determine sets $P_{i'}$, $J$ and $\Delta$ such that the subfield-subcodes $S_\Delta^{P,J}$ over the field $\;\mathbb{F}_{2^h}$ are optimal $(r,\delta)$-LRCs. Recall that $P=P_1\times P_2$ and $\{i,i'\}=\{1,2\}$.
\begin{enumerate}
    \item $P_{i'}=U_{n_{i'}}$ for some $n_{i'}$ such that $n_{i'}\mid q-1$; $J=\{i'\}$ and $\Delta=\Delta_1$, in which case $(r,\delta)=(3,2^h)$.

    \item $P_{i'}=U_{n_{i'}-1}\cup\{0\}$ for some $n_{i'}$ such that $n_{i'}-1\mid q-1$; $J=\emptyset$ and $\Delta=\Delta_1$, in which case $(r,\delta)=(3,2^h)$.

    \item $P_{i'}=U_{n_{i'}}$ for some $n_{i'}$ such that $n_{i'}\mid q-1$; $J=\{i'\}$ and $\Delta=\Delta_1^\perp$, in which case $(r,\delta)=(2^h-1,4)$.

    \item $P_{i'}=U_{n_{i'}-1}\cup\{0\}$ for some $n_{i'}$ such that $n_{i'}-1\mid q-1$; $J=\emptyset$ and $\Delta=\Delta_1^\perp$, in which case $(r,\delta)=(2^h-1,4)$.

    \item $P_{i'}=U_{n_{i'}}$ for some $n_{i'}$ such that $n_{i'}\mid 2^h-1$; $J=\{i'\}$ and $\Delta=\Delta_2$, where $j\geq \max\{1,n_{i'}-2^{h-1}\}$. In this case $(r,\delta)=(3,2^h)$.

    \item $P_{i'}=U_{n_{i'}-1}\cup\{0\}$ for some $n_{i'}$ such that $n_{i'}-1\mid 2^h-1$; $J=\emptyset$ and $\Delta=\Delta_2$, where $\max\{1,n_{i'}-2^{h-1}\}\leq j<n_{i'}-1$. In this case $(r,\delta)=(3,2^h)$.

    \item $P_{i'}=U_{n_{i'}-1}\cup\{0\}$ for some $n_{i'}$ such that $n_{i'}-1\mid q-1$; $J=\emptyset$ and $\Delta=\Delta_2$, where $j=n_{i'}-1$. In this case $(r,\delta)=(3,2^h)$.

    \item $P_{i'}=U_{n_{i'}-1}\cup\{0\}$ for some $n_{i'}$ such that $n_{i'}-1\mid q-1$; $J=\emptyset$ and $\Delta=\Delta_2^\perp$, in which case $(r,\delta)=(2^h-1,4)$.

\end{enumerate}
\end{prop}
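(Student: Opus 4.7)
The plan is to follow the template of the proof of Proposition \ref{bivq+1} step by step, specializing the ingredients to the setting $n_i=2^h+2$ and $P_i=U_{2^h+1}\cup\{0\}$, and invoking Lemma \ref{lemmaph+2} in place of Lemma \ref{lemmaph+1}. First I would verify that each $\Delta$ in (1)--(8) is closed with respect to $2^h$. The building blocks are explicitly closed: $\Omega=\Lambda_0^i\cup\Lambda_1^i$ and $\Omega^*(z)=\Lambda_z^i\cup\cdots\cup\Lambda_{2^h-1}^i$ are unions of minimal closed sets, and $\Omega^\perp=\{0,1,\dots,2^h+1\}\setminus(\Lambda_1^i\cup\Lambda_{2^h+1}^i)$ is the complement of a closed set. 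The closedness of the $X_{i'}$-factors $\{0,\dots,n_{i'}-2\}$ and $\{0,\dots,n_{i'}-1\}$ is standard, while each singleton $\{j\}$ appearing in $\Delta_2$ is a Frobenius fixed point under the divisibility hypotheses ($n_{i'}\mid 2^h-1$ in (5), $n_{i'}-1\mid 2^h-1$ in (6), or $j=n_{i'}-1\equiv 0$ in the identification used in (7)). Products and unions preserve closedness, so $\dim S_\Delta^{P,J}=\#\Delta$ by \cite[Theorem 2.3]{GHM2020}.

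The locality claim then follows from Proposition \ref{method}: in the cases built from $\Omega$ (namely (1), (2), (5), (6), (7)) one has $\supp_{X_i}(V_\Delta)=\Omega$ of cardinality $3$, yielding $(r,\delta)=(3,2^h)$; in the cases built from $\Omega^\perp$ (namely (3), (4), (8), using $\Omega^*(z)\subseteq \Omega^\perp$) the support is $\Omega^\perp$ of cardinality $2^h-1$, yielding $(r,\delta)=(2^h-1,4)$. To promote the inequality in Proposition \ref{method} to equality I would use Proposition \ref{commute} (applicable because $\Delta$ is closed) to identify the row projection of $S_\Delta^{P,J}$ with $\ev_{P_i}(V^i_\Delta)\cap\mathbb{F}_{2^h}^{2^h+2}$, which is MDS by Lemma \ref{lemmaph+2} for both choices of $V^i_\Delta$.

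For the minimum distance I would exhibit, for each case, a decreasing companion $\Delta'\subseteq E$ with $\#\Delta'=\#\Delta$ and the same $(r,\delta)$, such that the $d_0$-optimal MCC $C_{\Delta'}^{P,J}$ is covered by Propositions \ref{rect}, \ref{rectelim}, \ref{rectwithp} and Corollary \ref{colpar}: concretely $\Delta'=\Delta^1_{2,n_{i'}-1}$ in (1)--(2); $\Delta'=\Delta^1_{2^h-2,n_{i'}-1}$ in (3)--(4); $\Delta'=\Delta^3_{2,j}$ in (5)--(6); $\Delta'=\Delta^2_{2,0}$ in (7); and $\Delta'=\Delta^2_{2^h-2,\,2^h-2z+1}$ in (8). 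The inequality $\dis(S_\Delta^{P,J})\geq d_0(C_{\Delta'}^{P,J})$ then follows, as in Proposition \ref{bivq+1}, from a leading-monomial dichotomy with lexicographic order placing $X_{i'}$ above $X_i$: when the leading monomial of $f$ lies in the main body $\supp\subseteq E\setminus\{e_{i'}=n_{i'}-1\}$, a row/column product argument gives $\w(\ev_P(f))\geq N_1\cdot\dis_1$, where $\dis_1\in\{2^h,4\}$ is the MDS row distance from Lemma \ref{lemmaph+2} and $N_1\geq n_{i'}-\deg_{X_{i'}}(f)$ comes from a Reed--Solomon count on columns (the hypothesis $j\geq n_{i'}-2^{h-1}$ in (5)--(6) is what forces this product to meet $d_0$); when the leading monomial lies on the top row, the footprint bound of Proposition \ref{footprint} applied to a pseudoisometric translate $\Delta''=\Delta+(\nu,0)$, with $\nu$ chosen modulo $2^h+1$ to minimize the $X_i$-exponent of the shifted leading monomial, gives the remaining estimate.

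The main obstacle will be case (8): since $0\in P_i=U_{2^h+1}\cup\{0\}$, multiplication by $X_i^\nu$ is not weight-preserving (it annihilates the column $X_i=0$), so the pseudoisometric shift used in Proposition \ref{bivq+1} has to be adapted. I would decompose $\mathbf{c}=\mathbf{c}_0\oplus\mathbf{c}_U$ according to the CRT splitting $\mathcal{R}_i\cong\mathbb{F}_q\times \mathbb{F}_q[X_i]/(X_i^{2^h+1}-1)$, i.e.\ along $\{0\}\times P_{i'}$ and $U_{2^h+1}\times P_{i'}$. On the second factor $X_i$ is invertible, so the shift by $X_i^\nu$ with $\nu\equiv -z\pmod{2^h+1}$ is a genuine pseudoisometry; the shifted exponents of $\Omega^*(z)$ become $\{0,1,\dots,2^h-2z+1\}$, and the footprint bound bounds $\w(\mathbf{c}_U)$ appropriately. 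The key structural fact used to control $\mathbf{c}_0$ is that $0\notin\Omega^*(z)$ for $z\geq 2$, which forces $\deg_{X_{i'}}(f(0,X_{i'}))\leq n_{i'}-2$, so a Reed--Solomon estimate on the single column $X_i=0$ gives the extra contribution needed whenever $f(0,X_{i'})\not\equiv 0$, while a sharper shift argument in the cyclic factor handles the case $f(0,X_{i'})\equiv 0$. Once $\dis(S_\Delta^{P,J})\geq d_0(C_{\Delta'}^{P,J})$ is established in every case, the coincidence of $n$, $k$ and $(r,\delta)$ with the optimal code $C_{\Delta'}^{P,J}$ from Corollary \ref{colpar} forces equality in the Singleton-like bound \eqref{formula}, giving the optimality of $S_\Delta^{P,J}$.
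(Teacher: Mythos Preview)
Your proposal is correct and tracks the paper's proof almost verbatim: the same closedness checks (including the singleton argument for $\{j\}$ under the divisibility hypotheses in items (5)--(7)), the same invocation of Lemma~\ref{lemmaph+2} together with Propositions~\ref{commute} and~\ref{method} for the locality, and the identical list of decreasing companions $\Delta'$ in every case.

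The one place where you diverge is case~(8). The paper does not use a CRT splitting of $\mathbf{c}$; instead it performs a two-step shift directly in $\mathcal{R}$: first $\Delta\mapsto\Delta_0'':=\Delta+(2^h{+}2{-}z,0)$, using the relation $X_i^{2^h+2}=X_i$ in $\mathcal{R}_i$ so that every $X_i$-exponent in $\Delta_0''$ lands in $\{1,\dots,2^h{+}1\}$, and then $\Delta_0''\mapsto\Delta'':=\Delta_0''+(-1,0)$ via an honest polynomial division by $X_i$; the footprint bound is then applied on the \emph{full} set $P$ (where $n_i=2^h{+}2$), producing $2z{+}1$ in one stroke without any case distinction on $f(0,X_{i'})$. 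Your CRT route only gives $\w(\mathbf{c}_U)\geq 2z$ on $U$ (of size $2^h{+}1$), leaving the missing unit to be recovered from the $X_i=0$ column; that is fine when $f(0,X_{i'})\not\equiv 0$, but the ``sharper shift'' you promise for the vanishing subcase is the thinnest part of your sketch---the obvious refinement (removing the exponent $0$ from $\Omega^\perp$ when $f_0\equiv 0$ and re-shifting on $U$) still yields only $2z$, so here it is simpler to adopt the paper's two-step device on $P$.
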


\begin{proof}
    The proof follows from a close reasoning to that given in the proof of Proposition \ref{bivq+1}. There are some minor differences which we next explain.
    
        - Recall that $\{0,1,\dots,2^h+1\}$ is a set of representatives of $\{0\}\cup\mathbb{Z}/(2^h+1)\mathbb{Z}$ and $\Lambda^i_l$ is the minimal closed set in the variable $i$ of the element $l\in \{0,1,\dots,2^h+1\}$. Then, as we said before
        $$\Omega=\Lambda^i_0\cup\Lambda^i_1,$$
        $$\Omega^\perp=\left\{0,1,\dots,2^h+1\right\}\big\backslash\left(\Lambda^i_1\cup\Lambda^i_{2^h+1}\right),$$
        and
        $$\Omega^*=\Lambda^i_z\cup\Lambda^i_{z+1}\cup\cdots\cup\Lambda^i_{2^h-1},$$
        are clearly closed sets, which proves that the sets $\Delta_1$, $\Delta_1^\perp$ and $\Delta_2^\perp$, as well as $\Delta_2$ in item (7) are closed. The fact that the sets $\Delta_2$ in items (5) and (6) are closed follows by noticing that when $P_{i'}=U_{n_{i'}}$, $n_{i'}\mid 2^h-1$ or $P_{i'}=U_{n_{i'}-1}\cup\{0\}$, $n_{i'}-1\mid 2^h-1$, one can identify $2^h$ with 1 when computing minimal closed sets in the variable $i'$. Therefore, the sets $\{0,1,\dots,j-1\}$ and $\{j\}$ are closed because they are a union of single point minimal closed sets. This proves that $\Delta_2$ is closed.
 
        - Lemma \ref{lemmaph+2} implies that $\ev_{P_i}(V_1) \cap \mathbb{F}_{2^h}^{2^h+2}$ and $\ev_{P_i}(V_2) \cap \mathbb{F}_{2^h}^{2^h+2}$ are MDS codes with respective minimum distances $2^h$ and $4$. Proposition \ref{method} applied to $S_\Delta^{P,J}$ proves that it is an LRC with locality $(3,2^h)$ when $\Delta$ equals $\Delta_1$ or $\Delta_2$ and $(2^h-1,4)$ in case $\Delta$ be $\Delta_1^\perp$ or $\Delta_2^\perp$.

        - When $i=1$, the minimum distance of $S_\Delta^{P,J}$ admits the bound on the minimum distance of $C_{\Delta'}^{P,J}$, $d_0\left(C_{\Delta'}^{P,J}\right)$, whenever the pair $\left(\Delta,\Delta'\right)$ belongs to the following set:
        $$\left\{\left(\Delta_1,\Delta^1_{2,n_2-1}\right),\left(\Delta_1^\perp,\Delta^1_{2^h-2,n_2-1}\right),\left(\Delta_2,\begin{cases}
            \Delta^2_{2,0}, & \text{when } j=n_2-1,\\
            \Delta^3_{2,j}, & \text{otherwise.}
        \end{cases}\right),\left(\Delta_2^\perp,\Delta^2_{2^h-2,2^h-2z+1}\right)\right\}.$$
        Recall that the sets $\Delta^l_{i,j}$, $1\leq l\leq 3$ were introduced in Section \ref{work}. The cases where $\Delta$ equals $\Delta_1$ or $\Delta_1^\perp$ (respectively, $\Delta_2$ or $\Delta_2^\perp$) can be proved as in item a) (respectively c)) in the proof of Proposition \ref{bivq+1}. However, when $\Delta=\Delta_2$ and the exponent of the leading monomial of $f$ is $(0,j)$, we do not consider any set $\Delta''$ but we immediately notice that $\w(\mathbf{c})\geq n_1(n_2-j)$. When $\Delta=\Delta_2^\perp$ and the exponent of the leading monomial of $f$ is in $\Omega^*\times\{n_2-1\}$, following the idea of the proof of Proposition \ref{bivq+1}, we consider the sets:
        $$\Delta''_0:=\Delta+(2^h+2-z,0) \subseteq E \quad \textrm{and} \quad \Delta'':=\Delta''_0+(-1,0)\subseteq E,$$
        because of the relation $2^h+2=1$ in $\{0,1,\dots,2^h+1\}$. We illustrate this part of the proof with the example in Figure \ref{fig:proof2}. Since $0\in P_1$, now we have $\w(\mathbf{c})\geq \w(\ev_P(X^{-1}(X^{2^h+2-z}f)))  \geq \dis(2^h+1-2z,n_2-1)=2z+1$. Then, wherever the exponent of the leading monomial of $f$ is, $\w(\mathbf{c})\geq \min\{8,2z+1\}=2z+1$ and therefore the minimum distance of $S_\Delta^{P,J}$ admits the bound on the minimum distance of $C_{\Delta'}^{P,J}$, that is, $\dis(S_\Delta^{P,J})\geq 2z+1=d_0(C_{\Delta'}^{P,J})$.\hfill\break
        \begin{figure}[h]
    \centering
    \begin{multicols}{4}
    \centering
    \begin{subfigure}[b]{0.2\textwidth}
        \centering
        \begin{tikzpicture}[y=0.5cm, x=0.5cm,font=\normalsize]
        \filldraw[fill=gray!30] (2,0) rectangle (3,5);
        
        \draw (0,0) -- (0,5);
        \draw (0,0) -- (2,0);
        \draw (3,0) -- (5,0);
        
        \filldraw[fill=black!40,draw=black!80] (0,0) circle (2pt);
        \filldraw[fill=black!40,draw=black!80] (0,1) circle (2pt);
        \filldraw[fill=black!40,draw=black!80] (0,2) circle (2pt);
        \filldraw[fill=black!40,draw=black!80] (0,3) circle (2pt);
        \filldraw[fill=black!40,draw=black!80] (0,4) circle (2pt);
        
        \node [below] at (0,0)  {\scriptsize$0$};
        \node [below] at (1,0) {\scriptsize$1$};
        \node [below] at (2,0) {\scriptsize$2$};
        \node [below] at (3,0) {\scriptsize$3$};
        \node [below] at (4,0) {\scriptsize$4$};
        \node [below] at (5,0) {\scriptsize$5$};
        \node [left] at (0,0) {\scriptsize$0$};
        \node [left] at (0,1) {\scriptsize$1$};
        \node [left] at (0,2) {\scriptsize$2$};
        \node [left] at (0,3) {$\vdots$};
        \node [left] at (0,4) {\scriptsize$n_2-2$};
        \node [left] at (0,5) {\scriptsize$n_2-1$};
        \end{tikzpicture}
        \caption{$\Delta=\Delta_2^\perp \break = \Omega^\perp\times \{0,1,\dots,n_2-2\} \cup \Omega^*\times \{n_2-1\}$}
    \end{subfigure}

    \columnbreak
    \centering
    \begin{subfigure}[b]{0.2\textwidth}
        \centering
        \begin{tikzpicture}[y=0.5cm, x=0.5cm,font=\normalsize]
        \filldraw[fill=gray!30] (1,0) rectangle (2,5);
        
        \draw (0,0) -- (0,5);
        \draw (0,0) -- (1,0);
        \draw (2,0) -- (5,0);
        
        \filldraw[fill=black!40,draw=black!80] (4,0) circle (2pt);
        \filldraw[fill=black!40,draw=black!80] (4,1) circle (2pt);
        \filldraw[fill=black!40,draw=black!80] (4,2) circle (2pt);
        \filldraw[fill=black!40,draw=black!80] (4,3) circle (2pt);
        \filldraw[fill=black!40,draw=black!80] (4,4) circle (2pt);
        
        \node [below] at (0,0)  {\scriptsize$0$};
        \node [below] at (1,0) {\scriptsize$1$};
        \node [below] at (2,0) {\scriptsize$2$};
        \node [below] at (3,0) {\scriptsize$3$};
        \node [below] at (4,0) {\scriptsize$4$};
        \node [below] at (5,0) {\scriptsize$5$};
        \node [left] at (0,0) {\scriptsize$0$};
        \node [left] at (0,1) {\scriptsize$1$};
        \node [left] at (0,2) {\scriptsize$2$};
        \node [left] at (0,3) {$\vdots$};
        \node [left] at (0,4) {\scriptsize$n_2-2$};
        \node [left] at (0,5) {\scriptsize$n_2-1$};
        
        \end{tikzpicture}
        \caption{$\Delta_0''=(\Delta_2^\perp)_0'' \break =\Delta_2^\perp+(4,0)$}
    \end{subfigure}

    \columnbreak
    \centering
    \begin{subfigure}[b]{0.2\textwidth}
        \centering
        \begin{tikzpicture}[y=0.5cm, x=0.5cm,font=\normalsize]
        \filldraw[fill=gray!30] (0,0) rectangle (1,5);
        
        \draw (1,0) -- (5,0);
        
        \filldraw[fill=black!40,draw=black!80] (3,0) circle (2pt);
        \filldraw[fill=black!40,draw=black!80] (3,1) circle (2pt);
        \filldraw[fill=black!40,draw=black!80] (3,2) circle (2pt);
        \filldraw[fill=black!40,draw=black!80] (3,3) circle (2pt);
        \filldraw[fill=black!40,draw=black!80] (3,4) circle (2pt);
        
        \node [below] at (0,0)  {\scriptsize$0$};
        \node [below] at (1,0) {\scriptsize$1$};
        \node [below] at (2,0) {\scriptsize$2$};
        \node [below] at (3,0) {\scriptsize$3$};
        \node [below] at (4,0) {\scriptsize$4$};
        \node [below] at (5,0) {\scriptsize$5$};
        \node [left] at (0,0) {\scriptsize$0$};
        \node [left] at (0,1) {\scriptsize$1$};
        \node [left] at (0,2) {\scriptsize$2$};
        \node [left] at (0,3) {$\vdots$};
        \node [left] at (0,4) {\scriptsize$n_2-2$};
        \node [left] at (0,5) {\scriptsize$n_2-1$};
        
        \end{tikzpicture}
        \caption{$\Delta''=(\Delta_2^\perp)'' \break =(\Delta_2^\perp)_0''+(-1,0)$}
    \end{subfigure}

    \columnbreak
    \centering
    \begin{subfigure}[b]{0.2\textwidth}
        \centering
        \begin{tikzpicture}[y=0.5cm, x=0.5cm,font=\normalsize]
        \fill[gray!30] (0,0) rectangle (1,5);
        \fill[gray!30] (0,0) rectangle (2,4);
    
        \draw (0,0) -- (5,0);
        \draw (0,0) --(0,5) -- (1,5) -- (1,4) -- (2,4) -- (2,0);
        
        \node [below] at (0,0)  {\scriptsize$0$};
        \node [below] at (1,0) {\scriptsize$1$};
        \node [below] at (2,0) {\scriptsize$2$};
        \node [below] at (3,0) {\scriptsize$3$};
        \node [below] at (4,0) {\scriptsize$4$};
        \node [below] at (5,0) {\scriptsize$5$};
        \node [left] at (0,0) {\scriptsize$0$};
        \node [left] at (0,1) {\scriptsize$1$};
        \node [left] at (0,2) {\scriptsize$2$};
        \node [left] at (0,3) {$\vdots$};
        \node [left] at (0,4) {\scriptsize$n_2-2$};
        \node [left] at (0,5) {\scriptsize$n_2-1$};
        
        \end{tikzpicture}
        \caption{$\Delta'=(\Delta_2^\perp)'=\Delta_{2,1}^2$}
    \end{subfigure}
    \end{multicols}
    \caption{Sets $\Delta_2^\perp$, $\Delta''_0$, $\Delta''$ and $\Delta'$ considered in the proof of Proposition \ref{bivq+2} for values $(i,2^h,q,P_1,P_2,J,z)=(1,4,16,U_5\cup\{0\},U_{n_2-1}\cup\{0\},\emptyset,2)$}
    \label{fig:proof2}
\end{figure}
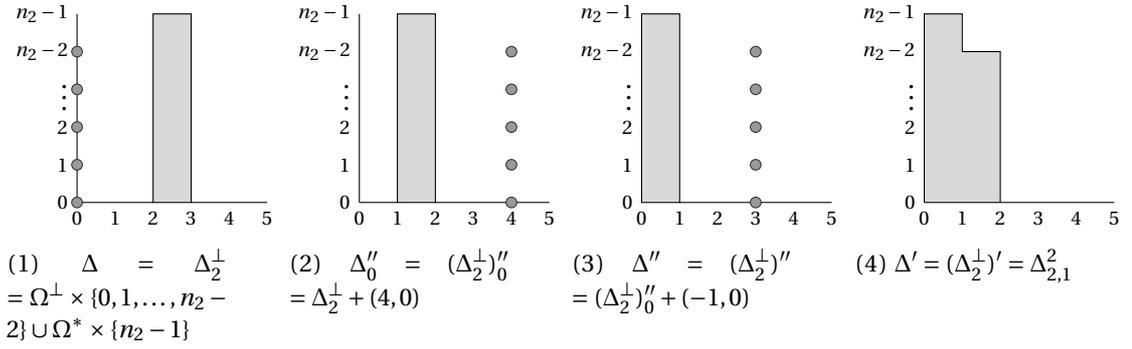
        The case $i=2$ can also be proved following the same arguments as above. It suffices to consider the symmetric situation, replace $P_1$ by $P_2$, $n_2$ by $n_1$ and use pairs $\left(\Delta,\Delta'\right)$ such that
        $$\left(\Delta,\Delta'\right)\in \left\{\left(\Delta_1,\Delta^1_{n_1-1,2}\right),\left(\Delta_1^\perp,\Delta^1_{n_1-1,2^h-2}\right),\left(\Delta_2,\begin{cases}
            \Delta^{2,\sigma}_{2,0}, & \text{when } j=2^h+1,\\
            \Delta^{3,\sigma}_{j,2}, & \text{otherwise.}
        \end{cases}\right),\left(\Delta_2^\perp,\Delta^{2,\sigma}_{2^h-2,2^h-2z+1}\right)\right\}.$$

We conclude with a last difference with respect to the proof of Proposition \ref{bivq+1}.

- The fact that $C_{\Delta'}^{P,J}$ is optimal follows from Corollary \ref{colpar} (1) when $\Delta$ is $\Delta_1$ or $\Delta_1^\perp$, Corollary \ref{colpar} (3) when $\Delta=\Delta_2$ and $j<n_{i'}-1$ and Corollary \ref{colpar} (2) when $\Delta$ equals $\Delta_2^\perp$ or $\Delta_2$ and $j=n_{i'}-1$.
\end{proof}

\begin{remark}\label{remsub}
Propositions \ref{bivq+1} and \ref{bivq+2} do not give an exhaustive list of the optimal $(r,\delta)$-codes one can find from subfield-subcodes of MCCs. These results are designed for providing $p^h$-ary optimal $(r,\delta)$-LRCs such that $r+\delta-1$ is either $p^h+1$ or $p^h+2$ and their lengths are a multiple of $r+\delta-1$, $r>1$ and $\delta>2$. Notice that these codes are new with respect to those given in the literature. The gcd-type conditions given in Proposition \ref{bivq+1} items (3) and (6) are stated to provide new parameters with respect to those obtained in \cite{SZW2019}. Moreover, excepting Proposition \ref{bivq+2}, items (5), with $j\neq n_{i'}-1$, and (7) (where $d\geq r+\delta)$, codes in both propositions have minimum distances $d\leq \min\{r+\delta,2\delta\}$, being new with respect to \cite{KWG2021}.
\end{remark}

\begin{exs}\label{examplessubfield}
In these examples, we give some new optimal LRCs obtained by applying Propositions \ref{bivq+1} and \ref{bivq+2}.
\begin{enumerate}
    \item Consider $(q,p^h,i,z,t,n_1,n_2)=(5^2,5,2,1,0,9,6)$, then by Proposition \ref{bivq+1} (3) one gets a $[54,25,6]_5$ optimal $(3,4)$-LRC.
    
    \item Consider $(q,p^h,i,z,n_1,n_2)=(7^2,7,2,2,17,8)$, then by Proposition \ref{bivq+1} (2) one gets a $[136,85,4]_7$ optimal $(5,4)$-LRC.
    
    \item Consider $(q,p^h,i,z,t,n_1,n_2)=(9^2,9,1,3,1,10,21)$, then by Proposition \ref{bivq+1} (3) one gets a $[210,143,8]_9$ optimal $(7,4)$-LRC.
    
    \item Consider $(q,p^h,i,n_1,n_2)=(2^4,4,1,6,15)$, then by Proposition \ref{bivq+2} (1) one gets a $[90,45,4]_4$ optimal $(3,4)$-LRC.
    
    \item Consider $(q,p^h,i,j,n_1,n_2)=(2^6,8,2,6,8,10)$, then by Proposition \ref{bivq+2} (6) one gets a $[80,19,20]_8$ optimal $(3,8)$-LRC.

    \item Consider $(q,p^h,i,z,n_1,n_2)=(2^6,8,1,3,10,10)$, then by Proposition \ref{bivq+2} (8) one gets a $[100,58,7]_8$ optimal $(7,4)$-LRC.

\end{enumerate}

Figure \ref{fig:exsubfield} shows the sets $\Delta$ introduced in Propositions \ref{bivq+1} and \ref{bivq+2} and used in the above examples. We make explicit the descomposition of the set $\Delta=\Delta_2=\{0,1,\dots,5\}\times\{0,1,8\}\cup \{(6,0)\}$ in Example \ref{examplessubfield} (5) as a union of minimal closed sets. Indeed, $(i,P_1,P_2,J)=(2,U_7\cup\{0\},U_9\cup\{0\},\emptyset)$ and $\Delta$ is the union of the following minimal closed sets:
\begin{equation*}
\begin{gathered}
    \Lambda_{(0,0)}=\{(0,0)\}, \quad \Lambda_{(1,0)}=\{(1,0)\}, \quad \Lambda_{(2,0)}=\{(2,0)\}, \quad \Lambda_{(3,0)}=\{(3,0)\}, \quad \Lambda_{(4,0)}=\{(4,0)\},\\
    \Lambda_{(5,0)}=\{(5,0)\}, \quad \Lambda_{(6,0)}=\{(6,0)\}, \quad \Lambda_{(0,1)}=\{(0,1),(0,8)\}, \quad \Lambda_{(1,1)}=\{(1,1),(1,8)\},\\
    \Lambda_{(2,1)}=\{(2,1),(2,8)\}, \quad \Lambda_{(3,1)}=\{(3,1),(3,8)\}, \quad \Lambda_{(4,1)}=\{(4,1),(4,8)\}, \quad \Lambda_{(5,1)}=\{(5,1),(5,8)\}.
\end{gathered}    
\end{equation*}

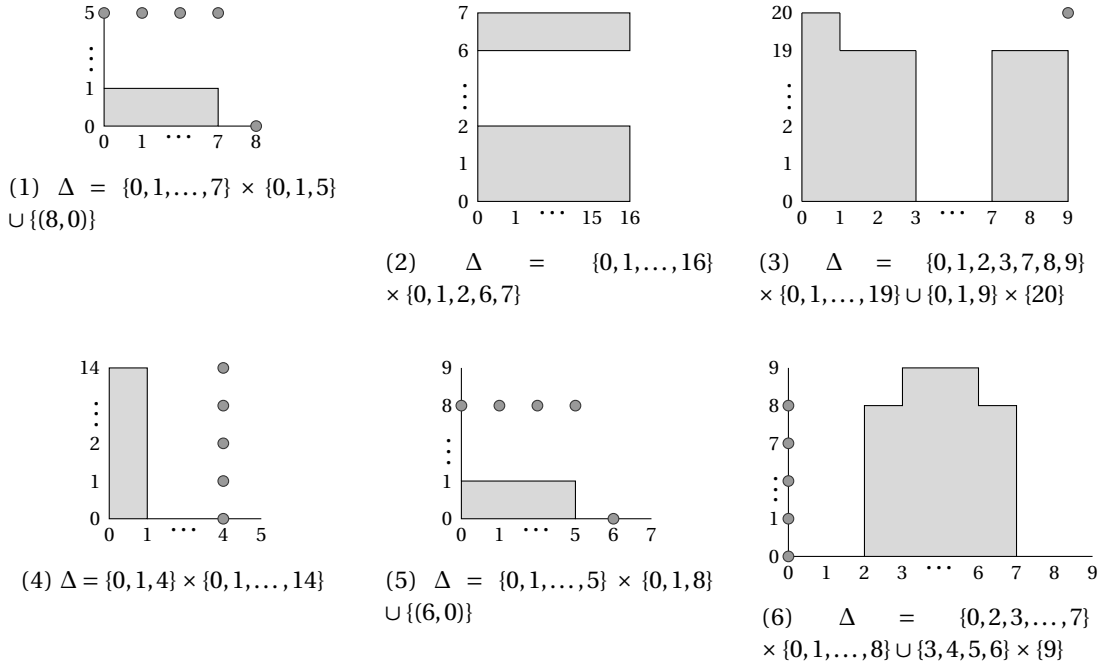
\begin{figure}[h]
    \centering
    \begin{multicols}{3}
    \centering
    \begin{subfigure}[b]{0.3\textwidth}
        \centering
        \begin{tikzpicture}[y=0.5cm, x=0.5cm,font=\normalsize]
        \filldraw[fill=gray!30] (0,0) rectangle (3,1);
        
        \draw (3,0) -- (4,0);
        \draw (0,1) -- (0,3);
        
        \filldraw[fill=black!40,draw=black!80] (4,0) circle (2pt);
        \filldraw[fill=black!40,draw=black!80] (0,3) circle (2pt);
        \filldraw[fill=black!40,draw=black!80] (1,3) circle (2pt);
        \filldraw[fill=black!40,draw=black!80] (2,3) circle (2pt);
        \filldraw[fill=black!40,draw=black!80] (3,3) circle (2pt);
        
        \node [below] at (0,0)  {\scriptsize$0$};
        \node [below] at (1,0) {\scriptsize$1$};
        \node [below] at (2,0) {$\dots$};
        \node [below] at (3,0) {\scriptsize$7$};
        \node [below] at (4,0) {\scriptsize$8$};
        \node [left] at (0,0) {\scriptsize$0$};
        \node [left] at (0,1) {\scriptsize$1$};
        \node [left] at (0,2) {$\vdots$};
        \node [left] at (0,3) {\scriptsize$5$};
        \end{tikzpicture}
        \caption{$\Delta=\{0,1,\dots,7\}\times\{0,1,5\} \break \cup\{(8,0)\}$}
    \end{subfigure}
    
    \columnbreak 
    \centering
    \begin{subfigure}[b]{0.3\textwidth}
        \centering
        \begin{tikzpicture}[y=0.5cm, x=0.5cm,font=\normalsize]
        \filldraw[fill=gray!30] (0,0) rectangle (4,2);
        \filldraw[fill=gray!30] (0,4) rectangle (4,5);
        
        \draw (0,2) -- (0,4);
        
        \node [below] at (0,0)  {\scriptsize$0$};
        \node [below] at (1,0) {\scriptsize$1$};
        \node [below] at (2,0) {$\dots$};
        \node [below] at (3,0) {\scriptsize$15$};
        \node [below] at (4,0) {\scriptsize$16$};
        \node [left] at (0,0) {\scriptsize$0$};
        \node [left] at (0,1) {\scriptsize$1$};
        \node [left] at (0,2) {\scriptsize$2$};
        \node [left] at (0,3) {$\vdots$};
        \node [left] at (0,4) {\scriptsize$6$};
        \node [left] at (0,5) {\scriptsize$7$};
        
        \end{tikzpicture}
        \caption{$\Delta=\{0,1,\dots,16\} \break \times\{0,1,2,6,7\}$}
    \end{subfigure}
    
    \columnbreak
    \centering
    \begin{subfigure}[b]{0.3\textwidth}
        \centering
        \begin{tikzpicture}[y=0.5cm, x=0.5cm,font=\normalsize]
        \fill[gray!30] (0,0) rectangle (3,4);
        \fill[gray!30] (0,4) rectangle (1,5);
        \filldraw[fill=gray!30] (5,0) rectangle (7,4);
        
        \draw (0,0) -- (0,5);
        \draw (0,0) -- (5,0);
        \draw (3,0) -- (3,4);
        \draw (1,4) -- (3,4);
        \draw (1,4) -- (1,5);
        \draw (0,5) -- (1,5);
        
        \filldraw[fill=black!40,draw=black!80] (7,5) circle (2pt);
        
        \node [below] at (0,0)  {\scriptsize$0$};
        \node [below] at (1,0) {\scriptsize$1$};
        \node [below] at (2,0) {\scriptsize$2$};
        \node [below] at (3,0) {\scriptsize$3$};
        \node [below] at (4,0) {$\dots$};
        \node [below] at (5,0) {\scriptsize$7$};
        \node [below] at (6,0) {\scriptsize$8$};
        \node [below] at (7,0) {\scriptsize$9$};
        \node [left] at (0,0) {\scriptsize$0$};
        \node [left] at (0,1) {\scriptsize$1$};
        \node [left] at (0,2) {\scriptsize$2$};
        \node [left] at (0,3) {$\vdots$};
        \node [left] at (0,4) {\scriptsize$19$};
        \node [left] at (0,5) {\scriptsize$20$};
        
        \end{tikzpicture}
        \caption{$\Delta=\{0,1,2,3,7,8,9\} \break \times\{0,1,\dots,19\}\cup\{0,1,9\}\times\{20\}$}
    \end{subfigure}
    \columnbreak
    \end{multicols}
    
    \centering
    \begin{multicols}{3}
    \centering
    \begin{subfigure}[b]{0.3\textwidth}
        \centering
        \begin{tikzpicture}[y=0.5cm, x=0.5cm,font=\normalsize]
        \filldraw[fill=gray!30] (0,0) rectangle (1,4);
        
        \draw (1,0) -- (4,0);
        
        \filldraw[fill=black!40,draw=black!80] (3,0) circle (2pt);
        \filldraw[fill=black!40,draw=black!80] (3,1) circle (2pt);
        \filldraw[fill=black!40,draw=black!80] (3,2) circle (2pt);
        \filldraw[fill=black!40,draw=black!80] (3,3) circle (2pt);
        \filldraw[fill=black!40,draw=black!80] (3,4) circle (2pt);
        
        \node [below] at (0,0)  {\scriptsize$0$};
        \node [below] at (1,0) {\scriptsize$1$};
        \node [below] at (2,0) {$\dots$};
        \node [below] at (3,0) {\scriptsize$4$};
        \node [below] at (4,0) {\scriptsize$5$};
        \node [left] at (0,0) {\scriptsize$0$};
        \node [left] at (0,1) {\scriptsize$1$};
        \node [left] at (0,2) {\scriptsize$2$};
        \node [left] at (0,3) {$\vdots$};
        \node [left] at (0,4) {\scriptsize$14$};
        
        \end{tikzpicture}
        \caption{$\Delta=\{0,1,4\}\times\{0,1,\dots,14\}$}
    \end{subfigure}

    \columnbreak
    \centering
    \begin{subfigure}[b]{0.3\textwidth}
        \centering
        \begin{tikzpicture}[y=0.5cm, x=0.5cm,font=\normalsize]
        \filldraw[fill=gray!30] (0,0) rectangle (3,1);
        
        \draw (3,0) -- (5,0);
        \draw (0,1) -- (0,4);
        
        \filldraw[fill=black!40,draw=black!80] (4,0) circle (2pt);
        \filldraw[fill=black!40,draw=black!80] (0,3) circle (2pt);
        \filldraw[fill=black!40,draw=black!80] (1,3) circle (2pt);
        \filldraw[fill=black!40,draw=black!80] (2,3) circle (2pt);
        \filldraw[fill=black!40,draw=black!80] (3,3) circle (2pt);
        
        \node [below] at (0,0)  {\scriptsize$0$};
        \node [below] at (1,0) {\scriptsize$1$};
        \node [below] at (2,0) {$\dots$};
        \node [below] at (3,0) {\scriptsize$5$};
        \node [below] at (4,0) {\scriptsize$6$};
        \node [below] at (5,0) {\scriptsize$7$};
        \node [left] at (0,0) {\scriptsize$0$};
        \node [left] at (0,1) {\scriptsize$1$};
        \node [left] at (0,2) {$\vdots$};
        \node [left] at (0,3) {\scriptsize$8$};
        \node [left] at (0,4) {\scriptsize$9$};
        
        \end{tikzpicture}
        \caption{$\Delta=\{0,1,\dots,5\}\times\{0,1,8\} \break \cup\{(6,0)\}$}
    \end{subfigure}

    \columnbreak
    \centering
    \begin{subfigure}[b]{0.3\textwidth}
        \centering
        \begin{tikzpicture}[y=0.5cm, x=0.5cm,font=\normalsize]
        \fill[gray!30] (2,0) rectangle (6,4);
        \fill[gray!30] (3,4) rectangle (5,5);
        
        \draw (0,0) -- (8,0);
        \draw (0,0) -- (0,5);
        \draw (2,0) -- (2,4) -- (3,4) -- (3,5) -- (5,5) -- (5,4) -- (6,4) -- (6,0);
        
        \filldraw[fill=black!40,draw=black!80] (0,0) circle (2pt);
        \filldraw[fill=black!40,draw=black!80] (0,1) circle (2pt);
        \filldraw[fill=black!40,draw=black!80] (0,2) circle (2pt);
        \filldraw[fill=black!40,draw=black!80] (0,3) circle (2pt);
        \filldraw[fill=black!40,draw=black!80] (0,4) circle (2pt);
        
        \node [below] at (0,0) {\scriptsize$0$};
        \node [below] at (1,0) {\scriptsize$1$};
        \node [below] at (2,0) {\scriptsize$2$};
        \node [below] at (3,0) {\scriptsize$3$};
        \node [below] at (4,0) {$\dots$};
        \node [below] at (5,0) {\scriptsize$6$};
        \node [below] at (6,0) {\scriptsize$7$};
        \node [below] at (7,0) {\scriptsize$8$};
        \node [below] at (8,0) {\scriptsize$9$};
        \node [left] at (0,0) {\scriptsize$0$};
        \node [left] at (0,1) {\scriptsize$1$};
        \node [left] at (0,2) {$\vdots$};
        \node [left] at (0,3) {\scriptsize$7$};
        \node [left] at (0,4) {\scriptsize$8$};
        \node [left] at (0,5) {\scriptsize$9$};
        
        \end{tikzpicture}
        \caption{$\Delta=\{0,2,3,\dots,7\} \break \times\{0,1,\dots,8\}\cup\{3,4,5,6\} \times \{9\}$}
    \end{subfigure}
    \end{multicols}

    \caption{Sets $\Delta$ considered in Examples \ref{examplessubfield}}
    \label{fig:exsubfield}
\end{figure}

\end{exs}

Now, we state our main results in this subsection which are Theorems \ref{thmparbivq+1} and \ref{thmparbivq+2}. These results follow directly from Propositions \ref{bivq+1} and \ref{bivq+2} and provide explicitly the parameters and $(r,\delta)$-localities of the new optimal LRCs we have obtained.

\begin{thm}\label{thmparbivq+1}
Let $\;\mathbb{F}_q$ be a finite field with $q=p^l$, $p$ being a prime number and $l$ a positive integer. Consider another positive integer $h$ such that $h$ divides $l$, $p^h\geq 4$ if $p=2$ ($p^h\geq 5$, otherwise) and assume $p^h+1\mid q-1$. Consider also nonnegative integers $z$ and $t$ satisfying $0\leq t<z \leq \floor*{\frac{p^h}{2}}-1$, $2t\geq \max\{0,4z-p^h-1\}$. Regard $\; \mathbb{F}_{p^h}$ as a subfield of $\;\mathbb{F}_q$.

Then, there exists an optimal $(r,\delta)$-LRC over $\;\mathbb{F}_{p^h}$ with the following parameters depending on two integer variables $n'$ and $a$:
$$\left[n,k,d\right]_{p^h}=\left[(p^h+1)n',(n'-1)(2z+1)+2a+1, p^h+1-2a\right]_{p^h}$$
and
$$(r,\delta)=(2z+1,p^h-2z+1),$$
whenever some of the following conditions hold:
\begin{enumerate}
    \item $n'\mid q-1$ and $a=z$.
    
    \item $n'-1\mid q-1$ and $a=z$.
    
    \item $n'-1\mid q-1$, $a=t$ and, if $p$ is odd, either $\gcd(n',p^h)\neq 1$ or $\gcd(n',p^h+1)\neq 1$.
\end{enumerate}

\medskip
Assume now that $p=2$ and consider a nonnegative integer $u$ and, if $u\geq 1$, a nonnegative integer $v$, satisfying $0\leq u\leq \frac{p^h}{2}-2$, $0\leq v<u$ and $2v+1\geq\max\{0,4u+1-p^h\}$.

Then, there exists an optimal $(r,\delta)$-LRC over $\;\mathbb{F}_{p^h}$ with the following parameters depending on two integer variables $n'$ and $a$:
$$\left[n,k,d\right]_{p^h}=\left[(p^h+1)n',(n'-1)(2u+2)+2a+2, p^h-2a\right]_{p^h}$$
and
$$(r,\delta)=(2u+2,p^h-2u),$$
whenever some of the following conditions hold:
\begin{enumerate}
    \item $n'\mid q-1$ and $a=u$.
    
    \item $n'-1\mid q-1$ and $a=u$.
    
    \item $n'-1\mid q-1$ and $a=v$.
\end{enumerate}

\end{thm}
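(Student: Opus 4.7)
The plan is to derive the theorem directly from Proposition \ref{bivq+1}: that proposition already establishes the optimality and identifies $(r,\delta)$ in each of six relevant cases, and the theorem is essentially a reorganization of those data into a compact parametric form indexed by two free integers $n'$ and $a$.

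First I would handle length and locality, which are immediate. In each of the six cases of Proposition \ref{bivq+1} one has $n=n_1n_2=(p^h+1)n'$, where $n'=n_{i'}$ is the freely chosen size of $P_{i'}$ (under the appropriate divisibility condition), and the locality is $(2z+1,p^h-2z+1)$ in cases (1)--(3) and $(2u+2,p^h-2u)$ in cases (4)--(6), matching the statement exactly.

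Next, I would compute the dimension. Each of the four evaluation sets $\Delta_1,\Delta_2,\Delta_1^*,\Delta_2^*$ is closed with respect to $p^h$ (this is verified inside the proof of Proposition \ref{bivq+1}), so by \cite[Theorem 2.3]{GHM2020} the subfield-subcode satisfies $\dim S_\Delta^{P,J}=\#\Delta$. A direct count then yields $\#\Delta_1=(2z+1)n'$ and $\#\Delta_2=(2z+1)(n'-1)+(2t+1)$ in the first part of the theorem (with $a=z$ and $a=t$ respectively), and analogously $\#\Delta_1^*=(2u+2)n'$ and $\#\Delta_2^*=(2u+2)(n'-1)+(2v+2)$ in the second part (with $a=u$ and $a=v$). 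Each of these rearranges to the uniform expression $(n'-1)(2z+1)+2a+1$, respectively $(n'-1)(2u+2)+2a+2$, given in the statement.

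Finally, for the minimum distance I would invoke the decreasing companion set $\Delta'$ produced in the proof of Proposition \ref{bivq+1}: one has $\dis(S_\Delta^{P,J})\geq d_0(C_{\Delta'}^{P,J})$ and the MCC $C_{\Delta'}^{P,J}$ shares the length, dimension and locality of $S_\Delta^{P,J}$. Since $S_\Delta^{P,J}$ is optimal, the Singleton-type bound (\ref{formula}) forces the equality $\dis(S_\Delta^{P,J})=d_0(C_{\Delta'}^{P,J})$, and reading $d_0$ off the shape of $\Delta'$ described in Propositions \ref{rect} and \ref{rectelim} gives $d=p^h+1-2a$ for the first part and $d=p^h-2a$ for the second. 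The main (mild) obstacle is checking that the two indexing schemes in Proposition \ref{bivq+1}---the one driven by $i=1$ and the one driven by $i=2$---reassemble into the single parametric family written here, but the symmetry between those two roles reduces this to a purely combinatorial verification.
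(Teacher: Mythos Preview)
Your proposal is correct and matches the paper's approach: the paper states that Theorem \ref{thmparbivq+1} follows directly from Proposition \ref{bivq+1}, and your argument simply makes explicit the parameter computations (length, dimension via $\#\Delta$, and minimum distance via the companion set $\Delta'$) that this entails. The only nuance you flag---reconciling the $i=1$ and $i=2$ cases---is indeed handled by the symmetry already noted in the proof of Proposition \ref{bivq+1}, so there is nothing further to check.
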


\begin{thm}\label{thmparbivq+2}
Let $\;\mathbb{F}_q$ be a finite field with $q=2^l$, $l\geq 4$ being an even positive integer and $h=\frac{l}{2}$. Consider also a positive integer $z$ satisfying $\; 2\leq z\leq 3$, $2^h-2z+1\geq \max\{0,2^h-6\}$. Regard $\; \mathbb{F}_{2^h}$ as a subfield of $\;\mathbb{F}_q$.

Then, there exists an optimal $(r,\delta)$-LRC over $\;\mathbb{F}_{2^h}$ with the following parameters depending on the integer variables $n'$, $a$, $b$ and $c$:
$$\left[n,k,d\right]_{2^h}=\left[(2^h+2)n',a(n'-1)+b, 2h+3-b\right]_{2^h}$$
and
$$(r,\delta)=(a,c),$$
whenever some of the following conditions hold:
\begin{enumerate}

    \item $n'\mid q-1$ and $(a,b,c)=(3,3,2^h)$.
    
    \item $n'-1\mid q-1$ and $(a,b,c)=(3,3,2^h)$.
    
    \item $n'\mid q-1$ and $(a,b,c)=(2^h-1,2^h-1,4)$.

    \item $n'-1\mid q-1$ and $(a,b,c)=(2^h-1,2^h-1,4)$.

    \item $n'-1\mid q-1$ and $(a,b,c)=(2^h-1,2^h-2z+2,4)$.
\end{enumerate}

Finally, consider $n'$ and $j$ positive integers such that $j\leq n'-1$ and they satisfy some of the following conditions:

\begin{enumerate}

    \item $n'\mid 2^h-1$ and $j\geq\max\{1, n'-2^{h-1}\}$.
    
    \item $n'-1\mid 2^h-1$ and $\max\{1,n'-2^{h-1}\}\leq j < n'-1$.
    
    \item $n'-1\mid q-1$ and $j=n'-1$.
\end{enumerate}
Then, there exists an optimal $(r,\delta)$-LRC over $\;\mathbb{F}_{2^h}$ with parameters
$$\left[n,k,d\right]_{2^h}=\left[(2^h+2)n',3j+1, (2^h+2)(n'-j)\right]_{2^h}$$
and
$$(r,\delta)=(3,2^h).$$
\end{thm}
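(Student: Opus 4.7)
The plan is to derive Theorem~\ref{thmparbivq+2} as a direct consequence of Proposition~\ref{bivq+2}: that proposition already certifies that each $S_\Delta^{P,J}$ appearing there is an optimal $(r,\delta)$-LRC with the localities $(3,2^h)$ or $(2^h-1,4)$ required by the statement, so the remaining task is purely to compute the length, dimension and minimum distance of the code for each choice of $\Delta$ and then repackage the data via the auxiliary integers $a$, $b$, $c$.

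The length is immediate: since $P_i=U_{2^h+1}\cup\{0\}$ forces $n_i=2^h+2$, and $P_{i'}$ (either $U_{n'}$ or $U_{n'-1}\cup\{0\}$) has cardinality $n'$, we get $n=(2^h+2)n'$ in every case. For the dimension, the remark at the end of Subsection~\ref{factssubfield} yields $\dim(S_\Delta^{P,J})=\dim(C_\Delta^{P,J})=\#\Delta$ whenever $\Delta$ is closed, and closedness of every $\Delta$ appearing in Proposition~\ref{bivq+2} was already checked in the first paragraph of its proof; so I would just count $\#\Delta$ in each case, obtaining $3n'$ for $\Delta_1$, $(2^h-1)n'$ for $\Delta_1^\perp$, $(2^h-1)(n'-1)+(2^h-2z+2)$ for $\Delta_2^\perp$, and $3j+1$ for $\Delta_2$. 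For the minimum distance, the proof of Proposition~\ref{bivq+2} already shows $d\geq d_0(C_{\Delta'}^{P,J})$ for an explicit decreasing set $\Delta'$, and optimality of $S_\Delta^{P,J}$ together with the inequality~(\ref{formula2}) forces equality, so $d$ can be read off directly from the formulas for rectangles and their modifications in Propositions~\ref{rect}, \ref{rectelim} and \ref{rectwithp}.

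Carrying this out case by case: for $\Delta_1$ the compacted set $\Delta^1_{2,n'-1}$ gives $d=2^h$; for $\Delta_1^\perp$ the set $\Delta^1_{2^h-2,n'-1}$ gives $d=4$; for $\Delta_2^\perp$ the set $\Delta^2_{2^h-2,\,2^h-2z+1}$ gives $d=2z+1$; and for $\Delta_2$ the set $\Delta^3_{2,j}$ (when $j<n'-1$) or $\Delta^2_{2,0}$ (when $j=n'-1$) gives $d=(2^h+2)(n'-j)$. Matching these values against the triples $(a,b,c)$ listed in each item of the statement reproduces the announced formulas $k=a(n'-1)+b$ and the claimed minimum distance. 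The main obstacle is essentially bookkeeping: one must verify in each case that the compacted set $\Delta'$ actually falls into the range of validity of the relevant proposition of Subsection~\ref{subm2} (for instance $s\geq\max\{0,2i-n_1\}$ for $\Delta^2_{i,s}$ in Proposition~\ref{rectelim}, and the analogous bound $j\geq\max\{1,(i(n_2+1)-n_1)/i\}$ for $\Delta^3_{i,j}$ in Proposition~\ref{rectwithp}), and this is exactly where the hypotheses on $z$ and $j$ imposed in Proposition~\ref{bivq+2} are used.
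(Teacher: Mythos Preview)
Your proposal is correct and mirrors exactly what the paper does: it states that Theorems \ref{thmparbivq+1} and \ref{thmparbivq+2} ``follow directly from Propositions \ref{bivq+1} and \ref{bivq+2}'' without further argument, so your explicit parameter count (length $(2^h+2)n'$, dimension $\#\Delta$ via closedness, and distance read off from the compacted $\Delta'$ using optimality to force $d=d_0(C_{\Delta'}^{P,J})$) is precisely the intended derivation spelled out in detail.
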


Table \ref{table:parbiv} shows parameters of some new optimal $(r,\delta)$-LRCs coming from subfield-subcodes deduced from Theorems \ref{thmparbivq+1} and \ref{thmparbivq+2}.

\begin{table}[h!]
\centering
\begin{tabular}{l c c c c c c c c} 
 \hline
 Item in Theorem & $p^h$ & $q$ & $n$ & $k$ & $d$ & $r$ & $\delta$ \\ [0.5ex] 
 \hline
 \ref{thmparbivq+1} (3) (for $(n',z,t)=(25,1,0)$) & 5 & 25 & $150=6\cdot25$ & 73 & 6 & 3 & 4 \\
 \ref{thmparbivq+1} (1) (for $(n',z)=(48,1)$) & 7 & 49 & $384=8\cdot48$ & 144 & 6 & 3 & 6 \\
 \ref{thmparbivq+1} (2) (for $(n',u)=(16,0)$) & 4 & 16 & $80=5\cdot16$ & 32 & 4 & 2 & 4 \\
 \ref{thmparbivq+1} (3) (for $(n',z,t)=(22,2,0)$) & 8 & 64 & $198=9\cdot22$ & 106 & 9 & 5 & 5 \\
 \ref{thmparbivq+2} (2) (for $(n',j)=(8,5)$) & 8 & 64 & $80=10\cdot8$ & 16 & 30 & 3 & 8 \\
 \ref{thmparbivq+2} (3) (for $n'=18$) & 4 & 256 & $108=6\cdot18$ & 144 & 6 & 3 & 6 \\ [1ex]
\end{tabular}
\caption{Optimal subfield-subcodes over $\mathbb{F}_{p^h}$}
\label{table:parbiv}
\end{table}

\subsection{Optimal $(r,\delta)$-LRCs coming from subfield-subcodes of multivariate MCCs}\label{opmultsubfield}

This section is devoted to extend Propositions \ref{bivq+1} and \ref{bivq+2} and Theorems \ref{thmparbivq+1} and \ref{thmparbivq+2} to the multivariate case. The corresponding versions are stated in the below Propositions \ref{multq+1} and \ref{multq+2}, and Theorems \ref{thmparmultq+1} and \ref{thmparmultq+2}. Their proofs run parallel to those given in the bivariate case and we omit them.

Keep the notation as in Section \ref{framework} and Subsection \ref{factssubfield}. Fix $j_0\in\{1,\dots,m\}$ and $S_1$, $S_2\subseteq\{1,\dots,m\}\backslash\{j_0\}$ such that $S_1\cup S_2=\{1,\dots,m\}\backslash\{j_0\}$ and $S_1\cap S_2=\emptyset$.

For our first construction, keep the notation as in the paragraphs before Lemma \ref{lemmaph+1} 
but changing $i$ by $j_0$. In particular consider nonnegative integers $z$ and $t$ (and when $p=2$) $u$ and $v$ as in those paragraphs. Denote
$$O_{z,t}:=\left\{t+1,t+2,\dots,z,p^h+1-z, p^h+2-z,\dots,p^h-t\right\}$$
and
$$O_{u,v}:=\left\{\frac{p^h}{2}-u,\frac{p^h}{2}-u+1,\dots, \frac{p^h}{2}-v-1, \frac{p^h}{2}+v+2,\frac{p^h}{2}+v+3,\dots, \frac{p^h}{2}+u+1\right\}.$$
Define
        $$\Delta_1:=\{0,1,\dots,n_1-1\}\times\cdots\times \{0,1,\dots,n_{j_0-1}-1\} \times \Omega_z\times \{0,1,\dots,n_{j_0+1}-1\} \times \cdots\times \{0,1,\dots,n_m-1\},$$
        $$\Delta_2:=\Delta_1\big\backslash \left\{(n_1-1,\dots,n_{j_0-1}-1,e_{j_0},n_{j_0+1}-1,\dots,n_m-1) \mid e_{j_0} \in O_{z,t}\right\},
        $$
        $$\Delta^*_1:=\{0,1,\dots,n_1-1\}\times\cdots\times \{0,1,\dots,n_{j_0-1}-1\} \times \Omega_u^*\times \{0,1,\dots,n_{j_0+1}-1\} \times \cdots\times \{0,1,\dots,n_m-1\}$$
and
        $$\Delta^*_2:=\Delta^*_1\big\backslash \left\{(n_1-1,\dots,n_{j_0-1}-1,e_{j_0},n_{j_0+1}-1,\dots,n_m-1) \mid e_{j_0} \in O_{u,v}\right\}.
        $$

\begin{prop}\label{multq+1}
Keep the notation as above where $\;\mathbb{F}_{p^h}$ is regarded as a subfield of $\;\mathbb{F}_{q=p^l}$ and $p^h+1\mid q-1$. Fixed $j_0$ and $P_{j_0}=U_{p^h+1}$, the set of $p^h+1$-th roots of unity, the following statements determine sets $P=P_1\times \cdots \times P_m$, $J$ and $\Delta$ such that the subfield-subcodes $S_\Delta^{P,J}$ over the field $\mathbb{F}_{p^h}$ are optimal $(r,\delta)$-LRCs:
\begin{enumerate}
    \item $P_j=U_{n_j}$ for some $n_j$ such that $n_j\mid q-1$ whenever $j\in S_1$ and when $j\in S_2$ $P_j=U_{n_j-1}\cup\{0\}$ for some $n_j$ such that $n_j-1\mid q-1$; $J=S_1\cup\{j_0\}$ and $\Delta=\Delta_1$, in which case
    $$(r,\delta)=(2z+1,p^h-2z+1).$$

    \item $S_1=\emptyset$, for all $j\in S_2$ $P_j=U_{n_j-1}\cup\{0\}$ for some $n_j$ such that $n_j-1\mid q-1$ and, if $p$ is odd, either $\gcd\left(\prod_{j\in \{1,\dots,m\}\backslash\{j_0\}}n_j,p^h\right)\neq 1$ or $\gcd\left(\prod_{j\in \{1,\dots,m\}\backslash\{j_0\}}n_j,p^h+1\right)\neq 1$; $J=\{j_0\}$ and $\Delta=\Delta_2$, in which case 
    $$(r,\delta)=(2z+1,p^h-2z+1).$$

    \item $P_j=U_{n_j}$ for some $n_j$ such that $n_j\mid q-1$ when $j\in S_1$ and when $j\in S_2$ $P_j=U_{n_j-1}\cup\{0\}$ for some $n_j$ such that $n_j-1\mid q-1$; $J=S_1\cup\{j_0\}$ and $\Delta=\Delta^*_1$, in which case
    $$(r,\delta)=(2u+2,p^h-2u).$$

    \item $S_1=\emptyset$ and for all $j\in S_2$ $P_j=U_{n_j-1}\cup\{0\}$ for some $n_j$ such that $n_j-1\mid q-1$; $J=\{j_0\}$ and $\Delta=\Delta^*_2$, in which case
    $$(r,\delta)=(2u+2,p^h-2u).$$
\end{enumerate}
\end{prop}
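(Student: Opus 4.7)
The plan is to run, in the multivariate setting, the same four-step argument used for Proposition \ref{bivq+1}: (i) verify that each candidate $\Delta$ is closed with respect to $p^h$, so that $\dim(S_\Delta^{P,J})=\dim(C_\Delta^{P,J})=\#\Delta$ by \cite[Theorem 2.3]{GHM2020}; (ii) apply Proposition \ref{method} with the distinguished variable $X_{j_0}$, combined with Lemma \ref{lemmaph+1} and Proposition \ref{commute}, to read off the locality $(r,\delta)$; (iii) bound the minimum distance of $S_\Delta^{P,J}$ from below by $d_0(C_{\Delta'}^{P,J})$ for a suitable decreasing MCC $C_{\Delta'}^{P,J}$ furnished by Propositions \ref{hyperrect} or \ref{hyperrectelim}; and (iv) match dimensions and localities with $C_{\Delta'}^{P,J}$ so that its optimality transfers to $S_\Delta^{P,J}$.

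Step (i) is a direct product/complement check. In the $j_0$-th slot the sets $\Omega_z$ and $\Omega_u^*$ are closed with respect to $p^h$ by construction; for $j\in S_1$ the whole set $\{0,1,\dots,n_j-1\}$ is trivially closed; and for $j\in S_2$ the identity $p^h(n_j-1)=n_j-1$ in $\{0\}\cup\mathbb{Z}/(n_j-1)\mathbb{Z}$ shown in the proof of Proposition \ref{bivq+1} implies that $\{0,\dots,n_j-2\}$ and $\{n_j-1\}$ are both closed. Since cartesian products and unions of closed sets are closed, every $\Delta\in\{\Delta_1,\Delta_2,\Delta^*_1,\Delta^*_2\}$ is closed. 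For step (ii), Lemma \ref{lemmaph+1} together with Proposition \ref{commute} shows that projecting $S_\Delta^{P,J}$ onto any ``line'' of $P$ parallel to the $X_{j_0}$ axis yields the MDS subfield-subcode $\ev_{P_{j_0}}(V_1)\cap\mathbb{F}_{p^h}^{p^h+1}$ (respectively $\ev_{P_{j_0}}(V_2)\cap\mathbb{F}_{p^h}^{p^h+1}$), and hence Proposition \ref{method} delivers the announced locality $(2z+1,p^h-2z+1)$ or $(2u+2,p^h-2u)$.

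The main obstacle is step (iii): bounding the minimum distance. Fix a nonzero codeword $\mathbf{c}=\ev_P(f)$ with $f\in V_\Delta$ and use the lexicographic order on $E$ in which $X_{j_0}$ is the lowest variable. When $\Delta\in\{\Delta_1,\Delta_1^*\}$ I would slice $\mathbf{c}$ by fixing the coordinates $j\neq j_0$: for each $(\alpha_j)_{j\neq j_0}\in\prod_{j\neq j_0}P_j$, the restriction of $f$ is a univariate polynomial in $X_{j_0}$ whose evaluation belongs to the MDS code of Lemma \ref{lemmaph+1}, so the slice contributes either $0$ or at least $p^h+1-2z$ (respectively $p^h-2u$) nonzero entries; combining this with the $S_1,S_2$ factors produces the bound. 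When $\Delta\in\{\Delta_2,\Delta_2^*\}$ I would split according to whether the leading monomial of $f$ lies in the ``bulk'' part $\Delta\cap(\{0,\dots,n_j-1\}^{j\neq j_0}\times\Omega)\setminus\{\text{top face}\}$ or on the top face $e_j=n_j-1$ for $j\neq j_0$: in the first case the slice argument again applies, while in the second case, mimicking parts c)--d) of the proof of Proposition \ref{bivq+1}, I translate $\Delta$ by the vector $\mathbf{v}=t\,\boldsymbol{\epsilon}_{j_0}$ (respectively $\mathbf{v}=(\tfrac{p^h}{2}+v+1)\boldsymbol{\epsilon}_{j_0}$), use that $0\notin P_{j_0}$ to conclude $\w(\mathbf{c})=\w(\ev_P(X_{j_0}^{v_{j_0}}f))$, and apply the footprint bound in Proposition \ref{footprint} to the new leading monomial to obtain the claimed lower bound $p^h+1-2t$ (respectively $p^h-2v$).

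For step (iv), the associated decreasing set is, respectively, $\Delta^1_{n_1-1,\dots,n_{j_0-1}-1,2z,n_{j_0+1}-1,\dots,n_m-1}$, $\Delta^2_{2z,2t}$, $\Delta^1_{n_1-1,\dots,n_{j_0-1}-1,2u+1,n_{j_0+1}-1,\dots,n_m-1}$, and $\Delta^2_{2u+1,2v+1}$ (in the notation of Propositions \ref{hyperrect} and \ref{hyperrectelim}); all of them satisfy $\#\Delta=\#\Delta'$ and carry the same $(r,\delta)$-locality as $S_\Delta^{P,J}$. Since these four decreasing MCCs are optimal by Propositions \ref{hyperrect} and \ref{hyperrectelim}, and since $\dis(S_\Delta^{P,J})\geq d_0(C_{\Delta'}^{P,J})$ from step (iii), the inequality in Proposition \ref{pformula} becomes an equality for $S_\Delta^{P,J}$, concluding the proof. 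The gcd-type hypothesis in item (2) is needed not for optimality itself but to guarantee that the resulting parameters are not already covered by the constructions surveyed in Remark \ref{remsub}.
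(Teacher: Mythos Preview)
Your proposal is correct and follows the same route the paper intends: the authors explicitly state that the proof of Proposition \ref{multq+1} ``runs parallel'' to that of Proposition \ref{bivq+1} and omit it, and your four steps (closedness, locality via Lemma \ref{lemmaph+1} and Proposition \ref{commute}, minimum-distance bound by case analysis and translation, comparison with the optimal decreasing MCCs of Propositions \ref{hyperrect}--\ref{hyperrectelim}) are exactly that parallel. One small indexing slip: in the parameterization of Proposition \ref{hyperrectelim} the ``corner line'' of $\Delta^2_{i_{j_0},s}$ retains the exponents $\{0,\dots,s-1\}$ (that is, $s$ points), so the decreasing counterparts of $\Delta_2$ and $\Delta_2^*$ are $\Delta^2_{2z,\,2t+1}$ and $\Delta^2_{2u+1,\,2v+2}$ rather than $\Delta^2_{2z,2t}$ and $\Delta^2_{2u+1,2v+1}$; this is purely a shift in convention relative to the bivariate $\Delta^2_{i,s}$ and does not affect your argument.
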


For the second construction, we use the notation as in the paragraph before Lemma \ref{lemmaph+2} but changing $i$ by $j_0$. Define
$$\Delta_1:=\{0,1,\dots,n_1-1\}\times\cdots\times \{0,1,\dots,n_{j_0-1}-1\} \times \Omega \times \{0,1,\dots,n_{j_0+1}-1\} \times \cdots\times \{0,1,\dots,n_m-1\},$$
    $$\Delta_2:=\Delta_1\big\backslash \left\{(n_1-1,\dots,n_{j_0-1}-1,e_{j_0},n_{j_0+1}-1,\dots,n_m-1) \mid e_{j_0} \in\left\{1,2^h\right\}\right\},$$
    $$\Delta^\perp_1:=\{0,1,\dots,n_1-1\}\times\cdots\times \{0,1,\dots,n_{j_0-1}-1\} \times \Omega^\perp\times \{0,1,\dots,n_{j_0+1}-1\} \times \cdots\times \{0,1,\dots,n_m-1\}$$
and
    $$\Delta^\perp_2:=\Delta^\perp_1\big\backslash \left\{(n_1-1,\dots,n_{j_0-1}-1,e_{j_0},n_{j_0+1}-1,\dots,n_m-1) \mid e_{j_0} \in \left. \begin{cases}
    \{0\}, & \text{when } z=2,\\
            \{0,2,2^h-1\}, & \text{otherwise.}
    \end{cases}\right\}
    \right\}.
    $$

\begin{prop}\label{multq+2}
Keep the notation as above where $\;\mathbb{F}_{2^h}$ is regarded as a subfield of $\;\mathbb{F}_{q=2^{2h}}$. Fixed $j_0$ and $P_{j_0}=U_{2^h+1}\cup\{0\}$, the set of $\; 2^h+1$-th roots of unity together with $0$, the following statements determine sets $P=P_1\times \cdots \times P_m$, $J$ and $\Delta$ such that the subfield-subcodes $S_\Delta^{P,J}$ over the field $\mathbb{F}_{2^h}$ are optimal $(r,\delta)$-LRCs:
\begin{enumerate}
    \item $P_j=U_{n_j}$ for some $n_j$ such that $n_j\mid q-1$ whenever $j\in S_1$ and when $j\in S_2$ $P_j=U_{n_j-1}\cup\{0\}$ for some $n_j$ such that $n_j-1\mid q-1$; $J=S_1$ and $\Delta=\Delta_1$, in which case 
    $$(r,\delta)=(3,2^h).$$

    \item $P_j=U_{n_j}$ for some $n_j$ such that $n_j\mid q-1$ whenever $j\in S_1$ and when $j\in S_2$ $P_j=U_{n_j-1}\cup\{0\}$ for some $n_j$ such that $n_j-1\mid q-1$; $J=S_1$ and $\Delta=\Delta_1^\perp$, in which case 
    $$(r,\delta)=(2^h-1,4).$$

    \item $S_1=\emptyset$ and for all $j\in S_2$ $P_j=U_{n_j-1}\cup\{0\}$ for some $n_j$ such that $n_j-1\mid q-1$; $J=\emptyset$ and $\Delta=\Delta_2$, in which case
    $$(r,\delta)=(3,2^h).$$

    \item $S_1=\emptyset$ and for all $j\in S_2$ $P_j=U_{n_j-1}\cup\{0\}$ for some $n_j$ such that $n_j-1\mid q-1$; $J=\emptyset$ and $\Delta=\Delta^\perp_2$, in which case
    $$(r,\delta)=(2^h-1,4).$$
\end{enumerate}
\end{prop}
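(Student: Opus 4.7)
The plan is to mimic the proof of Proposition \ref{bivq+2}, extending each step to the multivariate setting. The four items will be handled in parallel: $\Delta_1$ and $\Delta_1^\perp$ are direct products of a nontrivial one-variable closed set at coordinate $j_0$ with the full ranges at the remaining coordinates, while $\Delta_2$ and $\Delta_2^\perp$ are obtained from the former by removing a controlled closed slice at the maximal-coordinate hyperplane. First I would verify that each $\Delta$ is closed with respect to $2^h$: at coordinate $j_0$, $\Omega$, $\Omega^\perp$ and $\Omega^*$ are closed by construction; at every other coordinate the full range $\{0,1,\ldots,n_j-1\}$ is closed, and when $0\in P_j$ the identity $2^h(n_j-1)=n_j-1$, valid under the identifications of Subsection \ref{factssubfield}, shows that $\{n_j-1\}$ is a minimal closed set and hence so is $\{0,1,\ldots,n_j-2\}$. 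For $\Delta_2$ one observes that $\{1,2^h\}=\Lambda_1^{j_0}$ is minimal closed in the $j_0$-variable, and for $\Delta_2^\perp$ the exponent set $\{0,2,2^h-1\}=\Lambda_0^{j_0}\cup\Lambda_2^{j_0}$ is a union of two minimal closed sets, so the deleted strip is closed. By \cite[Theorem 2.3]{GHM2020} this yields $\dim(S_\Delta^{P,J})=\dim(C_\Delta^{P,J})=\#\Delta$.

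Next I would establish the claimed locality by interpolating in $X_{j_0}$. For each fixing of the remaining coordinates, the corresponding recovery set $\overline R$ has cardinality $n_{j_0}=2^h+2$ and $\pi_{\overline R}(C_\Delta^{P,J})=\ev_{P_{j_0}}(V_k)$, with $V_k$ equal to the space $V_1$ of Lemma \ref{lemmaph+2} in the $\Omega$-case and $V_2$ in the $\Omega^\perp$-case. Proposition \ref{commute}, whose hypothesis holds because $\Delta$ is closed, then gives $\pi_{\overline R}(S_\Delta^{P,J})=\ev_{P_{j_0}}(V_k)\cap \mathbb{F}_{2^h}^{2^h+2}$, an MDS code with distance $2^h$ or $4$ by Lemma \ref{lemmaph+2}. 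Applying Proposition \ref{method} to $S_\Delta^{P,J}$ then yields the locality $(3,2^h)$ or $(2^h-1,4)$, as required.

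For the minimum distance I would introduce, in each case, an auxiliary decreasing $\Delta'\subseteq E$ of the same cardinality as $\Delta$ such that $\dis(S_\Delta^{P,J})\geq d_0(C_{\Delta'}^{P,J})$. For $\Delta_1$ and $\Delta_1^\perp$ one takes $\Delta'=\Delta^1_{i_1,\ldots,i_m}$ of Proposition \ref{hyperrect} with $i_{j_0}=2$ or $i_{j_0}=2^h-2$ respectively, and the ``no-root factorization'' argument of part a) in the proof of Proposition \ref{bivq+2} generalizes by iterating over hyperplanes parallel to the $j_0$-axis. For $\Delta_2$ and $\Delta_2^\perp$ I would split on the position of the leading monomial $X^{\mathbf{e}}$ of a representative $f\in V_\Delta$: if $e_j\neq n_j-1$ for some $j\neq j_0$ the bound obtained from the hyperplane argument suffices, while if $e_j=n_j-1$ for all $j\neq j_0$ I would apply a pseudoisometric shift along the $j_0$-axis (in two stages for $\Delta_2^\perp$, mimicking Figure \ref{fig:proof2}) so that Proposition \ref{footprint} applies to the shifted polynomial. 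The resulting $\Delta'$ lies in the families of Propositions \ref{hyperrect} or \ref{hyperrectelim}, hence Corollary \ref{gcolpar} makes $C_{\Delta'}^{P,J}$ $d_0$-optimal with the announced $(r,\delta)$; since $\#\Delta=\#\Delta'$ and the locality matches, optimality of $S_\Delta^{P,J}$ follows. The main obstacle, as in the bivariate case, is the choice of shift for $\Delta_2^\perp$: the relation $2^h+2\equiv 1$ under the identifications of Subsection \ref{factssubfield} forces the translating vector to be broken into two successive pseudoisometries, one of which uses $0\in P_{j_0}$, and keeping track of the footprint after the shift is what pins down the $z\in\{2,3\}$ case split built into the definition of $\Delta_2^\perp$.
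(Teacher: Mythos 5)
Your proposal is correct and follows exactly the route the paper intends: the paper omits this proof, stating only that it ``runs parallel'' to the bivariate case, and your plan reproduces that parallel faithfully --- closedness of $\Delta$ via unions of minimal closed sets (including $\{n_j-1\}$ when $0\in P_j$), locality from Lemma \ref{lemmaph+2} combined with Propositions \ref{commute} and \ref{method}, and the distance bound via compactified decreasing sets $\Delta'$ from Propositions \ref{hyperrect} and \ref{hyperrectelim} together with the leading-monomial split and the two-stage pseudoisometric shift of Figure \ref{fig:proof2}. The bookkeeping also checks out (e.g.\ $\#\Omega^\perp=2^h-1$ matching $i_{j_0}=2^h-2$, and the removed strip of cardinality $2z-3$ matching $s$ in Proposition \ref{hyperrectelim} so that $d_0=2z+1+\nobreak\text{(shift)}$ agrees with Theorem \ref{thmparmultq+2}).
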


\begin{remark}\label{remsub2}
As in the case of bivariate codes (see Remark \ref{remsub}), Propositions \ref{multq+1} and \ref{multq+2} impose conditions in order to obtain new families of optimal $(r,\delta)$-LRCs.
\end{remark}

Finally, we state our main results for the multivariate case. They are Theorem \ref{thmparmultq+1} (respectively, \ref{thmparmultq+2}) which give parameters and $(r,\delta)$-localities of the optimal $(r,\delta)$-LRCs we have obtained in Proposition \ref{multq+1} (respectively, \ref{multq+2}).

\begin{thm}\label{thmparmultq+1}
Let $\;\mathbb{F}_q$ be a finite field with $q=p^l$, $p$ being a prime number and $l$ a positive integer. Consider another positive integer $h$ such that $h$ divides $l$, $p^h\geq 4$ if $p=2$ ($p^h\geq 5$ otherwise) and assume $p^h+1\mid q-1$. Consider also nonnegative integers $z$ and $t$ satisfying $0\leq t<z \leq \floor*{\frac{p^h}{2}}-1$, $2t\geq \max\{0,4z-p^h-1\}$ and subsets $S_1$, $S_2\subseteq\{1,\dots,m-1\}$ such that $S_1\cup S_2=\{1,\dots,m-1\}$ and $S_1\cap S_2=\emptyset$. Regard $\; \mathbb{F}_{p^h}$ as a subfield of $\;\mathbb{F}_q$.

Then, there exists an optimal $(r,\delta)$-LRC over $\;\mathbb{F}_{p^h}$ with the following parameters depending on the integer variables $n_1,\dots,n_{m-1}$ and $a$: 
$$[n,k,d]_{p^h}=\left[(p^h+1)n_1\cdots n_{m-1},(2z+1)n_1\cdots n_{m-1}-a,p^h+1-2z+a\right]_{p^h}$$
and
$$(r,\delta)=(2z+1,p^h-2z+1),$$ 
whenever some of the following conditions hold:
\begin{enumerate}

    \item $n_j\mid q-1$ for all $j\in S_1$, $n_j-1\mid q-1$ for all $j\in S_2$ and $a=0$.
    
    \item $S_1=\emptyset$, $n_j-1\mid q-1$ for all $j\in S_2$, $a=2(z-t)$ and, if $p$ is odd, either $\gcd\left(n_1 \cdots n_{m-1},p^h\right)\neq 1$ or $\gcd\left(n_1 \cdots n_{m-1},p^h+1\right)\neq 1$.
\end{enumerate}

\medskip
Assume now that $p=2$ and consider a nonnegative integer $u$ and, if $u\geq 1$, a nonnegative integer $v$, satisfying $0\leq u\leq \frac{p^h}{2}-2$, $0\leq v<u$ and $2v+1\geq\max\{0,4u+1-p^h\}$.

Then, there exists an optimal $(r,\delta)$-LRC over $\;\mathbb{F}_{p^h}$ with parameters
$$[n,k,d]_{p^h}=\left[(p^h+1)n_1\cdots n_{m-1},(2u+2)n_1\cdots n_{m-1}-a,p^h-2u+a\right]_{p^h}$$
and
$$(r,\delta)=(2u+2,p^h-2u),$$
whenever some of the following conditions hold:
\begin{enumerate}

    \item $n_j\mid q-1$ for all $j\in S_1$, $n_j-1\mid q-1$ for all $j\in S_2$ and $a=0$.
    
    \item $S_1=\emptyset$, $n_j-1\mid q-1$ for all $j\in S_2$ and $a=2(u-v)$.
\end{enumerate}
\end{thm}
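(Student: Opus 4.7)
The plan is to derive Theorem \ref{thmparmultq+1} as a direct corollary of Proposition \ref{multq+1} by explicitly computing the parameters $(n,k,d)$ of each family of subfield-subcodes produced there. First I would relabel the supporting sets so that $P_{j_0}=U_{p^h+1}$ is the distinguished coordinate and $n_1,\dots,n_{m-1}$ denote the cardinalities of the remaining sets $P_j$, $j\neq j_0$. Proposition \ref{lendim} then gives the length $n=\prod_{j=1}^m n_j=(p^h+1)\,n_1\cdots n_{m-1}$, and Proposition \ref{multq+1} already supplies the locality $(r,\delta)$ and optimality in each of the four configurations.

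For the dimension I would use that every set $\Delta$ appearing in Proposition \ref{multq+1} is closed with respect to $p^h$ (this fact is established inside its proof), so by \cite[Theorem 2.3]{GHM2020} we have $\dim S_\Delta^{P,J}=\dim C_\Delta^{P,J}=\#\Delta$. A straightforward counting then yields
\begin{align*}
\#\Delta_1 &= (2z+1)\,n_1\cdots n_{m-1}, & \#\Delta_2 &= (2z+1)\,n_1\cdots n_{m-1}-2(z-t),\\
\#\Delta_1^{*} &= (2u+2)\,n_1\cdots n_{m-1}, & \#\Delta_2^{*} &= (2u+2)\,n_1\cdots n_{m-1}-2(u-v),
\end{align*}
since $\#\Omega_z=2z+1$, $\#O_{z,t}=2(z-t)$, $\#\Omega_u^{*}=2u+2$ and $\#O_{u,v}=2(u-v)$. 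This matches the claimed dimensions with $a=0$ in cases (1) and $a=2(z-t)$ (respectively $2(u-v)$) in cases (2).

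The minimum distance is then forced by optimality: equality in (\ref{formula}) gives
\[
d=n+1-k-\left(\ceil*{\tfrac{k}{r}}-1\right)(\delta-1).
\]
Plugging in the values of $n$, $k$, $r$ and $\delta-1$ and observing that in every case $\ceil*{k/r}=n_1\cdots n_{m-1}$, the coefficient of $n_1\cdots n_{m-1}$ cancels and what remains is $d=p^h+1-2z+a$ (odd characteristic setting) or $d=p^h-2u+a$ (characteristic two setting), which is precisely the stated distance.

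The only point requiring verification is that $\ceil*{k/r}=n_1\cdots n_{m-1}$, equivalently $0\le a<r$. But the standing hypotheses $0\le t<z\le\floor*{(p^h-1)/2}$ and $0\le v<u\le p^h/2-2$ immediately give $2(z-t)<2z+1=r$ and $2(u-v)<2u+2=r$, so the ceiling is correct. This is the main (and essentially only) obstacle, and it is resolved by elementary arithmetic; the proof therefore runs exactly in parallel to the bivariate Theorems \ref{thmparbivq+1} and \ref{thmparbivq+2}, with no further ingredients beyond Proposition \ref{multq+1}.
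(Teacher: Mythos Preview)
Your proposal is correct and follows the same route as the paper: the authors state explicitly that Theorem \ref{thmparmultq+1} is obtained directly from Proposition \ref{multq+1}, with the proof omitted because it runs parallel to the bivariate case. Your computations of $n$, $k$, $(r,\delta)$ and the deduction of $d$ from the Singleton-like equality (together with the check $0\le a<r$ so that $\lceil k/r\rceil=n_1\cdots n_{m-1}$) are exactly the details one would fill in, and they are all accurate.
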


\begin{thm}\label{thmparmultq+2}
Let $\;\mathbb{F}_q$ be a finite field with $q=2^l$, $l\geq 4$ being an even positive integer and $h=\frac{l}{2}$. Consider also a positive integer $z$ satisfying $2\leq z\leq 3$, $2^h-2z+1\geq \max\left\{0,2^h-6\right\}$ and subsets $S_1$, $S_2\subseteq\{1,\dots,m-1\}$ such that $S_1\cup S_2=\{1,\dots,m-1\}$ and $S_1\cap S_2=\emptyset$. Regard $\; \mathbb{F}_{2^h}$ as a subfield of $\;\mathbb{F}_q$.

Then, there exists an optimal $(r,\delta)$-LRC over $\;\mathbb{F}_{2^h}$ with the following parameters depending on the integer variables $n_1,\dots,n_{m-1}$, $a$, $b$ and $c$:
$$[n,k,d]_{2^h}=\left[(2^h+2)n_1\cdots n_{m-1},an_1\cdots n_{m-1}-b,c+b\right]_{2^h}$$
and
$$(r,\delta)=(a,c),$$
whenever some of the following conditions hold:
\begin{enumerate}

    \item $n_j\mid q-1$ for all $j\in S_1$, $n_j-1\mid q-1$ for all $j\in S_2$ and $(a,b,c)=(3,0,2^h)$.

    \item $n_j\mid q-1$ for all $j\in S_1$, $n_j-1\mid q-1$ for all $j\in S_2$ and $(a,b,c)=(2^h-1,0,4)$.

    \item $S_1=\emptyset$, $n_j-1\mid q-1$ for all $j\in S_2$ and $(a,b,c)=(3,2,2^h)$.

    \item $S_1=\emptyset$, $n_j-1\mid q-1$ for all $j\in S_2$ and $(a,b,c)=(2^h-1,2z-3,4)$.
 
\end{enumerate}
\end{thm}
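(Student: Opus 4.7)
The plan is to derive Theorem \ref{thmparmultq+2} as a direct numerical consequence of Proposition \ref{multq+2}. After relabelling variables, I may assume $j_0=m$, so that the indices other than $j_0$ are exactly $\{1,\dots,m-1\}$ and the partition $S_1\cup S_2$ in the theorem matches the one used in the proposition. Items (1)--(4) of the theorem then correspond bijectively to the four items of Proposition \ref{multq+2} via the respective choices $\Delta=\Delta_1,\Delta_1^\perp,\Delta_2,\Delta_2^\perp$. In each of these cases the proposition already guarantees that $S_\Delta^{P,J}$ is an optimal $(r,\delta)$-LRC with the stated locality $(a,c)$, so it only remains to compute the length, the dimension and the minimum distance.

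The length is $n=\#P=(2^h+2)\,n_1\cdots n_{m-1}$ since $\#P_{j_0}=2^h+2$ and $\#P_j=n_j$ for $j\neq j_0$. All four sets $\Delta$ are closed (the verification in the proof of Proposition \ref{bivq+2} carries over verbatim), so by \cite[Theorem 2.3]{GHM2020} their subfield-subcodes satisfy $k=\dim S_\Delta^{P,J}=\#\Delta$. Setting $N:=n_1\cdots n_{m-1}$, a direct count gives $k=aN-b$, where $b$ is the number of top-corner points excised from $\Delta_1$ or $\Delta_1^\perp$: namely $b=0$ for $\Delta_1$ and $\Delta_1^\perp$, $b=2$ for $\Delta_2$ (the points with $e_{j_0}\in\{1,2^h\}$), and $b=2z-3$ for $\Delta_2^\perp$ (one point when $z=2$, three points when $z=3$). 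These match the values of $b$ prescribed in each item of the theorem.

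It then remains to read off $d$. Because $S_\Delta^{P,J}$ is optimal, the Singleton-like inequality of Proposition \ref{pformula} becomes an equality and gives
\[
d=n+1-k-\left(\ceil*{\tfrac{k}{r}}-1\right)(\delta-1).
\]
A routine substitution in each case collapses to $d=c+b$. The only point that deserves a short check is that $\ceil*{k/r}=N$ in every case: for items (1) and (2) this is immediate from $b=0$, while in items (3) and (4) it follows from $0<b<r$, using $2^h\geq 4$ together with the hypothesis $2^h-2z+1\geq\max\{0,2^h-6\}$ (which in particular forces $2^h\geq 8$ whenever $z=3$) to guarantee $b=2z-3<2^h-1=r$ in item (4). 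No substantive obstacle arises beyond this bookkeeping; the theorem reduces entirely to Proposition \ref{multq+2} plus the equality in the Singleton-like bound.
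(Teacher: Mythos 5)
Your proposal is correct and follows exactly the route the paper intends: the paper gives no separate proof of Theorem \ref{thmparmultq+2}, asserting only that it records the parameters of the codes from Proposition \ref{multq+2}, and your verification (length from $\#P$, dimension $\#\Delta=aN-b$ via closedness, and $d=c+b$ read off from equality in the Singleton-like bound, with the check that $\ceil*{k/r}=N$ using $2z-3<2^h-1$) is precisely the omitted bookkeeping, carried out correctly.
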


We finish this paper by giving, in Table \ref{table:parmult}, the parameters of some new optimal $(r,\delta)$-LRCs coming from subfield-subcodes deduced from Theorems \ref{thmparmultq+1} and \ref{thmparmultq+2}.

\begin{table}[h!]
\centering
\begin{tabular}{l c c c c c c c} 
 \hline
 Item in Theorem & $p^h$ & $q$ & $n$ & $k$ & $d$ & $r$ & $\delta$ \\ [0.5ex] 
 \hline
 \ref{thmparmultq+1} (1) (for $(m,z,t)=(3,1,0)$) & 5 & 625 & $480=6\cdot5\cdot16$ & 240 & 4 & 3 & 4 \\
 \ref{thmparmultq+1} (2) (for $(m,z,t)=(3,3,1)$) & 9 & 81 & $800=10\cdot8\cdot10$ & 556 & 8 & 7 & 4 \\
 \ref{thmparmultq+1} (2) (for $(m,z,t)=(4,1,0)$) & 4 & 16 & $320=5\cdot4\cdot4\cdot4$ & 190 & 5 & 3 & 3 \\
 \ref{thmparmultq+1} (2) (for $(m,u,v)=(3,2,0)$) & 8 & 64 & $720=9\cdot8\cdot10$ & 476 & 8 & 6 & 4 \\
 \ref{thmparmultq+2} (1) (for $m=4$) & 4 & 256 & $900=6\cdot5\cdot5\cdot6$ & 450 & 4 & 3 & 4 \\
 \ref{thmparmultq+2} (4) (for $(m,z)=(3,2)$) & 4 & 16 & $576=6\cdot6\cdot16$ & 287 & 5 & 3 & 4 \\ [1ex]
\end{tabular}
\caption{Optimal $(r,\delta)$-subfield-subcodes over $\mathbb{F}_{p^h}$}
\label{table:parmult}
\end{table}

\section*{Acknowledgements}

We thank H. H. López for indicating us the existence of monomial-Cartesian codes we had gone unnoticed. We named them \textit{zero-dimensional affine variety codes} in a previous version of this paper.

\clearpage
\bibliographystyle{plain}
\bibliography{biblio}
\hfill\break

\end{document}